\newcommand{\eat}[1]{}
\newtheorem{thm}{Theorem}
\newtheorem{cor}[thm]{Corollary}
\newtheorem{lemma}{Lemma}
\newtheorem{definition}{Definition}
\newcommand{\vsa}{\vspace*{-0.2cm}}
\newcommand{\vsb}{\vspace*{-0.2cm}}
\newcommand{\vsc}{\vspace*{-0.4cm}}
\begin{document}
%



\title{Towards Scalable Network Delay Minimization}

\author{\IEEEauthorblockN{Sourav Medya}
\IEEEauthorblockA{University of California, Santa Barbara\\
medya@cs.ucsb.edu}
\and
\IEEEauthorblockN{Petko Bogdanov}
\IEEEauthorblockA{University at Albany - SUNY\\
pbogdanov@albany.edu}
\and
\IEEEauthorblockN{Ambuj Singh}
\IEEEauthorblockA{University of California, Santa Barbara\\
ambuj@cs.ucsb.edu}
}



%
%
%
%

\maketitle
\begin{abstract}

Reduction of end-to-end network delays is an optimization task with applications in multiple domains. Low delays enable improved information flow in social networks, quick spread of ideas in collaboration networks, low travel times for vehicles on road networks and increased rate of packets in the case of communication networks. Delay reduction can be achieved by both improving the propagation capabilities of individual nodes and adding additional edges in the network. One of the main challenges in such design problems is that the effects of local changes are not independent, and as a consequence, there is a combinatorial search-space of possible improvements. Thus, minimizing the cumulative propagation delay requires novel scalable and data-driven approaches.

In this paper, we consider the problem of network delay minimization via node upgrades. Although the problem is NP-hard, we show that probabilistic approximation for a restricted version can be obtained. We design scalable and high-quality techniques for the general setting based on sampling and targeted to different models of delay distribution. 
Our methods scale almost linearly with the graph size and consistently outperform competitors in quality.\\


\end{abstract}
\section{Introduction}

Given a communication network, how can one minimize the end-to-end communication delay by upgrading networking devices? How to minimize the travel time on an airline network by increasing the personnel and infrastructure at key airports? How to recruit users who can quickly re-post updates enabling fast global propagation of information of interest in a social network? There is a common \emph{network design problem} underlying all the above application scenarios: for a large network with associated node delays, identify a set of nodes (within budget) whose delay reduction will minimize the path delays between any pair of nodes. 

Network design problems, including planning, implementing and augmenting networks for desirable properties, have a wide range of applications in communication, transportation and information networks as well as VLSI design~\cite{gupta2011approximation,o1994hub,lin2015,zhu2004power, Khalil2014, saha2015approximation}. 
Challenges in this area are posed by the rapidly growing sizes of real-world networks, leading to the need for scalable, data-driven approaches.  
In particular, network design problems involve local changes to an existing large network such as adding/modifying links or nodes as a means to improve its global properties~\cite{chaoji2012recommendations, Tong2012GML, dilkina2011, meyerson2009, lin2015,Khalil2014}. In this paper we address a problem from the above category, namely, minimizing the overall end-to-end network delay. 

The end-to-end delay in a network affects propagation speeds and is a function of the network link connectivity and the throughput capabilities of individual nodes. The majority of previous work focuses on delay minimization by augmenting network edges~\cite{paik1995, krumke1998,meyerson2009,lin2015,saha2015approximation}. Less attention has been devoted to the complementary, but algorithmically non-equivalent setting in which the propagation capabilities of individual nodes are ``upgraded'' under budget~\cite{dilkina2011}. In this paper we address the node-version of the delay minimization problem. A toy example instance and possible solutions for the problem are presented in Fig.~\ref{fig:example}. All nodes start with a delay of $1$. The objective is to select a set of two nodes whose delay reduction will minimize the overall network delay (sum of pairwise delays). For instance, selecting nodes $c,d$ (Fig.~\ref{fig:T2}) is a better solution than selecting nodes $a,f$ (Fig.~\ref{fig:T1}).


\begin{figure}[t]
    \centering
    \includegraphics[keepaspectratio, width=.43\textwidth]{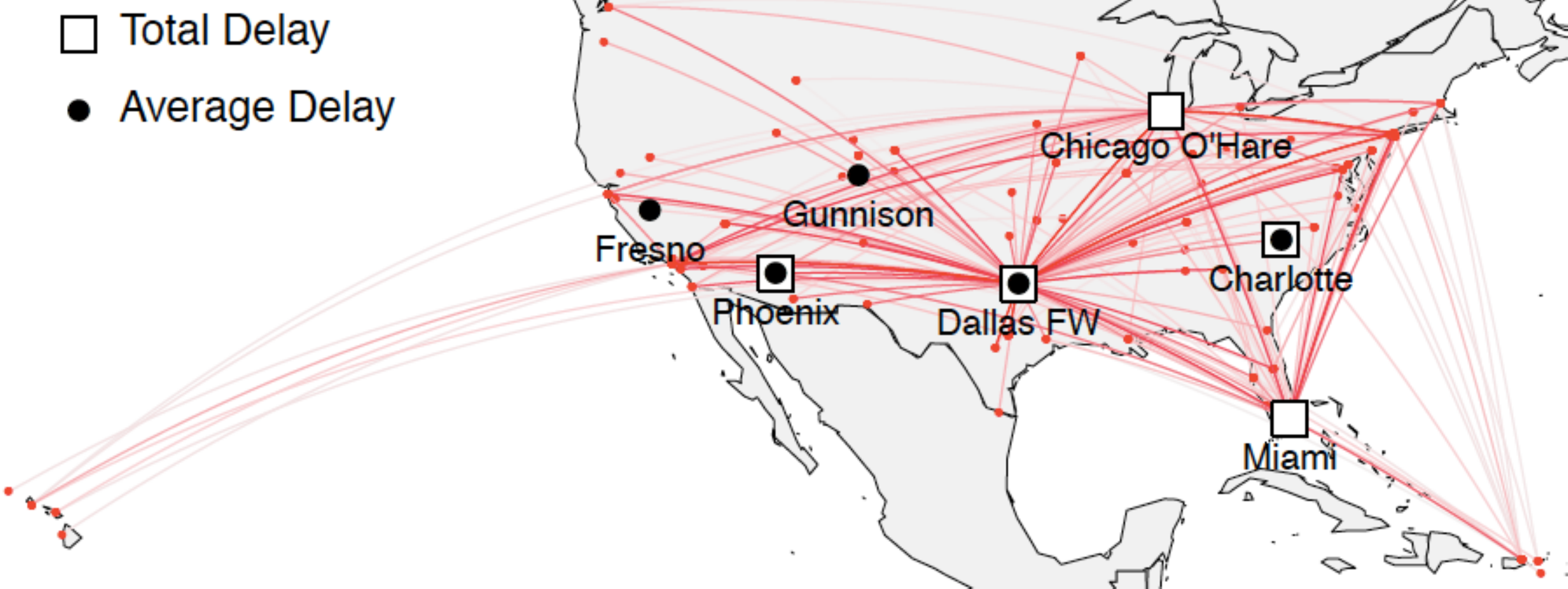}
    \caption{ \small Airports with maximum impact on the overall network delay in the \textit{American Airlines} network as discovered by our methods. If airline-caused delays are removed in these airports, the overall network delay decreases by 96\% and 55\% when the accumulated airline-caused flight delays  (\textit{Total}) and \textit{Average} (over number of flights) airport flight delays are considered respectively (data from US Dept. of Transportation).}\label{fig:AA}
     \vspace{-6mm}
\end{figure}
Depending on the domain, low end-to-end delay enables improved information flow in social and collaboration networks~\cite{Cha2009}, reduced travel time for airline and road networks~\cite{AhmadBeygi2008} and increased throughput for communication networks~\cite{dilkina2011}. Consider, for example, the air transportation network of a major US carrier presented in Fig.~\ref{fig:AA}, where edges correspond to flights offered by the carrier between endpoint cities. Based on historical information on past flights one can associate airports with airline-caused delays. An important question for an airline is then how to minimize overall delays by improving the number of personnel and available infrastructure (e.g. luggage handling) in problematic airports that affect multiple routes. In Fig.~\ref{fig:AA} we show the airports with highest delay-reduction potential, determined based on both historical delays and their position in the network. When the cumulative historical delays are considered (\textit{Total}), hub airports like Chicago, Dallas and Miami constitute the best solution, while ``fringe'' airports make it to the list when the \textit{Average} delay is considered\footnote{Extended discussion of the findings in this data is available in the Experimental section and the Appendix.}.

Another important application comes from social networks where user behavior---activity and interest in a specific topic---determines the node delay for information propagation. In this domain, the objective is to speed up the global propagation of information by decreasing individual response time~\cite{liu2012time}. A social media strategist of an election campaign, for example, would be interested in recruiting social network users who can re-post relevant campaign updates immediately, enabling faster propagation of relevant campaign information. Both the position in the network and the current delay in propagating information should be taken into account in selecting recruits. While information and influence propagation are traditionally modelled as a diffusion process (i.e., using all possible paths)~\cite{kempe2003maximizing}, multiple recent approaches (including the current work) focus on the most probable (shortest) paths in order to allow scalable solutions~\cite{kimura2006tractable,chen2010scalable}.

Given a network with node delays, our goal is to identify a set of nodes whose delay reduction will minimize the sum of shortest path delays between pairs of nodes. We term this problem the \emph{Delay Minimization Problem (DMP)}, and demonstrate that it is NP-hard in a general network, even under the assumption that node delays are uniform. Intuitively, the challenge stems from the fact that the global effect of a single node upgrade is dependent on the remaining nodes in the solution. We provide approximation analysis for special cases and characterize the effectiveness of randomized approaches based on Vapnik-Chervonenkis theory~\cite{vapnik1971}. 

We propose a \emph{Greedy} approach which is optimal for restricted graph structures and has a good quality in practice for general graphs. However, Greedy does not scale to large networks due to its high computational complexity. We develop sampling-based alternatives that make similar selection of node upgrades with high probability using knowledge of only a small fraction of the network. The underlying delay model, uniform or arbitrary delays, plays an important role in the design and solution quality of our corresponding sampling schemes. We provide theoretical and experimental analysis of our algorithms for different network structures and node delay distributions and demonstrate their real-world utility.


Our contributions in this paper include: \\
\noindent $\bullet$ We consider the node delay minimization problem and show that it is NP-hard even for uniform delays. We show an approximation guarantee by 
random schemes using VC dimension theory and analyze the complexity for a restricted problem formulation.\\
\noindent $\bullet$ We propose high-quality sampling-based algorithms that scale almost linearly with the network size. In million-node networks we obtain high-quality solutions within $1$ hour, while non-trivial alternatives are infeasible.\\
\noindent $\bullet$ We show that the solutions produced by our algorithms in real-world datasets are consistently better than those of competitors (up to $10\%$ higher delay reduction). \\
%





\vsc
\section{Problem Definition}
\label{sec:problem_definition}

\begin{figure}[t]
\vspace{-2mm}
    \centering
    \vsa
    \subfloat[\{a,f\}, SPD=48]{
        \includegraphics[ keepaspectratio, width=.16\textwidth]{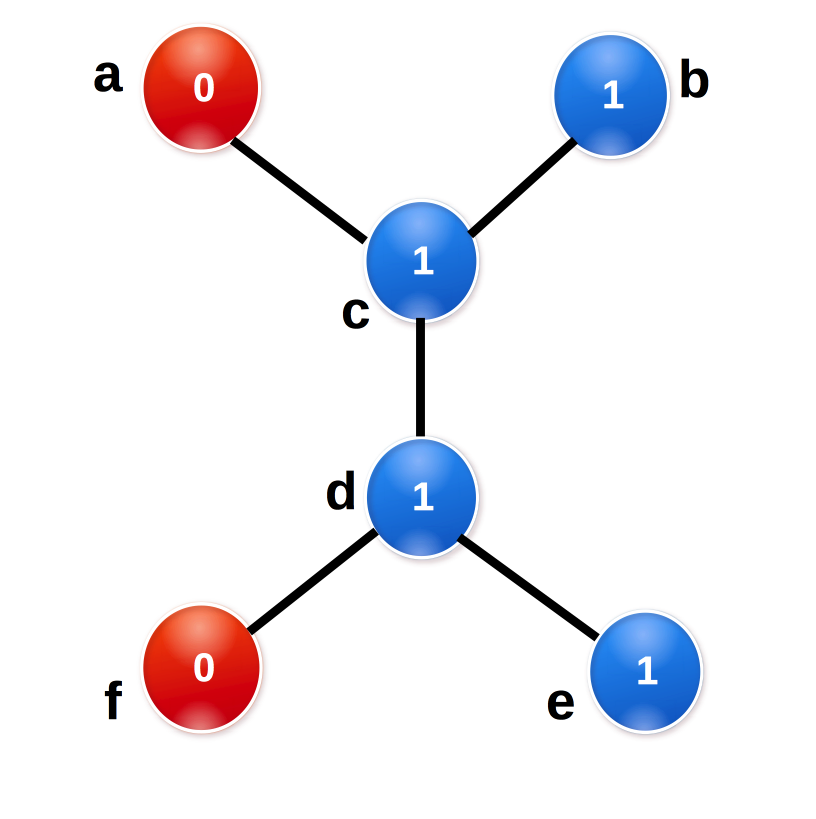}
       \label{fig:T1}
    }
    \subfloat[\{c,d\}, SPD=20]{
       \includegraphics[ keepaspectratio, width=.16\textwidth]{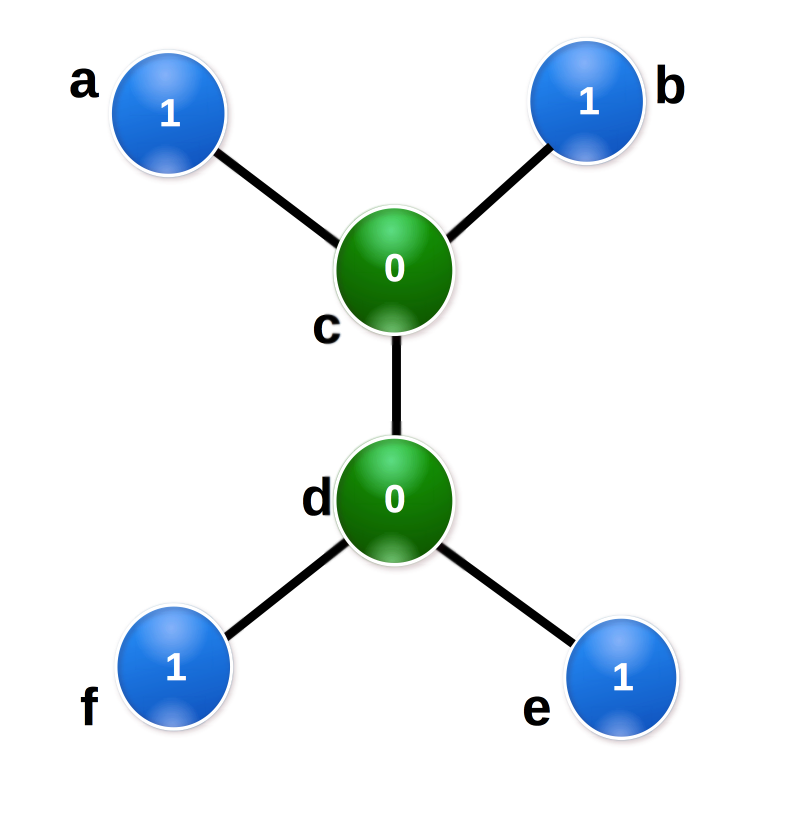}
       \label{fig:T2}
    }
       
     \caption{\small Illustrative example of reduction of all pair shortest path delays by reducing the delays of two nodes (budget of 2). Initially every vertex has a delay of $1$. (a) and (b) represent examples of a non-optimal and the optimal Target Set selection respectively.}\label{fig:example}
     \vspace{-4mm}
\end{figure}

A network is modeled as an undirected graph $G(V,E,L)$, where $V$ and $E$ are sets of vertices and edges respectively and $L$ is a function $L:V\rightarrow\mathbb{R}_{> 0}$ over $V$ that specifies the delay/latency $l(v)$ of individual nodes. 
The delay (or length) of a path is defined as the cumulative delay of the vertices along the path, excluding that of the destination. 
More formally, if $P_{s,t}=(v_{s},v_{1},v_{2},...,v_{r},v_{t})$ is a path from vertex $v_{s}$ to $v_{t}$, its length is defined as $l(P_{s,t}) = l(v_{s})+\Sigma_{i=1}^{r} l(v_{i})$. 
Delay at the destination node in a path is excluded since our targeted applications consider information/traffic flow and the destination node does not add any delays. 
The shortest path between vertices $s$ and $t$ is that of minimum length (delay) among all such paths and its length is denoted as $d(s,t)$. By convention, $d(s,s)=0$ for all $s\in V$. We define the \textit{all pair shortest path delays (SPD)} as the sum of shortest path lengths between all pairs of vertices, i.e., $SPD(G)=\Sigma_{s,t\in V}d(s,t)$. 


The DMP asks for a subset of vertices whose upgrade (delay reduction) minimizes the overall \textit{SPD}. In the process, the delay of a fixed (small) number of vertices $T\subset V$ is reduced to $0$\footnote{
Reduction by units of delay can be approached with simple changes in our algorithms.}. We call this subset $T$ a \textit{Target Set (TS)} and its size $|T|=k$, the budget. The upgrade of the TS, $T$ results in reduction of the lengths of shortest paths in the network. We denote the resulting (effective) shortest path length between $s$ and $t$ given the upgrade of $T$ as $d(s,t|T)$. Our goal is to find a $T$ that minimizes $\Sigma_{s,t\in V}d(s,t|T)$.

\vspace{-1mm}
\begin{definition}\textbf{Delay Minimization Problem (DMP):}
\vspace{-1mm}
Given a network $G=(V,E,L)$ and a budget $k$, find a target set $T\subset V$, such that $|T|=k$ and $\Sigma_{s,t\in V}d(s,t|T)$ is minimized.
\label{def:DMP}
\vspace{0mm}
\end{definition}
Fig.~\ref{fig:example} shows two possible TS solutions of size $k=2$ for a small example network. Initially all vertices have a delay of $1$ corresponding to an \textit{SPD} of $\Sigma_{s,t\in V}d(s,t)=58$. The reduction due to any TS $T$ is defined as the difference between the initial and the upgraded \textit{SPD}, i.e. $\Sigma_{s,t\in V}d(s,t)-\Sigma_{s,t\in V}d(s,t|T)$. An optimal TS maximizes the reduction (and minimizes the upgraded SPD). Thus, Fig.~\ref{fig:T1} shows a sub-optimal TS $\{a,f\}$ with reduction of $10$, while Fig.~\ref{fig:T2} shows an optimal TS $\{c,d\}$ with maximum \textit{SPD} reduction of $38$. Our goal is to minimize the \textit{SPD} by finding the optimal TS of budget size at most $k$.

\subsection{Complexity}
\label{sec:hardness}
We consider two different models for the distribution of the delays in a network and characterize the problem complexity. Under the \textit{general model}, node delays can be arbitrary non-negative values, while the \textit{uniform model} assumes equal delays (for simplicity, delay of $1$) on all nodes.
We show that DMP is NP-hard in the special case of the \textit{uniform model}, and hence it is in the same complexity class as the \textit{general model}. To show this hardness result we reduce the Set Cover problem to our problem. However, for restricted network structures such as trees, finding an optimal TS takes polynomial time.

\begin{thm}\label{thm:nphard} DMP is NP-hard even if the delay of all vertices is $1$, i.e. under the uniform model (or general model). 
\end{thm}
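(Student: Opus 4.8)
The plan is to prove NP-hardness through the decision version of DMP --- given $G$, a budget $k$, and a target value $D$, is there a target set $T$ with $|T|=k$ and $\Sigma_{s,t}d(s,t\mid T)\le D$? --- by a polynomial-time reduction from \textsc{Set Cover}. An instance of the latter is a universe $U=\{u_1,\dots,u_n\}$, a family $\mathcal{S}=\{S_1,\dots,S_m\}$, and an integer $k$, asking whether $k$ members of $\mathcal{S}$ cover $U$ (we may assume each $u_i$ lies in some $S_j$). Keeping every node delay equal to $1$, the length of a path equals its number of edges, and upgrading the source or any interior vertex of a shortest path shortens it by exactly one hop; this is precisely what lets the combinatorics of covering surface as pairwise delays.

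First I would build the graph $G$ as follows, with uniform delays. For each set $S_j$ introduce a \emph{set vertex} $g_j$. For each element $u_i$ introduce a bundle $L_i$ of $N$ \emph{leaf} vertices, and join every leaf of $L_i$ to $g_j$ exactly when $u_i\in S_j$. Finally add a bundle $Z$ of $M$ \emph{sink} vertices, each adjacent to all set vertices. Here $N$ and $M$ are constants, bounded by a fixed polynomial in $n,m,k$, chosen large; it will suffice to take $N,M>k$ with $M$ large relative to $N$ in a sense made precise below. The reduction clearly runs in polynomial time, and $G$ is connected with small, easily computed initial $SPD$.

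The key observation is that before any upgrade $d(\ell,z)=2$ for every leaf $\ell\in L_i$ and sink $z\in Z$, realised by the two-hop path $\ell-g_j-z$ for any $S_j\ni u_i$. Upgrading a set vertex $g_j$ drops every such path to length $1$ for all $u_i\in S_j$, and --- crucially --- a second set containing $u_i$ yields no further drop, since the source leaf still contributes $1$. Thus each element $u_i$ carries a \emph{fixed} coverage bonus of $NM$ on its leaf--sink pairs, realised precisely when at least one containing set is upgraded: exactly the diminishing-returns semantics of covering. I would set $D=SPD(G)-R^*$ with the threshold $R^*$ chosen so that all $n$ element bonuses must be realised (its exact value also absorbs the fixed sink--sink term discussed next). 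If $\mathcal{S}$ has a cover of size $k$, upgrading the corresponding set vertices (padded arbitrarily to size $k$) realises all $n$ bonuses, so the threshold is met.

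The converse, and the main obstacle, is to show that \emph{only} a genuine size-$\le k$ cover through set vertices can reach the threshold. Two facts drive this. First, a gadget's full $NM$ bonus cannot be obtained without a containing set vertex: with no such vertex upgraded, the pair $(\ell_{i,t},z_s)$ is shortened only if $\ell_{i,t}$ or $z_s$ is itself upgraded, so realising the whole gadget would require all $N$ leaves or all $M$ sinks, and both exceed $k$. Second, I must account for every \emph{incidental} reduction on the remaining pairs (sink--sink, co-covered leaf--leaf, set--set, and pairs incident to an upgraded sink or leaf) and show it cannot compensate for a missing element's bonus. The dominant sink--sink reduction $\binom{M}{2}$ is in fact a fixed constant for every target set containing at least one set vertex, and is folded into $R^*$; the remaining incidental terms are polynomially bounded, and choosing $N$ and $M$ as suitable polynomials (so that $NM$ dominates this residual bound) guarantees that failing to realise even one element's bonus leaves the total strictly below the threshold. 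The delicate point throughout is exactly this parameter balancing: verifying that the polynomially many incidental reductions --- in particular those scaling with $M$ that arise from spending budget on sinks --- can never let a solution covering only $n-1$ elements spuriously reach $R^*$. Granting that, the threshold is attainable if and only if every element is covered by the chosen set vertices, i.e.\ iff the \textsc{Set Cover} instance is a yes-instance; and since the general model subsumes the uniform model, NP-hardness holds for both.
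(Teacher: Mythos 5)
Your reduction uses the same source problem (Set Cover) and the same overall template as the paper --- build a uniform-delay graph, amplify the importance of covering with polynomially many auxiliary vertices, and test a threshold on SPD --- but the gadget itself is genuinely different, and both work. The paper joins set vertices to element vertices by incidence edges, makes the set vertices a clique and the element vertices a clique, and attaches $p$ pendant degree-one vertices to each set vertex; the pendants make set vertices so valuable that any good solution uses only them (the paper compares the worst reduction of a set vertex, $p(m+n+p(m-1))$, against the best reduction of an element vertex, $(n-1)(m+pm)$, and picks $p=mn$), after which closed-form SPD tables for cover versus non-cover finish the equivalence; the cliques keep the diameter tiny, which is precisely what makes those tables computable. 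Your construction instead localizes the coverage signal on leaf--sink pairs (distance $2\to 1$ exactly when a containing set vertex is upgraded), which gives a cleaner diminishing-returns semantics, and it handles budget spent on non-set vertices by counting ($N,M>k$) plus domination rather than by proving set vertices are always preferable. The price is exactly the point you flag yourself: since the incidental reductions vary even among genuine covers, $R^*$ can only contain the terms common to all of them (the folded sink--sink term $M(M-1)$ plus the $n$ bonuses), and the converse then needs one missing bonus $\Theta(NM)$ to strictly dominate every possible incidental gain, including the $O(n^2N^2)$ cross-bundle leaf--leaf reductions and the $O(kM)$ reductions from upgraded leaves or sinks. This inequality does hold --- take, say, $N=2k$ and $M$ a sufficiently large polynomial in $n,m,k$ so that $(2N-2k)M$ exceeds the $M$-free residue, and note also that a target set with no set vertex forfeits $M(M-1-k)$ of the sink--sink term and is hopeless --- so your sketch completes to a correct proof; but this domination step is the one you assert (``granting that'') rather than verify, whereas the paper's pendant device sidesteps it by first forcing the entire budget onto set vertices.
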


\begin{proof}
See the Appendix.
\end{proof}


Theorem \ref{thm:nphard} establishes that the problem is NP-hard and finding an optimal TS of size $k$ would involve enumerating all $O(|V|^{k})$ subsets of $V$. However, finding an optimal TS in trees takes polynomial time under the general model. Shortest paths between any pair of nodes in trees are unique and, hence, they do not change after upgrading the delay of any vertex. Intuitively, this fact about trees helps a simple greedy algorithm (formally defined as Algorithm \ref{algo:GR}) to produce an optimal TS of size $k$. 
\subsection{Approximability}
\label{sec:approx}
Since DMP is NP-hard, we explore the existence of approximations with guarantees. Maximizing a non-negative, monotone and submodular~\cite{nemhauser1978} function using a greedy approach leads to a well known constant time approximation of $(1-1/e)$. The underlying objective function in DMP is monotone as the $SPD$ reduces after each upgrade. However, it does not have the submodular property.
 
\begin{lemma}
\vspace{-2mm}
The objective function in DMP is monotone but not submodular, even under the uniform model.
\vspace{-1mm}
 \end{lemma}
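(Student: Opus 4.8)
The plan is to treat the two claims separately, since monotonicity is routine and the real content is the failure of submodularity. For monotonicity I would argue directly from the definition of the effective distance $d(s,t\mid T)$. Fix $T_1\subseteq T_2$ and an arbitrary ordered pair $(s,t)$. Any path $P$ realizing $d(s,t\mid T_1)$ is still available under $T_2$, and because upgrading only pushes vertex delays down to $0$, every vertex delay under $T_2$ is no larger than under $T_1$; hence the delay of $P$, and therefore $d(s,t\mid T_2)$, cannot exceed $d(s,t\mid T_1)$. Summing over all pairs gives $\sum_{s,t}d(s,t\mid T_2)\le\sum_{s,t}d(s,t\mid T_1)$, so the reduction objective $f(T)=\mathit{SPD}(G)-\sum_{s,t}d(s,t\mid T)$ is non-decreasing in $T$.

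The substance is the failure of submodularity, and the idea is to build a small uniform-delay instance exhibiting \emph{increasing} rather than diminishing returns. Submodularity of the maximization objective requires $f(A\cup\{x\})-f(A)\ge f(B\cup\{x\})-f(B)$ whenever $A\subseteq B$, so I would break it with a gadget in which a single upgrade is wasted but two joint upgrades together activate a brand-new shortest route. Concretely, connect $s$ and $t$ by two internally disjoint routes under the uniform model: a short route $s$-$z$-$t$ of delay $2$ that is initially optimal, and a longer route $s$-$a$-$b$-$t$ of delay $3$. Upgrading only $a$ (or only $b$) leaves the long route at delay $1+0+1=2$, merely tying the short route and yielding no reduction for the pair $(s,t)$; upgrading both drops it to $1+0+0=1$ and strictly beats the short route. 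Thus the marginal contribution of the second special node, conditioned on the first already being upgraded, strictly exceeds its marginal contribution to the empty set, which is exactly the increasing-returns behaviour that contradicts submodularity.

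The main obstacle is that $f$ is a sum of per-pair reductions over all ordered pairs, so I must check that the synergy localized at $(s,t)$ is not cancelled by the contributions of the other pairs. I would therefore finish with an explicit computation on the smallest instance realizing the gadget, the unit-delay $5$-cycle $s$-$z$-$t$-$b$-$a$-$s$, evaluating $f(\emptyset)$, $f(\{a\})$, $f(\{b\})$, and $f(\{a,b\})$ directly. Taking $A=\emptyset$, $B=\{a\}$, $x=b$, the two marginals $f(\{b\})-f(\emptyset)$ and $f(\{a,b\})-f(\{a\})$ come out with the latter strictly larger, witnessing the violation and establishing non-submodularity even under the uniform model. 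The only care needed in this bookkeeping is the directional delay (the destination delay is excluded), which I would handle by noting that the internal-delay-minimizing path is the same in both directions, so each unordered pair contributes consistently and the counterexample survives at the level of the total $\mathit{SPD}$.
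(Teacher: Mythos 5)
Your proposal is correct and takes essentially the same route as the paper: the paper also establishes monotonicity in passing and proves non-submodularity by an explicit unit-delay cycle counterexample (a $6$-cycle with $A=\emptyset$, $B=\{x_2,x_4\}$ and added vertex $x_3$, giving marginals $11$ versus $13$), exploiting exactly the rerouting synergy you describe. Your deferred computation on the $5$-cycle does check out: with ordered-pair $SPD=30$ initially, $f(\emptyset)=0$, $f(\{a\})=f(\{b\})=6$, and $f(\{a,b\})=14$, so the marginals are $6<8$ and the violation holds.
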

\begin{proof} The objective function $f(T)$ in DMP  is ``delay reduction'' defined as $f(T)=\Sigma_{s,t\in V}d(s,t)-\Sigma_{s,t\in V}d(s,t|T)$, where $T$ is the target set. The function $f(T)$ is monotone in the size of TS. To prove non-submodularity, we consider the example of a ring graph $G$ of six vertices with unit delays: vertex $x_1$ is connected to $x_2$, $x_2$ to $x_3$ and so on, and finally $x_6$ is connected to $x_1$.  The intuition is the following: a super-set of nodes as TS might force more shortest paths through the newly added vertex than its sub-set as TS. Let set $A = \phi$, $B=\{x_2,x_4\}$. 
In our example, $f(B\cup\left\{ x_3\right\} )=54-21=33$, $f(B)=54-34=20$, $f(A\cup\left\{ x_3\right\} )=54-43=11$, $f(A)=0$. So, $f(B\cup\left\{ x_3\right\} )-f(B) > f(A\cup\left\{ x_3\right\} )-f(A)$. So, $f(\cdot)$ is not submodular.
\vspace{-1mm}
\end{proof}

We show that there exists an approximation for a restricted variant of DMP targeting long paths for delay reduction. Focusing on long paths, as opposed to reduction across all lengths, is useful in applications where delays up to a given threshold do not affect the overall system operation. For example, participants in a multi-way video conference need to receive frames in at most $0.1s$ to ensure good video quality, but improving the delay for pairs that meet this requirement does not provide further benefit. To model this, instead of SPD, we consider the sum of \textit{``long''} shortest paths as an optimization metric. We achieve a probabilistic approximation of $O(k)$ based on VC-dimension theory~\cite{vapnik1971} and random sampling. We exploit a relationship between the set of shortest paths in a network and their VC-dimension to prove the stated approximation (see the appendix).



\section{Algorithms}
\label{sec:greedy}
We present a greedy approach for DMP that consecutively selects the vertex that minimizes the SPD in each iteration. Such an approach is optimal for $k=1$. It also produces optimal results for networks with simple structures (Lemma \ref{lemma:GR_structure}) and works well in practice for general instances. It is, however, expensive as it requires re-computation of all shortest paths at every iteration. To make the approach scalable, we employ sampling techniques and introduce probabilistic approximation algorithms for different delay models.

\subsection{Greedy Construction of the Target Set}
\label{sec:greedyintro}
While finding the optimal TS is NP-hard, in the case of only one target vertex, an exact solution can be obtained by computing the reduction of all individual nodes in polynomial time. Therefore, a greedy algorithm, which selects a vertex that optimally reduces SPD at each step, is a natural approach to solve DMP. Before presenting the algorithm, we introduce some additional notation. We define the delay \textit{Reduction (RS)} by a target set $S$ as: 
\begin{equation*}
\small
RS(S)=\Sigma_{s,t\in V}d(s,t)-\Sigma_{s,t\in V}d(s,t|S).
\end{equation*}

We further define RS by a vertex $v$, given that a subset $S$ has already been included in TS (assuming $v\notin S$) as:
\begin{equation*}
\small
RS(v|S)=\Sigma_{s,t\in V}d(s,t|S)-\Sigma_{s,t\in V}d(s,t|S\cup\{v\}).
\vspace{-1mm}
\end{equation*}

The reduction of adding vertex $v$ to a set $S$ in TS can be expressed as $RS(S\cup\{v\})=RS(v|S)+RS(S)$.
The $RS$ of a vertex depends on: (i) its delay and (ii) the number of unique shortest paths passing through it after removing its delay. Maximizing $RS(v|S)$ takes both these properties into account. Next, we present an algorithm which iteratively selects the vertex of maximum reduction $RS(v|S)$.


GR (Alg. \ref{algo:GR}) is a greedy TS selection strategy. It takes a network $G$ ($|V|=n$ and $|E|=m$) and a budget $k$ as input. First it pre-computes all pairs of shortest paths and stores them in an $n\times n$ matrix $A$ (steps 2-3). Then it computes the TS of $k$ vertices in $k$ iterations. In each iteration, it selects the vertex with maximum $RS$ conditioned on the current TS (step 5-8). When probing each vertex, the algorithm assumes its delay as $0$, updates the stored distances accordingly and estimates the reduction of the vertex. It chooses the vertex of maximum $RS$, makes its delay permanently $0$, and adds it to the TS. GR also updates the stored shortest path distances accordingly. 
GR is optimal for certain families of networks with simple structure. The following lemma outlines such families.

\begin{algorithm}[t]
\scriptsize
 \caption {Greedy (GR)\label{algo:GR}}
\begin{algorithmic}[1] 
 \REQUIRE Network $G=(V,E,L)$, Vertex delays $l(v)$, Budget $k$
 \ENSURE A subset of $k$ nodes 
\STATE Initialize Matrix $A$ with $0$ and $T$ as $\Phi$ 
\STATE Compute all pair shortest paths
\STATE Store $d(s,t)$ in Matrix position $A_{s,t}$ 
\WHILE {  $|T|\leq k$ }
\FOR{$v' \in V$}
\STATE Compute $RS(v'|T)$ when $l(v')>0$
\ENDFOR
\STATE $v\leftarrow max_{v'\in V}\{RS(v'|T)\}$ and then set $l(v)$ as $0$
\STATE Update $d(s,t)$ for $s,t \in V$ as $l(v)$ becomes $0$ 
\STATE $T\leftarrow T\cup \{v\}$
\ENDWHILE
\STATE Return $T$
\end{algorithmic}
\end{algorithm}

\begin{lemma}\label{lemma:GR_structure}
\vspace{-1mm}
Greedy (Alg. \ref{algo:GR}) produces an optimal TS in restricted structures such as trees, cliques and complete bipartite graphs under the general model.
\vspace{-1mm}
\end{lemma}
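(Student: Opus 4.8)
The plan is to show that on each of the three families the delay-reduction objective $RS(\cdot)$ collapses to a closed form that Greedy (Alg.~\ref{algo:GR}) maximizes \emph{exactly}: a modular (linear) set function in the tree and clique cases, and a modular function plus two disjoint ``coverage'' bonuses in the complete bipartite case. The unifying observation is that in all three families the effective distance $d(s,t\mid T)$ admits a simple description, which lets us write $SPD(G\mid T)$, and hence $RS(T)$, explicitly as a function of $T$. Optimality of Greedy then follows from the fact that a top-$k$ selection maximizes a modular function, plus a short merging argument for the bipartite bonus terms.

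First I would dispatch the two modular cases. In a tree the path $P_{s,t}$ is unique and independent of the delays, so upgrading a vertex never reroutes any path; consequently each ordered pair $(s,t)$ whose path contains $v$ (with $v\neq t$) loses exactly $l(v)$ when $v$ is upgraded. Writing $c(v)$ for the number of such pairs gives $RS(T)=\sum_{v\in T} l(v)\,c(v)$, a modular function whose $k$-subset maximizer is the top-$k$ vertices by $l(v)c(v)$ --- precisely what Greedy returns, since here $RS(v\mid S)=l(v)c(v)$ is independent of $S$. In a clique the direct edge is always a shortest path and the source delay is unavoidable on every path leaving $s$, so $d(s,t\mid T)=l(s)\,\mathbf{1}[s\notin T]$ and $RS(T)=(n-1)\sum_{v\in T} l(v)$, again modular; Greedy selects the $k$ highest-delay vertices and is optimal.

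The crux is the complete bipartite graph $K_{a,b}$ with parts $X,Y$ and $n=a+b$. Here cross-part pairs still use the direct edge ($d=l(s)\mathbf{1}[s\notin T]$), but a same-part pair $s,t\in X$ must route through some $y\in Y$, giving $d(s,t\mid T)=l(s)\mathbf{1}[s\notin T]+\min_{y\in Y} l(y)\mathbf{1}[y\notin T]$, and symmetrically for $Y$. Summing over all ordered pairs, I would derive the closed form
\[
RS(T)=\sum_{v\in T}(n-1)l(v)+a(a-1)\,\mathbf{1}[T\cap Y\neq\emptyset]\min_{y\in Y}l(y)+b(b-1)\,\mathbf{1}[T\cap X\neq\emptyset]\min_{x\in X}l(x),
\]
that is, a modular part $\sum_{v\in T}w(v)$ with $w(v)=(n-1)l(v)$, plus a one-time bonus $B_Y=a(a-1)\min_{y}l(y)$ (resp.\ $B_X=b(b-1)\min_x l(x)$) earned as soon as $T$ touches $Y$ (resp.\ $X$), since picking any vertex of a part zeroes the residual minimum delay of that part.

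Finally I would argue Greedy maximizes this $RS$ exactly. Within each part Greedy always takes vertices in decreasing order of $l(v)$, so if $x^{(1)},x^{(2)},\dots$ are the $X$-vertices sorted by decreasing delay, the $j$-th vertex Greedy draws from $X$ contributes marginal $w(x^{(j)})+B_X\mathbf{1}[j=1]$; hence the per-part marginal sequences $(w(x^{(1)})+B_X,\,w(x^{(2)}),\dots)$ and $(w(y^{(1)})+B_Y,\,w(y^{(2)}),\dots)$ are each non-increasing. Greedy therefore merges these two sorted sequences and collects the $k$ largest marginals, whose sum telescopes to $RS$ of Greedy's output. But the optimum over all size-$k$ sets equals $\max_{k_X+k_Y=k}$ of (length-$k_X$ prefix of the $X$-sequence) $+$ (length-$k_Y$ prefix of the $Y$-sequence), and because both sequences are sorted this maximum is attained by the $k$ overall-largest marginals --- matching Greedy. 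I expect this bipartite case to be the main obstacle: the objective there is genuinely non-modular (it is only monotone and submodular, for which Greedy generically loses a $1-1/e$ factor), so exactness does not follow from submodularity and instead relies on the special sorted-prefix structure of the two bonus terms.
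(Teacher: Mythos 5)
Your proof is correct, and it rests on the same three structural observations as the paper's, but it is considerably more rigorous. The paper's own justification is a short informal discussion rather than a derivation: trees are handled by noting (back in the complexity section) that shortest paths are unique and never reroute after an upgrade; cliques by noting that the direct edge is always shortest, so choosing the $k$ largest delays is optimal; and complete bipartite graphs by the single sentence that an upgraded vertex is used by all vertices of the opposite partition to reach vertices in their own partition. Your argument formalizes exactly these facts but then goes further where the paper stops: you derive closed forms for $RS(T)$, observe that the objective is modular on trees and cliques (so the top-$k$ selection made by greedy is exactly optimal), and, crucially, for $K_{a,b}$ you decompose $RS$ into a modular part plus two one-time coverage bonuses and prove greedy exactness via the sorted-prefix merging argument. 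That last step is genuine added content: the paper's one-line claim for the bipartite case does not by itself constitute a proof, and as you correctly note, the objective there is merely monotone submodular, so exactness cannot be inferred from generic greedy guarantees. The trade-off is length versus completeness: the paper conveys the intuition in three sentences, while your derivation actually establishes the lemma and makes transparent why greedy escapes the $(1-1/e)$ barrier on these families (modularity, or modularity plus part-wise sorted bonus terms) even though it cannot on general graphs.
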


In the previous section (Section \ref{sec:hardness}), we already discussed why GR produces an optimal result for trees. In a clique, since there is an edge between any pair of vertices, selecting $k$ vertices in descending delay order produces an optimal result and this is exactly the selection of GR. In a complete bipartite graph, if the delay of one vertex is updated then all vertices from the opposite partition will use this vertex to reach other vertices in their own partition, hence the greedy selection will again produce an optimal solution.  

\noindent \textbf{Example:} We provide a running example of GR in Fig.~\ref{fig:T2}. The first selected vertex is either $c$ or $d$ as $RS(c)=RS(d)=19$ and the $RS$ of any other vertex is $5$. Assuming that GR chooses $c$ at the first step, the next best vertex is $d$ as $RS(d|\{c\})=19$. $RS(v|\{c\})$, when $v$ is any other vertex, remains $5$. In the example, GR produces the optimal TS as the network structure is a tree. 



\noindent \textbf{Complexity:} GR runs in time $O(kn^3)$ which is dominated by the computation of shortest paths in steps $2,6$ and $9$. Finding the next ``best" vertex by evaluating the reduction of all possible vertices requires $O(n^3)$ time, where $n$ is the number of vertices. Moreover, updating the distances after a vertex is included in TS takes $O(n^2)$. The space complexity of computing all pairs shortest paths is $O(n^2)$. The high complexity of GR introduces a scalability challenge, rendering the algorithm infeasible for large real-world networks. Hence, we develop sampling-based versions of GR for large graphs and provide approximation guarantees w.r.t. GR.

\subsection{General Model: Approximate Target Set}


The main drawback of GR is that it is not scalable. We address its computational and storage bottlenecks using a sampling scheme. The main idea behind our approach is as follows: instead of computing and optimizing the sum of distances between all pairs of vertices, we can estimate it based on a small number of sampled vertex pairs. 

In what follows, we bound the difference in quality of our sampling solution GS (presented in Alg.~\ref{algo:GS}) and Greedy (Alg.~\ref{algo:GR}). In this case, the absolute value of the reduction $RS$ is not a suitable metric as the initial sum of shortest path distances (SPD) varies across input graphs. Hence, we choose \textit{Relative Reduction (RR)} as a quality metric where we normalize $RS$ by the initial SPD. We define the measure \textit{RR} of a set $S$ as $RR(S)  =\frac{RS(S)}{SPD}$. The \textit{RR} of a vertex $v$ given a set $S$ comprising the current TS is defined in a similar manner, $RR(v|S)=\frac{RS(v|S)}{SPD}$. 

As part of GS, we sample uniformly with replacement a set of ordered vertex pairs $P$ of size $p$ ($|P|=p$) from the set of all vertex pairs $U=\{(s,t)|s \in V,t\in V, s\neq t\},|U|=n(n-1)$. 
The samples can be viewed as random variables associated with the selection of a pair of vertices and the distance between a sampled pair is the value of the random variable. When uniform random sampling is used, each pair is chosen with probability $\frac{1}{n(n-1)}$ and the choice of one sample does not affect that of any other sample. Thus, the samples are independent and identically distributed random variables.

We first show that the estimate of SPD based on samples is unbiased. Namely, for any target set of nodes $S$, the average of the sum of distances between pairs in $P$ is an unbiased estimate of that between all pairs of vertices, the latter being defined as $\mu=\frac{\Sigma_{s,t\in V}d(s,t|S)}{n(n-1)}$. The vertex whose inclusion in TS optimizes this estimate is chosen in each step of GS. 

\begin{lemma}\label{lemma:means_general}
Given a sample of node pairs $P, |P|=p$, the expected average distance among the sampled pairs is an unbiased estimate of the average of all-pair distances ($\mu$): 
$E[\frac{1}{p}\sum\limits_{i=1}^pX_i]= \mu$
where $X_i$ represents the distance between the $i$-th pair of vertices in the sample. 
\end{lemma}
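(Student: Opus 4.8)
The plan is to prove unbiasedness by the standard three-step estimator argument: apply linearity of expectation, use the fact that the samples are identically distributed, and then compute the expectation of a single sample directly as a weighted average over the sample space. None of these steps is deep; the lemma is a routine unbiased-estimator statement, so the work is in bookkeeping rather than in any genuine difficulty.

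First I would write $E\left[\frac{1}{p}\sum_{i=1}^p X_i\right] = \frac{1}{p}\sum_{i=1}^p E[X_i]$ by linearity of expectation. It is worth emphasizing that this step requires no independence among the $X_i$; only their marginal expectations enter, so the i.i.d.\ property is more than we need here. Next I would argue that all the $X_i$ share a common expectation. Because the ordered pairs are drawn uniformly with replacement, each draw selects any pair $(s,t)\in U$ with the same probability $\frac{1}{n(n-1)}$, independent of the others. Consequently the $X_i$ are identically distributed and $E[X_i]=E[X_1]$ for every $i$.

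The third and only computational step is to evaluate $E[X_1]$ directly as a sum over the sample space $U$. Since $X_1$ takes the value $d(s,t\mid S)$ exactly when the pair $(s,t)$ is drawn, I would write $E[X_1]=\sum_{(s,t)\in U}\frac{1}{n(n-1)}\,d(s,t\mid S)=\frac{1}{n(n-1)}\sum_{s,t\in V}d(s,t\mid S)=\mu$. Substituting back gives $\frac{1}{p}\sum_{i=1}^p E[X_i]=\frac{1}{p}\cdot p\cdot\mu=\mu$, which is the claim.

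The one point that deserves a moment of care—and is really the only thing resembling an obstacle—is matching the summation index. The sample space $U$ consists of ordered pairs with $s\neq t$ and has size $n(n-1)$, whereas $\mu$ is written with the sum $\sum_{s,t\in V}d(s,t\mid S)$ ranging over all $s,t$. These agree because the convention $d(s,s\mid S)=0$ makes the diagonal terms vanish, so the numerator is unchanged whether or not the $s=t$ terms are included, while the denominator $n(n-1)$ correctly counts exactly the off-diagonal ordered pairs. Once this identification is made explicit, the computation closes immediately.
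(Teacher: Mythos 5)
Your proof is correct and follows essentially the same route as the paper's: linearity of expectation, the uniform-with-replacement sampling giving each pair probability $\frac{1}{n(n-1)}$, and a direct computation of $E[X_i]$ as a weighted sum over pairs. Your two clarifications—that independence is not actually needed (only identical marginals and linearity) and that the diagonal terms vanish by the convention $d(s,s)=0$—are correct refinements of details the paper glosses over, but they do not change the argument.
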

\begin{proof}
The random variable, $X_i$ is the the distance between the $i$-th pair of vertices in the sample. The probability of a pair in the selection is $\frac{1}{n(n-1)}$. $E[X_i]=\frac{1}{n(n-1)}\Sigma_{s,t\in V}d(s,t)$. We sample pairs independently with replacement. So, the variables, $X_i$'s are i.i.d. Now,  $E[\frac{1}{p}\sum\limits_{i=1}^pX_i]=
\frac{1}{p}\sum\limits_{i=1}^pE[X_i]=\frac{1}{p} \cdot \frac{p}{n(n-1)} \cdot \Sigma_{s,t\in V}d(s,t)=\mu$.
\end{proof}

We employ Hoeffding's inequality~\cite{hoeff1963} to bound the error produced by our sampling method in a single greedy step. Hoeffiding's inequality provides a sample-size dependent bound for the difference between the estimated mean (based on samples) and the actual mean of a population. 
The requirement for the applicability of Hoeffiding's inequality is that the summed variables are chosen independently from the same distribution, which is the case in our setting. 
Similar independent node pair sampling analysis using Hoeffding's inequality has been previously employed by Yoshida et al.~\cite{yoshida2014} 
to estimate the group betweenness of vertices, whereas, we estimate the reduction in the sum of shortest paths upon node upgrades.  
In what follows, we demonstrate that the estimate has low error with high probability requiring only small number of samples. Furthermore, we show the same quality guarantee with even smaller number of samples in small-world networks.


\begin{thm}\label{thm:approxgeneral}
Given a target set $S$ and a sample $P$ of size $p$, if $v_g$ and $v_a$ are the next vertices chosen by GR and GS respectively, the difference in delay reduction due to these choices is bounded as follows:
\begin{equation*}
Pr[|RR(v_g|S) - RR(v_a|S)| < \epsilon]>1-\frac{1}{n^2},
\end{equation*}
where $p$ is $O(\frac{c^2logn}{\epsilon^2})$, $c=\frac{diam}{l_{min}}$, $diam$ and $l_{min}$ are the diameter and minimum delay respectively. 
\end{thm}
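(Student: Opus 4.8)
The plan is to reduce the claim to a per-vertex concentration statement and then transfer it to the two greedy choices through an elementary ``argmax'' argument. For a candidate vertex $v$, let $Y_i = d(s_i,t_i|S) - d(s_i,t_i|S\cup\{v\})$ denote the per-pair reduction contributed by the $i$-th sampled pair, and let $\widehat{RR}(v|S) = \frac{n(n-1)}{p\cdot SPD}\sum_{i=1}^{p}Y_i$ be the estimate that GS maximizes. By linearity of expectation and the same i.i.d. argument as in Lemma~\ref{lemma:means_general}, $\widehat{RR}(v|S)$ is an unbiased estimate of $RR(v|S)$. Suppose the event $\mathcal{E}$ that $|\widehat{RR}(v|S) - RR(v|S)| < \epsilon/2$ holds simultaneously for every $v\in V$. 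Since $v_g$ maximizes the true reduction and $v_a$ maximizes the estimate, we have $RR(v_g|S)\ge RR(v_a|S)$ and $\widehat{RR}(v_a|S)\ge \widehat{RR}(v_g|S)$; chaining these three inequalities with $\mathcal{E}$ gives $0 \le RR(v_g|S) - RR(v_a|S) < \epsilon$. Thus it suffices to guarantee $\mathcal{E}$ with probability at least $1-1/n^2$.

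Next I would establish the per-vertex bound via Hoeffding's inequality. Since upgrading a target set only decreases distances and no shortest-path delay exceeds $diam$, each $Y_i$ lies in $[0,diam]$, and the $Y_i$ are i.i.d. because the pairs in $P$ are drawn independently with replacement. Writing $\bar{Y}=\frac1p\sum_i Y_i$ and $\nu = E[\bar Y]$, Hoeffding gives $Pr[|\bar Y - \nu|\ge t] \le 2\exp(-2pt^2/diam^2)$. To translate this into a bound on $|\widehat{RR}(v|S)-RR(v|S)| = \frac{n(n-1)}{SPD}|\bar Y - \nu|$, I use that every pair is at distance at least $l_{min}$, so $SPD \ge n(n-1)\,l_{min}$ and hence $\frac{n(n-1)}{SPD}\le \frac{1}{l_{min}}$. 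Choosing $t=\epsilon\, l_{min}/2$ then forces $|\widehat{RR}(v|S)-RR(v|S)| < \epsilon/2$, while the failure probability becomes $2\exp\!\big(-\tfrac{p\epsilon^2}{2c^2}\big)$ with $c = diam/l_{min}$.

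Finally I would take a union bound over the $n$ candidate vertices: $Pr[\neg\mathcal{E}] \le 2n\exp(-p\epsilon^2/(2c^2))$. Requiring the right-hand side to be at most $1/n^2$ and solving for $p$ yields $p = O\!\big(c^2\log n/\epsilon^2\big)$, matching the stated sample size. The main obstacle is the second step: Hoeffding controls the unnormalized average distance, whose range is $diam$, whereas the theorem is phrased in terms of the normalized relative reduction; bridging the two requires the worst-case lower bound $SPD \ge n(n-1)\,l_{min}$, and it is precisely this normalization that injects the sensitivity factor $c=diam/l_{min}$ into the sample complexity. A secondary point to handle carefully is that $v_a$ and $v_g$ are evaluated against the \emph{same} sample $P$, so a single union bound over the candidate vertices (rather than a separate accounting for the two algorithms) suffices.
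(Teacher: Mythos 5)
Your proposal is correct and follows essentially the same route as the paper's proof: Hoeffding's inequality on the i.i.d.\ sampled pairs, the bound $SPD \geq n(n-1)\,l_{min}$ together with per-pair range $diam$ to convert absolute into relative error (which is exactly where $c = diam/l_{min}$ enters), a sandwich argument between the true maximizer $v_g$ and the empirical maximizer $v_a$, and a union bound over the $n$ candidate vertices yielding $p = O(c^2\log n/\epsilon^2)$. If anything, your version is slightly cleaner in handling the sample-dependence of $v_a$ --- you require uniform concentration over all $n$ candidates up front, whereas the paper bounds the two specific choices first and then patches in the factor-$n$ union bound afterward --- but the underlying argument is the same.
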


\begin{proof}
Let $M_g=\Sigma_{s,t\in V}d(s,t|S\cup\{v_g\})$ and $M_a=\Sigma_{s,t\in V}d(s,t|S\cup\{v_a\})$. Let also $\mu_g$ and $\mu_a$ denote the corresponding mean distances and $Y^{g}$ and $Y^{a}$ be the corresponding expected means computed using the samples. 

Since the samples provide an unbiased estimate (Lemma~\ref{lemma:means_general}) and are i.i.d., we can use Hoeffding's inequality~\cite{hoeff1963} to bound the error of the mean estimates:
\begin{equation*}
Pr[|Y^{g}-\mu_g| \geq \beta]\leq \delta 
\end{equation*}
where $\delta=2\exp(-\frac{2p^2\beta^2}{\Lambda})$, $X_i$ represents the distance between the $i-$th pair of vertices in the sample, $a_i\leq X_i \leq b_i$, and $\Lambda =\sum\limits_{i=1}^p(b_i-a_i)^2$. Similarly,
$Pr[|Y^{a}-\mu_a| \geq \beta]\leq \delta.$


Applying union bound, $Pr[(|Y^{g}-\mu_g| \geq \beta) \cup (|Y^{a}-\mu_a| \geq \beta)] \leq 2\delta.$ By construction,  $\mu_g \geq \mu_a$ as GR selects the best next vertex at each step. On the other hand, since GS selects $v_a$, it must be that $Y^{a} \geq Y^{g}$. As, the sampled best node is probabilistic, we need to apply union bound over $n$ possible nodes.
As a consequence, we get 
$Pr[|\mu_g - \mu_a| \geq 2\beta] \leq 2n\delta$, or alternatively 
$Pr[|\mu_g - \mu_a| < 2\beta] > 1 - 2n\delta$. 
\newline

Now, $Pr[|RR(v_g|S)-RR(v_a|S)| < \epsilon]  $\newline 
$~~~~~=Pr[|M_a-M_g| < \epsilon.SPD]$\newline
$~~~~~=Pr[|\mu_a-\mu_g| < \frac{\epsilon.SPD}{n(n-1)}]$
$> 1- 4n\exp(-\frac{2p^2(\frac{\epsilon.SPD}{2n(n-1)})^2}{\Lambda})$\newline

 But, $1- 4n\exp(-\frac{2p^2(\frac{\epsilon.SPD}{2n(n-1)})^2}{\Lambda})>1-4n\exp(-\frac{p(\epsilon.l_{min})^2}{2diam^2})$, 
since $SPD > n(n-1)l_{min}$ and $(b_i-a_i) < diam$ and as a consequence, $Pr[|RR(v_g|S)-RR(v_a|S)| < \epsilon] > 1-4n\exp(-\frac{p(\epsilon.l_{min})^2}{2diam^2})$.\newline

Thus, by choosing $p=\frac{2diam^2log(4n^3)}{(\epsilon.l_{min})^2}$, we have \newline
$~~~~~~~~~~~~~ Pr[|RR(v_g|S)-RR(v_a|S)| < \epsilon] >1-\frac{1}{n^2}$
\end{proof}

Note that, in the theorem, we assume $l_{min} >0$ without loss of generality. If $l_{min}=0$, one can delete any node of zero delay, add all possible edges among its neighbours and consider the resulting network as an input. For small-world networks (where the diameter is $\leq l_{max}logn$), a property exhibited in many domains, we show that the number of samples needed to obtain the same quality is much smaller. 
\begin{cor}
Given a small-world network in which $diam\leq l_{max}logn$, the error of GS using $p=O(\frac{log^3n}{\epsilon^2})$ samples  can be bounded as:\newline
\begin{equation*}
Pr[|RR(v_g|S)-RR(v_a|S)| < \epsilon]>1-\frac{1}{n^2}
\end{equation*}
\end{cor}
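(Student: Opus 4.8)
The plan is to obtain the corollary as an immediate specialization of Theorem~\ref{thm:approxgeneral}, substituting the small-world diameter bound into the sample-size expression derived there. That theorem already establishes that taking $p = \frac{2\,diam^2 \log(4n^3)}{(\epsilon\, l_{min})^2}$ samples guarantees $Pr[|RR(v_g|S)-RR(v_a|S)| < \epsilon] > 1 - \frac{1}{n^2}$ for an \emph{arbitrary} network. Crucially, nothing in that argument is tied to the specific value of the diameter; it invokes only the two inequalities $b_i - a_i < diam$ and $SPD > n(n-1) l_{min}$. Hence the same probability statement survives verbatim once we merely re-express $p$ under the extra hypothesis $diam \leq l_{max}\log n$.

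First I would substitute $diam \leq l_{max}\log n$, giving $diam^2 \leq l_{max}^2 \log^2 n$. Next I would note that $\log(4n^3) = 3\log n + \log 4 = O(\log n)$, so the product $diam^2 \log(4n^3)$ accumulates exactly three logarithmic factors. Combining these, $p \leq \frac{2\, l_{max}^2 \log^2 n \cdot O(\log n)}{\epsilon^2 l_{min}^2} = O\!\left(\frac{l_{max}^2}{l_{min}^2}\cdot\frac{\log^3 n}{\epsilon^2}\right)$. Treating the delay ratio $l_{max}/l_{min}$ as a constant that does not grow with $n$ (it is a property of the delay distribution, not of the graph size), this collapses to $p = O(\frac{\log^3 n}{\epsilon^2})$, which is the claimed bound.

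Finally, I would observe that the final Hoeffding bound in the theorem, of the form $2\exp(-\frac{p(\epsilon l_{min})^2}{2\,diam^2})$, is decreasing in $p$, so collecting more samples never weakens the guarantee. Since the small-world value $p = O(\frac{\log^3 n}{\epsilon^2})$ is at least the quantity required by Theorem~\ref{thm:approxgeneral} under the hypothesis $diam \leq l_{max}\log n$, the bound $1 - \frac{1}{n^2}$ carries over unchanged. There is essentially no genuine obstacle here; the statement is pure bookkeeping on the logarithmic factors. The only point demanding care is making explicit that $l_{max}$ and $l_{min}$ are treated as constants, since otherwise the $\log^3 n$ rate would instead carry the $l_{max}^2/l_{min}^2$ prefactor.
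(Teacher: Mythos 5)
Your proposal is correct and follows essentially the same route as the paper: the paper's proof likewise substitutes $diam \leq l_{max}\log n$ into the sample-size expression of Theorem~\ref{thm:approxgeneral} (via the constant $c = diam/l_{min}$) and assumes $l_{max}/l_{min}$ is constant, yielding the $O(\log^3 n/\epsilon^2)$ bound. Your additional remarks on the $\log(4n^3)=O(\log n)$ bookkeeping and on monotonicity in $p$ only make explicit what the paper leaves implicit.
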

\begin{proof} Let the network have a small-world property, $diam\leq l_{max}logn$, where $l_{max}$ is the maximum delay. Now $c$ in above theorem can be replaced by $\frac{l_{max}logn}{l_{min}}$, where $l_{min}$ is the minimum delay. $\frac{l_{max}}{l_{min}}$ is assumed to be constant.
\vspace{-1mm}
\end{proof}

\begin{algorithm}[t]
\scriptsize
 \caption {Greedy with Sampling (GS)\label{algo:GS}}
\begin{algorithmic}[1] 
 \REQUIRE Network $G=(V,E,L)$, Approximation error $\epsilon$, Sampling factor $c$, Budget $k$
 \ENSURE A subset of $k$ nodes, Target Set 
\STATE Choose $p =  O(clogn/\epsilon^{2})$ pairs of vertices in $P$
\STATE $T\leftarrow\Phi$
\WHILE {  $|T|\leq k$ }
\FOR {  $(s,t)\in P$ }
\STATE Compute $d(s,t'|T)$ and $s.target[t']\leftarrow d(s,t'|T) \hspace{.1cm} \forall t'\in V$
\STATE Compute $d(s',t|T)$ and $t.source[s']\leftarrow d(s',t|T) \hspace{.1cm} \forall s'\in V$
\ENDFOR

\FOR{$v' \in V$}
\IF {$l(v')>0$}
\STATE $R_{v'}\leftarrow \Sigma_{(s,t)\in P}d(s,t|T)-\Sigma_{(s,t)\in P}d(s,t|T\cup\{v'\})$
\ENDIF
\ENDFOR
\STATE $v\leftarrow max_{v'\in V}\{R_{v'}\}$
\STATE $l(v)\leftarrow 0$ and $T\leftarrow T\cup \{v\}$
\ENDWHILE
\STATE Return $T$
\end{algorithmic}
\end{algorithm}
\vspace{-1mm}

GS (Alg.~\ref{algo:GS}) takes as input a network $G$, a target approximation error $\epsilon$, a sampling factor $c$ and a budget $k$. The algorithm outputs a target set of vertices constructed based on optimizing the sum of the distances between each of the sampled pair paths. The approximation error, $\epsilon$, defines the difference between the approximate and the optimal reduction at each step. The number of samples $p$ depends on the number of vertices $n$, the error $\epsilon$, and the sampling factor $c$. In theory $c$ should be chosen as shown in the theorem based on the input graph $G$. But in practice, we use a small constant $c$, requiring small number of samples (see Sec. \ref{sec:exp}). 

The algorithm first samples pairs from the population of all pairs (step 1). Note that, although we present the algorithm with sampling pairs once, sampling new $p$ pairs in each iteration does not change the quality bounds or running time of the algorithm. The algorithm runs for $k$ iterations. It computes the sum of  distances between each of the sampled pairs and selects the best vertex which reduces this sum the most. 
 To achieve this, in each iteration, we computes the desired shortest path distances and store them (step 5-6 ). Next we select the vertex with maximum reduction in the sum of distances of the sampled pairs (step 8-12). The selection is conditioned on already selected vertices in TS. 

\noindent \textbf{Complexity:} 
The running time of GS is dominated by the computation of shortest paths. If the number of edges is $O(nlogn)$, running Dijkstra's algorithm for each sampled pair takes $O(nlogn)$. The algorithm has $k$ iterations. This leads to the time complexity of $O(kpnlogn)$. We need to store only the distances from the end vertices of the pair to all other vertices. This leads to a space complexity of $O(pn)$.

\subsection{Uniform Model: Approximate Target Set}

In some applications, instances of our design problem may feature uniform (equal) or close-to-uniform initial delays. For example, many routing devices in a computer network might have similar hardware configuration and hence feature comparable delays. Similarly, intersections with the same number of lanes within a road network allow for similar rate of cars to propagate during congestion periods. Such homogeneous instances offer more structure to the design problem and allow for a better (faster and higher-quality) sampling scheme than our general-case algorithm GS.     
Hence, we develop and analyze a superior sampling based-method, called PCS (Path Count with Sampling), targeted for the uniform model. 
 
 
We relate the delay reduction due to a vertex to the number of shortest paths passing through it. Let $\zeta^v(S)$ ( or $\zeta^v$, we are omitting $S$ for simplicity) denote the number of shortest paths passing through a vertex $v$ assuming that $S$ is the target set.
\begin{thm}\label{thm:pathcount}
In the uniform model, for a given set $S$ and $v\notin S$,
\label{eqn:pc}
$RS(v|S)=\zeta^v+(n-1)$.
\end{thm}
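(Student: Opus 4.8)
The plan is to express the marginal reduction $RS(v|S)$ as a sum of per-pair contributions over all ordered source--destination pairs, and to show that each such contribution is either $0$ or $1$. By definition $RS(v|S)=\sum_{s,t\in V}\big(d(s,t|S)-d(s,t|S\cup\{v\})\big)$, so it suffices to understand, for each ordered pair $(s,t)$, how the effective distance changes when the delay of $v$ drops from $1$ to $0$. First I would observe that lowering a single unit of delay at $v$ decreases the length of any \emph{fixed} path by at most $1$: exactly $1$ if $v$ appears on the path as the source or an interior vertex, and $0$ otherwise, since the destination's delay is excluded by convention. Taking minima over all $s$--$t$ paths then gives $d(s,t|S)-1\le d(s,t|S\cup\{v\})\le d(s,t|S)$, so every per-pair reduction is $0$ or $1$.

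The key step is to characterize exactly when the per-pair reduction equals $1$. I claim this happens precisely when, in the configuration with target set $S$, there is a shortest $s$--$t$ path on which $v$ occurs as the source or an interior vertex. The forward direction is immediate: if such a path $\pi$ exists, its length drops by $1$ upon upgrading $v$, witnessing $d(s,t|S\cup\{v\})\le d(s,t|S)-1$. The subtle direction---the part I expect to demand the most care---is the converse, because upgrading $v$ could in principle promote a previously non-shortest path to the new shortest one. Here I would lean on the at-most-$1$ bound: if the distance strictly decreases, then any path realizing the new distance must itself have been shortened, hence must contain $v$ as source or interior vertex, and its \emph{old} length must already equal $d(s,t|S)$, so it was a shortest path through $v$ all along. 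This rules out any reduction originating from paths that do not pass through $v$.

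Finally I would partition the $n(n-1)$ ordered pairs according to the role of $v$. For the $n-1$ pairs of the form $(v,t)$ the vertex $v$ is the source and trivially lies on the shortest path, contributing reduction $1$ each, for a total of $n-1$. For the $n-1$ pairs $(s,v)$ the vertex $v$ is the destination, whose delay is excluded, so the distance is unchanged and the contribution is $0$. For the remaining pairs $(s,t)$ with $s,t\ne v$, the contribution is $1$ exactly for those whose shortest path passes through $v$ as an interior vertex, and by definition there are $\zeta^v$ of these. Summing the three groups yields $RS(v|S)=\zeta^v+(n-1)$. The one modelling point I would flag is the treatment of ties: the argument counts ordered pairs whose shortest path has $v$ interior, so the identity is exact under the standard assumption that shortest paths are unique, or, equivalently, provided $\zeta^v$ is read as counting such pairs rather than distinct paths.
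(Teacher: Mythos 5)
Your proof is correct and takes essentially the same route as the paper's: a case split on the role of $v$ for each ordered pair (source, interior vertex of a shortest path, destination/absent), contributing $n-1$, $\zeta^v$, and $0$ respectively. The paper's own proof is a terse version of this same decomposition; your additions---the at-most-$1$ bound ruling out promotion of paths not containing $v$, and the caveat that $\zeta^v$ must be read as counting pairs (or that shortest paths be unique)---merely make explicit what the paper leaves implicit.
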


\begin{proof}
See the Appendix.
\end{proof}

With the above result, a greedy algorithm only needs to know the values of $\zeta$ for each vertex. The main bottleneck of computing $\zeta$ involves shortest path computation between all pairs of vertices. We address this complexity by a different sampling scheme. We estimate $\zeta$ for a vertex based on the shortest paths among $p$ pairs of vertices sampled independently with replacement. 
Let $X^{v}$ be a random variable denoting the number of times $v$ belongs to $SP_{s,t}$ for all sampled pairs $(s,t)$, where $SP_{s,t}$ ($s,t \notin SP_{s,t}$) denotes the set of vertices on the shortest path(s) between $s$ and $t$. The expected value of the random variable is computed as follows:

\begin{lemma} \label{lemma:uniform_lemma}
\vspace{-1mm}
For any vertex $v$,
$E[X^{v}]= \frac{p}{n(n-1)}\zeta^v.$
\vspace{-1mm}
\end{lemma}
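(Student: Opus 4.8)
The plan is to mirror the unbiased-estimator argument already used in Lemma~\ref{lemma:means_general}, simply replacing the distance random variable by an indicator of membership on a shortest path. First I would decompose the count $X^{v}$ into a sum of per-sample indicators. Writing the $p$ sampled ordered pairs as $(s_1,t_1),\dots,(s_p,t_p)$, each drawn uniformly and independently from the population $U$ of all ordered pairs with $|U|=n(n-1)$, I set $Y_i=\mathbf{1}[\,v\in SP_{s_i,t_i}\,]$, so that by the definition of $X^{v}$ we have $X^{v}=\sum_{i=1}^{p}Y_i$: it counts exactly the number of sampled pairs whose shortest path(s) route through $v$.

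Next I would compute the expectation of a single indicator. Since every ordered pair $(s,t)$ is selected with probability $\tfrac{1}{n(n-1)}$, we get $E[Y_i]=\Pr[\,v\in SP_{s_i,t_i}\,]=\tfrac{1}{n(n-1)}\sum_{(s,t)\in U}\mathbf{1}[\,v\in SP_{s,t}\,]$. At this point I would make explicit the reading of $\zeta^{v}$ as the number of ordered pairs whose shortest-path vertex set contains $v$, i.e. $\zeta^{v}=\sum_{(s,t)\in U}\mathbf{1}[\,v\in SP_{s,t}\,]$, which is precisely the quantity governing $RS(v|S)$ in Theorem~\ref{thm:pathcount}. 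With this identification, $E[Y_i]=\tfrac{\zeta^{v}}{n(n-1)}$.

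Finally, because the samples are i.i.d., linearity of expectation yields $E[X^{v}]=\sum_{i=1}^{p}E[Y_i]=p\cdot\tfrac{\zeta^{v}}{n(n-1)}=\tfrac{p}{n(n-1)}\zeta^{v}$, as claimed. Linearity does not require independence, so the summation itself is trivial; the only point demanding care is the bookkeeping in the tie case. When several shortest paths connect $s$ and $t$, the set $SP_{s,t}$ is their union, and $\mathbf{1}[\,v\in SP_{s,t}\,]$ still contributes at most once per pair. Hence $\zeta^{v}$ must be understood as counting pairs routed through $v$ rather than individual paths, and I would fix this convention once at the outset so that both sides of the identity refer to the same object. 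After that, the computation is routine.
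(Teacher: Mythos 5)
Your proof is correct and follows the same route the paper takes: the paper justifies the lemma in one line by ``the additive property of expectation and the fact that the pairs are sampled independently,'' which is exactly your indicator decomposition $X^{v}=\sum_i Y_i$ plus linearity of expectation, just written out in full. Your extra remark that $\zeta^{v}$ must count ordered \emph{pairs} routed through $v$ (not individual paths, in case of ties) is a sensible clarification of the paper's convention, consistent with how $\zeta^{v}$ is used in Theorem~\ref{thm:pathcount}, but it does not change the argument.
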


The lemma holds due to the additive property of expectation and the fact that the pairs are sampled independently. Next, we show that the difference in quality of GR and PCS is small with high probability in a single greedy step.

\begin{thm}\label{thm:approxpathcount}
Given a sample $P,|P|=p=O(\frac{logn}{\epsilon^2})$, if $v_g$ and $v_a$ are the vertices chosen by GR and PCS respectively, then
\begin{equation*}
Pr[|RR(v_g|S) - RR(v_a|S)| < \epsilon] > 1-\frac{1}{n}.
\end{equation*}
\end{thm}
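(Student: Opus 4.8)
The plan is to mirror the proof structure of Theorem~\ref{thm:approxgeneral}, but to exploit the sharper relationship from Theorem~\ref{thm:pathcount} that in the uniform model $RS(v|S)=\zeta^v+(n-1)$. Because the additive term $(n-1)$ is identical for every candidate vertex, comparing the reductions of two vertices reduces to comparing their path counts $\zeta^v$. This is why PCS can get away with far fewer samples ($O(\log n/\epsilon^2)$ rather than $O(c^2\log n/\epsilon^2)$): the random variable we must estimate, $X^v$, is a count of how often $v$ lies on a sampled shortest path, so it is naturally bounded in $[0,1]$ per sample (each sampled pair either routes through $v$ or not), eliminating the diameter-dependent range factor $c=diam/l_{min}$ that appears in the general case.

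First I would set up the two estimators. By Lemma~\ref{lemma:uniform_lemma}, $E[X^{v}]=\frac{p}{n(n-1)}\zeta^v$, so $\frac{n(n-1)}{p}X^v$ is an unbiased estimator of $\zeta^v$, and hence (adding the common constant $n-1$) of $RS(v\mid S)$. Let $v_g$ be GR's choice, which maximizes the true $\zeta^{v}$, and let $v_a$ be PCS's choice, which maximizes the empirical count $X^{v}$. I would normalize by $SPD$ throughout so that the statement is phrased in terms of $RR$. The key observation is that each contribution to $X^v$ from a single sampled pair is an indicator in $\{0,1\}$, so writing $X^v=\sum_{i=1}^{p} Z_i^{v}$ with $Z_i^{v}\in\{0,1\}$, Hoeffding's inequality applies with $b_i-a_i=1$, giving $Pr[|\frac{1}{p}X^v - \frac{1}{n(n-1)}\zeta^v|\ge\beta]\le 2\exp(-2p\beta^2)$.

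Next I would combine the two tail bounds. Applying Hoeffding to both $v_g$ and $v_a$ and taking a union bound (over all $n$ candidate vertices, since PCS's selected vertex is random) yields, by the same argument as in Theorem~\ref{thm:approxgeneral}, a two-sided bound $Pr[|\zeta^{v_g}-\zeta^{v_a}|\ge 2\beta']\le 2n\exp(-2p\beta'^2)$ for an appropriately rescaled deviation $\beta'$, using that $\zeta^{v_g}$ is maximal in truth while $X^{v_a}$ is maximal empirically. Converting the path-count gap into a reduction gap is immediate because the $(n-1)$ terms cancel in $RS(v_g\mid S)-RS(v_a\mid S)=\zeta^{v_g}-\zeta^{v_a}$; then dividing by $SPD$ turns this into a bound on $|RR(v_g\mid S)-RR(v_a\mid S)|$. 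The final step is to choose $p=O(\log n/\epsilon^2)$ so that the right-hand side $2n\exp(-2p\beta'^2)$ drops below $1/n$, matching the claimed probability $1-\frac{1}{n}$.

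The main obstacle I anticipate is bookkeeping the conversion between the normalized count $\frac{1}{p}X^v$ and the $RR$ scale correctly, i.e.\ tracking the factors of $n(n-1)$ and $SPD$ so that a deviation $\beta$ in the per-sample mean translates into exactly an $\epsilon$-deviation in $RR$, and verifying that the resulting sample size is genuinely $O(\log n/\epsilon^2)$ with no hidden $c$ dependence. In the uniform model $SPD$ itself scales like $n(n-1)$ times the average hop-distance, and I would need to confirm that this scaling cleaves no residual diameter factor once $\zeta^{v}$ (which can be as large as $\Theta(n^2)$) is divided through; the fact that both $\zeta$ and $SPD$ carry the same quadratic-in-$n$ magnitude is what keeps the bound clean, and getting that cancellation exactly right is the delicate part.
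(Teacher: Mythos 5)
Your proposal is correct and follows essentially the same route as the paper's proof: Hoeffding's inequality applied to the per-pair $\{0,1\}$ indicators (which is exactly what removes the $diam/l_{min}$ range factor), a union bound over the $n$ candidate vertices combined with the two maximality relations ($\zeta^{v_g}$ maximal in truth, $X^{v_a}$ maximal empirically), cancellation of the common $(n-1)$ term via Theorem~\ref{thm:pathcount}, and normalization by $SPD$ using $SPD \geq n(n-1)$ to fix $p=O(\log n/\epsilon^2)$. The ``delicate cancellation'' you flag at the end is handled in the paper by the one-line observation $SPD > N = n(n-1)$, which holds because every ordered pair has distance at least $1$ under uniform delays.
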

\begin{proof}
If $X_1,X_2,...,X_p$ are independent random variables in $[0,1]$ and $\bar{X}=\frac{1}{p}\sum\limits_{i=1}^pX_i$, then from Hoeffding's inequality~\cite{hoeff1963}:
$Pr[|\bar{X}-E[\bar{X}]|\geq \beta ] \leq 2\exp(-2\beta^2p).$

Using Lemma \ref{lemma:uniform_lemma}
and Hoeffding's inequality, \newline
$Pr[|\frac{\zeta^{v_g}}{N}-\frac{1}{p}X^{v_g}| \geq \beta ] \leq 2\exp(-2\beta^2p)$, and similarly\newline
$Pr[|\frac{\zeta^{v_a}}{N}-\frac{1}{p}X^{v_a}| \geq \beta ] \leq 2\exp(-2\beta^2p)$. The optimal vertex chosen by PCS is $v_a$ and hence $X^{v_a} \geq X^{v_g}$. Since $X^{v_a} \geq X^{v_g}$, and $\frac{\zeta^{v_g}}{N}\geq\frac{\zeta^{v_a}}{N}$ (by construction), we apply the same logic of union bound as in Theorem \ref{thm:approxgeneral} to achieve
$Pr[|\frac{\zeta^{v_g}}{N}-\frac{\zeta^{v_a}}{N}| < 2\beta ] > 1-4n\exp(-2\beta^2p)$. \newline

Now, we use this inequality to derive the following:\newline
$~~~~~~~~Pr[|RR(v_g|S)-RR(v_a|S)| < \epsilon]  $ \newline
\vspace{1mm}
$~~~~~~=Pr[|RS(v_g|S)-RS(v_a|S)| < \epsilon.SPD] $ \newline
\vspace{1mm}
$~~~~~~=Pr[|\zeta^{v_g}-\zeta^{v_a}| < \epsilon.SPD]$, from Thm.~\ref{thm:pathcount}\newline
\vspace{1mm}
$~~~~~~=Pr[|\frac{\zeta^{v_g}}{N}-\frac{\zeta^{v_a}}{N}| < \frac{\epsilon.SPD}{N}]$ \newline
$~~~~~~~~>1- 4n\exp(-2p(\frac{\epsilon.SPD}{2N})^2)$\newline
$~~~~~~~~>1-4n\exp(-\frac{p\epsilon^2}{2})$, Since $SPD > N$.\newline

If we choose $ p= \frac{2log(4n^3)}{\epsilon^2}$, then \newline
$~~~~~~~~Pr[|RR(v_g|S)-RR(v_a|S)| < \epsilon] >1-\frac{1}{n^2}$.
\end{proof}

Thm. \ref{thm:approxpathcount} shows that the error of PCS w.r.t. GR is bounded by $\epsilon$ with probability $1-\frac{1}{n^2}$ at a single step. The number of samples needed by PCS is $O(\frac{log(n)}{\epsilon})$; this is a factor of $O(log^2n)$ less than the number of samples needed in GS for small-world networks and a factor of $O(\frac{diam^2}{l_{min}^2})$ less in general networks. 

\begin{algorithm}[t]
\scriptsize
 \caption {Path Count with Sampling (PCS)\label{algo:PCS}}
\begin{algorithmic} [1]
 \REQUIRE $G=(V,E,L)$, Approximation error $\epsilon$, Budget $k$
 \ENSURE A subset TS of $k$ nodes 
\WHILE {  $|T| \leq k$ }
\STATE Choose $p =  O(logn/\epsilon^{2})$ pairs of vertices in $P$
\FOR {  $(s,t)\in P$ }
\IF {$l(s)=1$ and $l(t)=1$}
\STATE Perform BFS from $s$ to $t$ and add vertices to $SP_{s,t}$
\ENDIF
\IF {($l(s)=1$ and $l(t)=0$) or ($l(s)=0$ and $l(t)=1$)}
\STATE If ($l(s) = 0$ and $l(t) = 1$) then swap $s$ and $t$
\STATE Perform BFS from $s$ to $\{u|u\in t.gateway\} $
\STATE Add vertices of $SP_{s,u}$ to $SP_{s,t}$; $u\in \{u'|d(s,u')\leq d(s,u_1)\ \forall u_1\in t.gateway\}$
\ENDIF

\IF {$l(s)=0$ and $l(t)=0$}
\STATE Perform BFS from $u\in s.gateway$ to $u'\in t.gateway$
\STATE Add $v_1, v_2$ and vertices of $SP_{v_1,v_2}$ to $SP_{s,t}$; $(v_1,v_2)\in \{(u,u')|d(u,u')\leq d(u_1,u_2),\ \forall u,u_1\in s.gateway,\ \forall u',u_2\in t.gateway\}$
\ENDIF
\STATE $\zeta_{v'}\leftarrow \zeta_{v'}+1$ if $v'\in SP_{s,t}$
\ENDFOR
\STATE $v\leftarrow max_{v'\in V}\{\zeta_{v'}\}$
\STATE $l(v)\leftarrow 0$ and $T\leftarrow T\cup \{v\}$
\STATE Edit $G$: Delete vertex $v$ and add edges (if absent) between\\ its neighbors
\STATE Update the $gateway$ list for each $u\in T$ if necessary
\ENDWHILE
\STATE Return $T$
%
\end{algorithmic}
\end{algorithm}

Algorithm \ref{algo:PCS} (PCS) computes TS based on estimates of number of shortest paths through each vertex. The approximation error $\epsilon$ bounds the difference between reduction by PCS and Greedy (GR) in each iteration. In each of the $k$ iterations, PCS first samples $p$ pairs of nodes from the population of all pairs (the proven approximation still holds when the samples are obtained before the iteration starts as in Alg. \ref{algo:GS}). The overall complexity is $O(kp(m+n))$ (for details of the algorithm and its running time see the Appendix).



\vspace{-1mm}
\section{Experimental Results}
\label{sec:exp}
We evaluate the quality and scalability of our algorithms in both synthetic and real-world networks. We conduct all experiments on $3.30$GHz Intel cores with $30$ GB RAM. All algorithms are implemented in Java. 

\subsection{Datasets}

The real-world datasets for evaluation are listed in Table~\ref{table:data_description}. The \emph{air transportation} (\url{http://www.rita.dot.gov}) data consist of airline flight networks with delays at airports set according to historical flight delays due to circumstances within the airline's control (e.g. maintenance or crew problems, aircraft cleaning, baggage loading, fueling, etc.). We consider average and total delay of flights originating from an airport in the period 01/13-09/15. Our \emph{Traffic} data is from the highway network of Los Angeles, CA~\cite{silva2015}, where the delay at an intersection is defined as the scaled inverse of the observed speed at a given point in time ($1500* 1/speed$). According to this definition the delay values range between $15$ and $80$ (similar to that of the original speeds). 
The Twitter dataset is a social network in which edges correspond to follower relationships among users. We disregard the direction of edges for our analysis. Node delays in this network represent the average inter-arrival time between posts on a given topic. We experiment with different topics described in~\cite{bogdanov2013}. The vertices in the \emph{DBLP} network are authors and the edges represent co-authorship on at least one paper. For DBLP, we assign delays randomly, with values uniformly distributed in multiples of ten between $10$ to $100$. Our goal is to evaluate the scalability of our algorithms on a large real-world network structure.


\begin{table}[h]
\centering
\scriptsize
\begin{tabular}{| c | c | c | c | c | c |}
\hline
\textbf{name}& \textbf{value}& \textbf{$|V|$} & \textbf{$|E|$}\\
\hline
\textbf{Jetblue (JB)}& carrier delay & 63 & 172\\
\hline
\textbf{Southwest (SA)}& carrier delay & 89 & 716\\
\hline
\textbf{American (AA)}& carrier delay & 100 & 363\\
\hline
\textbf{Delta (DA)}& carrier delay & 160 & 553\\
\hline
\textbf{Traffic}& inverse speed & 2K & 6K\\
\hline
\textbf{Twitter-Celeb} & posting delay &  28K & 240K\\
\hline
\textbf{Twitter-Politics} & posting delay &  100K & 7.4M\\
\hline
\textbf{Twitter-Science} & posting delay &  100K & 3.3M\\
\hline
\textbf{DBLP}& random & 1.1M  & 5M\\
\hline
\end{tabular}
\vspace{-1mm}
\caption{\small Dataset description and statistics. \label{table:data_description}}
\vsa
\vspace{-2mm}
 \end{table}



\begin{figure}[t]
\vsa
    \centering
    \vsc
    \subfloat[Traffic-Uniform]{\includegraphics[width=0.23\textwidth]{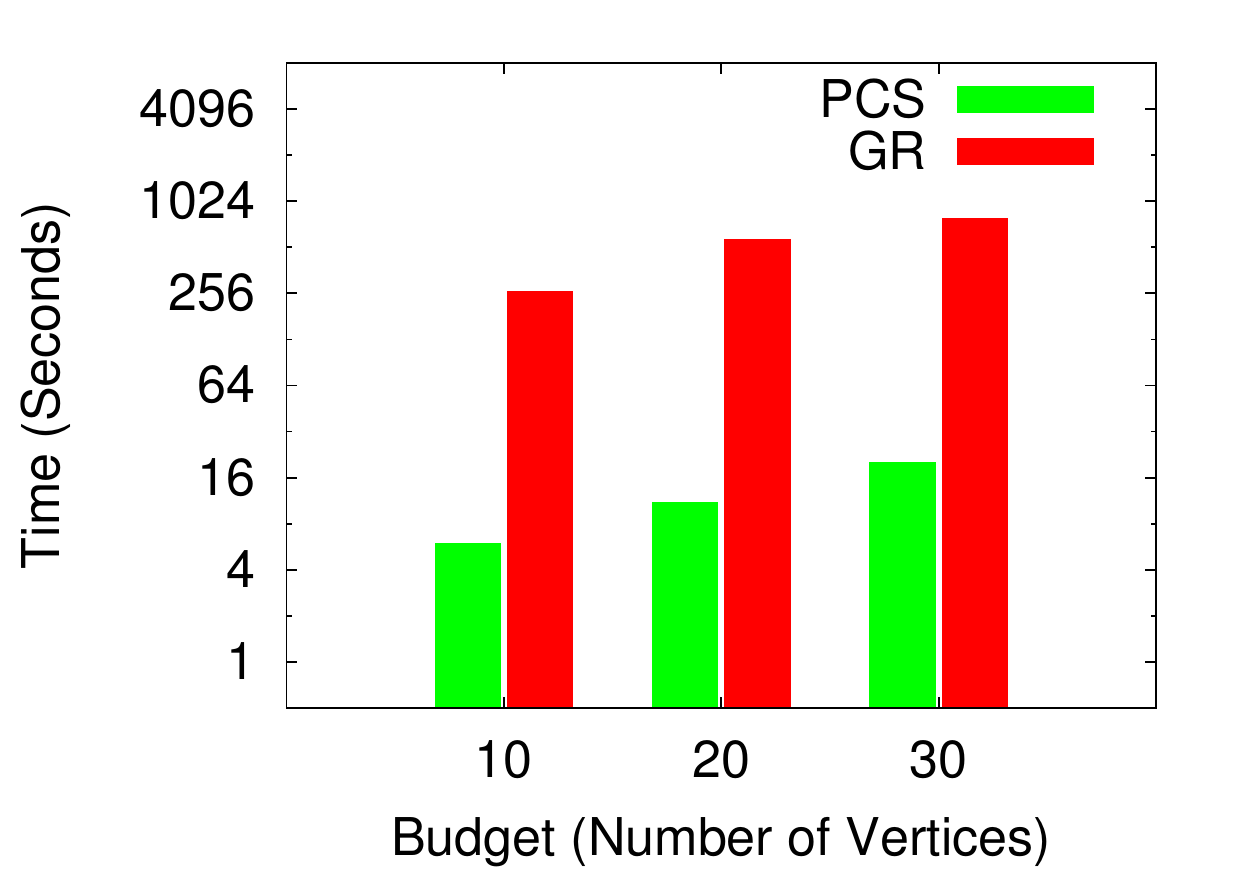}\label{fig:PCSGR_effi_traffic}}
    \subfloat[Traffic-Uniform]{\includegraphics[width=0.23\textwidth]{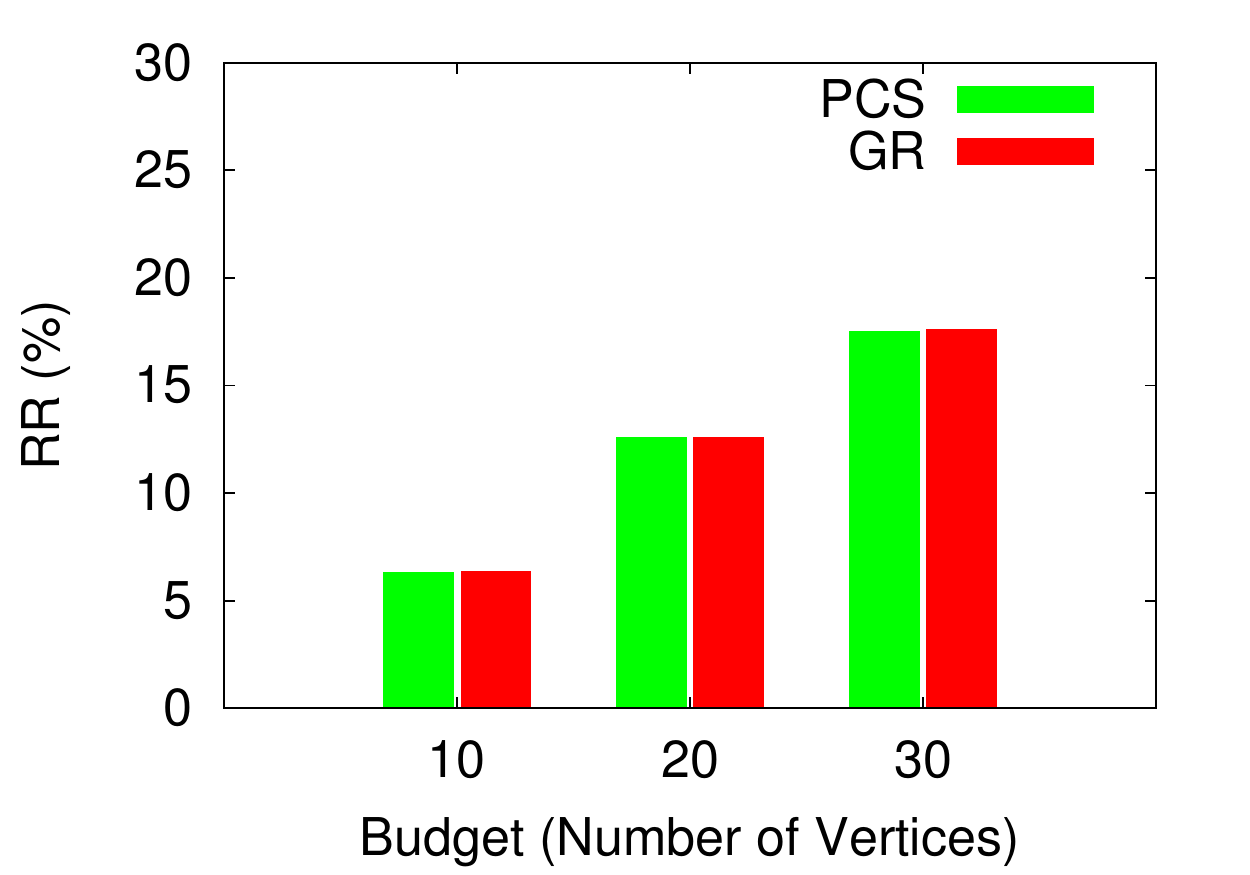}\label{fig:PCSGR_qual_traffic}}\\ \vsc
    \subfloat[Traffic-Delay]{\includegraphics[width=0.23\textwidth]{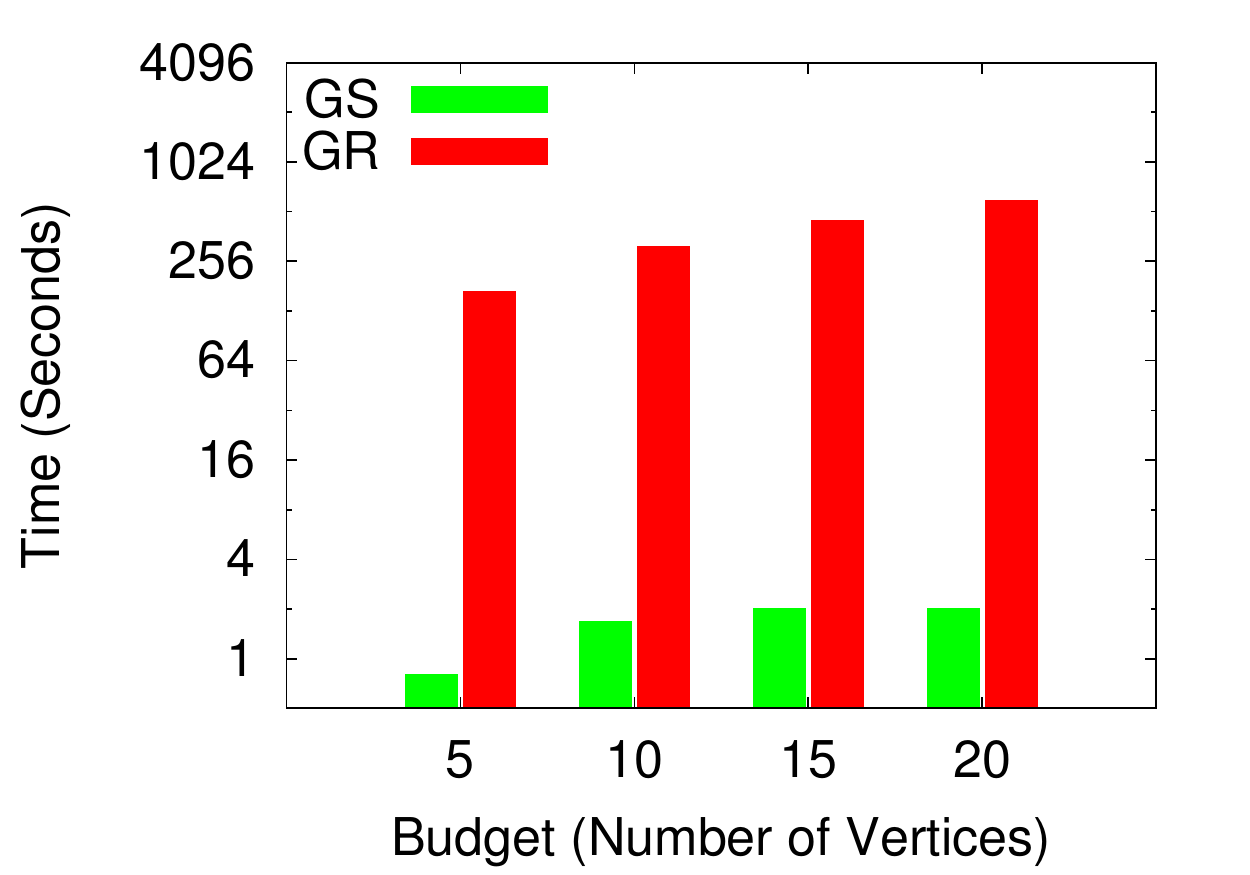}\label{fig:GSGR_effi_traffic_sp1}}
    \subfloat[Traffic-Delay]{\includegraphics[width=0.23\textwidth]{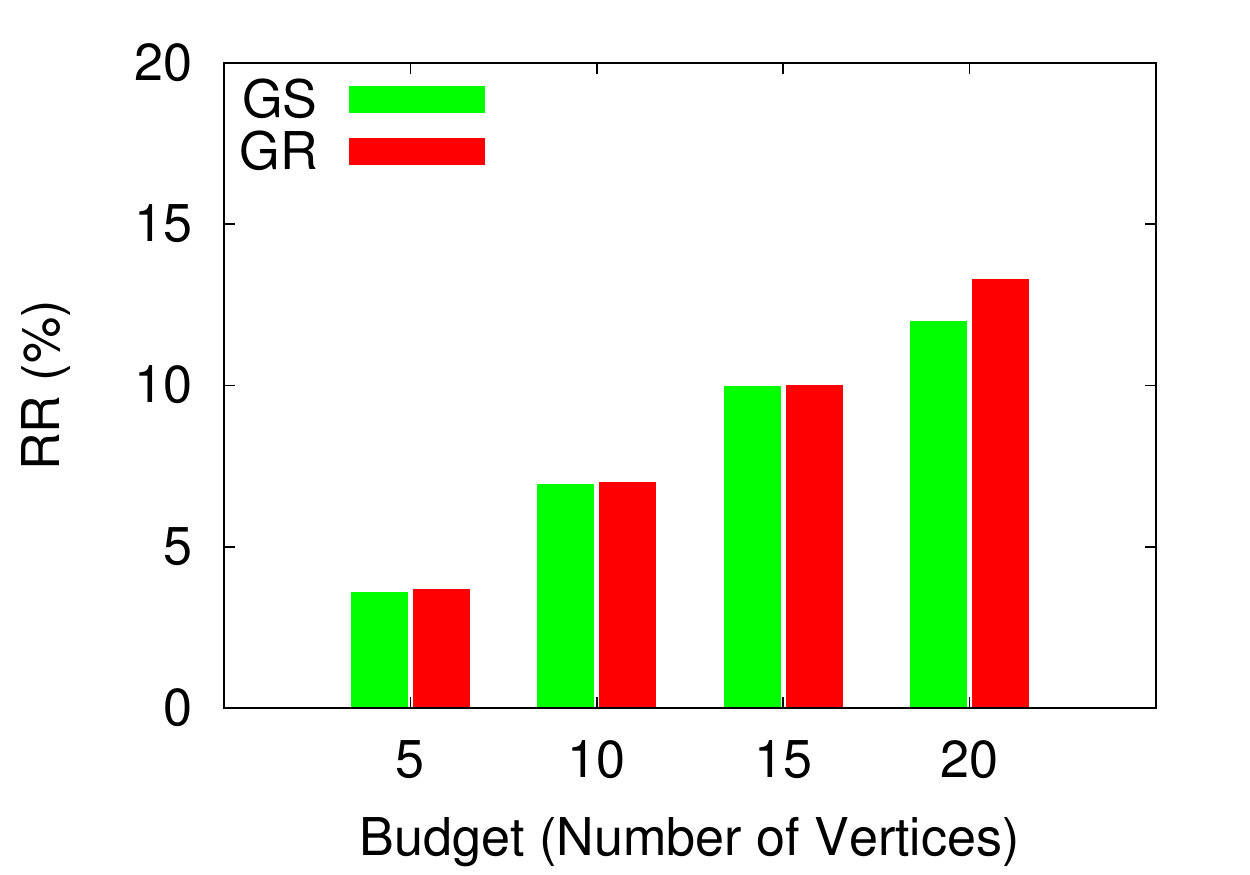}\label{fig:GSGR_qual_traffic_sp1}}
\vspace{-2mm}
    \caption{\small Uniform model: (a-b) Execution time and relative reduction of Greedy (GR) and Path Counting (PCS); General model: (c-d) Greedy (GR) and sampling-based Greedy (GS) for Traffic.\label{fig:GSGR_comparison}}
    \vspace{-1mm}
\vsb \end{figure}

\subsection{Quality of sampling compared to Greedy}
We report the number of samples in PCS and GS as $c*logn$, where $c$ is related to the expected error $\epsilon$ in Thms.~\ref{thm:approxgeneral} and \ref{thm:approxpathcount}. Unless stated otherwise, we use $c=10$. 

First, we compare our sampling schemes GS and PCS with Greedy (GR) in order to evaluate the effect of sampling on quality, which we theoretically analyze in Thms. \ref{thm:approxgeneral}, \ref{thm:approxpathcount}. To enable the comparison, we use small real datasets due to the limited scalability of GR. 
The quality of the compared algorithms is quantified as the Relative Reduction (RR) of SPD, while efficiency---in terms of wall-clock time. We use $3.5logn$ samples for GS and PCS in these experiments.

In all experiments, our sampling schemes achieve similar quality as that of Greedy (GR), while taking close to two orders of magnitude less time. In the uniform model, the difference in quality between our sampling scheme PCS and GR does not exceed $.05\%$ in the traffic dataset (Fig.~\ref{fig:PCSGR_qual_traffic}), while PCS takes only $2\%$ of the time taken by GR (Fig.~\ref{fig:PCSGR_effi_traffic}). 
This trend persists in the case of the general delay model for which we employ our sampling-based Greedy (GS). 
We compare GS and GR on multiple snapshots of the Traffic dataset and report average completion times and quality in Fig. \ref{fig:GSGR_effi_traffic_sp1},\ref{fig:GSGR_qual_traffic_sp1}. GS is $200$ times faster than GR and its solution's RR is only $0.1\%$ worse than that of GR Figs.~\ref{fig:GSGR_effi_traffic_sp1},\ref{fig:GSGR_qual_traffic_sp1}. We observe similar results in synthetic data (see the Appendix).


\begin{table}[t]
\vspace{-1mm}
\centering
\scriptsize
\begin{tabular}{| c | c | c | c | c | c |}

\hline
&\multicolumn{3}{c|}{\textbf{Quality RR(\%)}} & \multicolumn{2}{c|}{\textbf{Time [sec.]}}\\
\hline

\textbf{Airlines}& GR& GS& High-Delay & GS & GR\\
\hline
\textbf{JB}& $68.3$ & $68.2$ &  $63.9$ & $0.1$ & $0.22$\\
\hline
\textbf{SA}&$58.5$ &  $58.3$ &  $58.5$ & $0.4$ & $0.53$\\
\hline
\textbf{AA}& $55$ & $54.3$ &  $5.45$ & $0.06$ & $0.9$\\
\hline
\textbf{DA}& $48.9$ & $48.4$ &  $4.44$ & $0.4$ & $0.8$\\
\hline
\end{tabular}
\vspace{-1mm}
\caption{ \small Comparison on the airlines dataset. For budget $5$, columns 2-4 show the RR for GR, GS, and High-Delay respectively and columns 5-6 show running times. \label{table:results_airlines}}
\vspace{-3mm}
 \end{table}

In the airline data we assign node delay as the average airline-induced delay of all historical flights originating from an airport. Table~\ref{table:results_airlines} summarizes our results. GS matches the quality of GR in a fraction of the computation time. For AA, our solution selects important nodes which are central and also have significant delays. The solution contains hub airports like those in Dallas, Charlotte and Phoenix, since improving these airports makes them more central in the network and collectively improves the total end-to-end delay by $55\%$~\ref{fig:AA}. Surprisingly, considering only node delay (baseline "High-Delay") has significant disadvantages. Some non-central airports in the AA and DA networks have significant average delays and hence disregarding the network position results in a $10$-fold worse quality of the High-Delay baseline. All other baselines do not exceed the quality of GS for varying budgets and airlines.

\begin{table}[t]
\centering
\scriptsize
\begin{tabular}{| p{1.2cm} | p{1.5cm} |p{1.8cm}| p{2.3cm} |}
\hline
\textbf{Algs.}& \textbf{Selection} & \textbf{Uniform} & \textbf{General}\\
\hline
Random & $10$ trials  & O(n) & O(n) \\\hline
Deg-Cen & Degree & O(nlogn) & O(nlogn) \\ \hline
High-Delay & Delay & O(nlogn)) &O(nlogn) \\\hline
Path-Cen\cite{dilkina2011} & Delay$\times$\#SPs & O(n(m+n)) & O(n(m+nlogn))\\\hline
It-Path-Cen\cite{dilkina2011} &  Delay$\times$\#SPs w. updates  & O(kn(m+n)) & O(kn(m+nlogn)) \\\hline
PCS & Alg.~\ref{algo:PCS} & O(kc(m+n)logn) & \\\hline
GS & Alg.~\ref{algo:GS}  &  & O(kc(m+nlogn)logn) \\\hline
GR & Alg.~\ref{algo:GR}  & O(k$n^3$) & O(k$n^3$) \\\hline
\end{tabular}
\vspace{-1mm}
\caption{\small Theoretical complexity of compared algorithms.}\label{tab:baselines}
\vsc \end{table}


\begin{figure*}[t]
    \centering
\subfloat[Twitter-Celeb]{\includegraphics[width=0.2\textwidth]{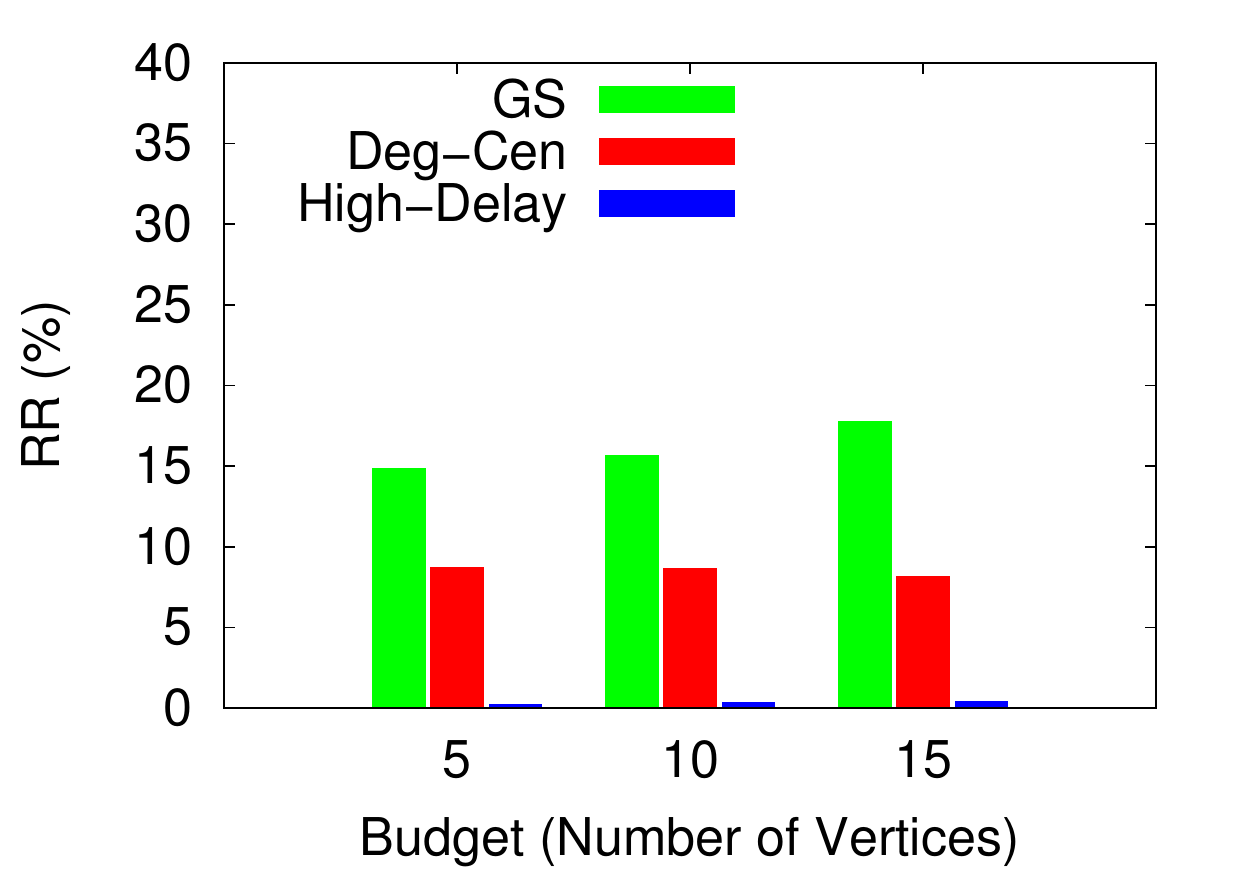}\label{fig:twitter_celeb_qual}}
\subfloat[Twitter-Politics]{\includegraphics[width=0.2\textwidth]{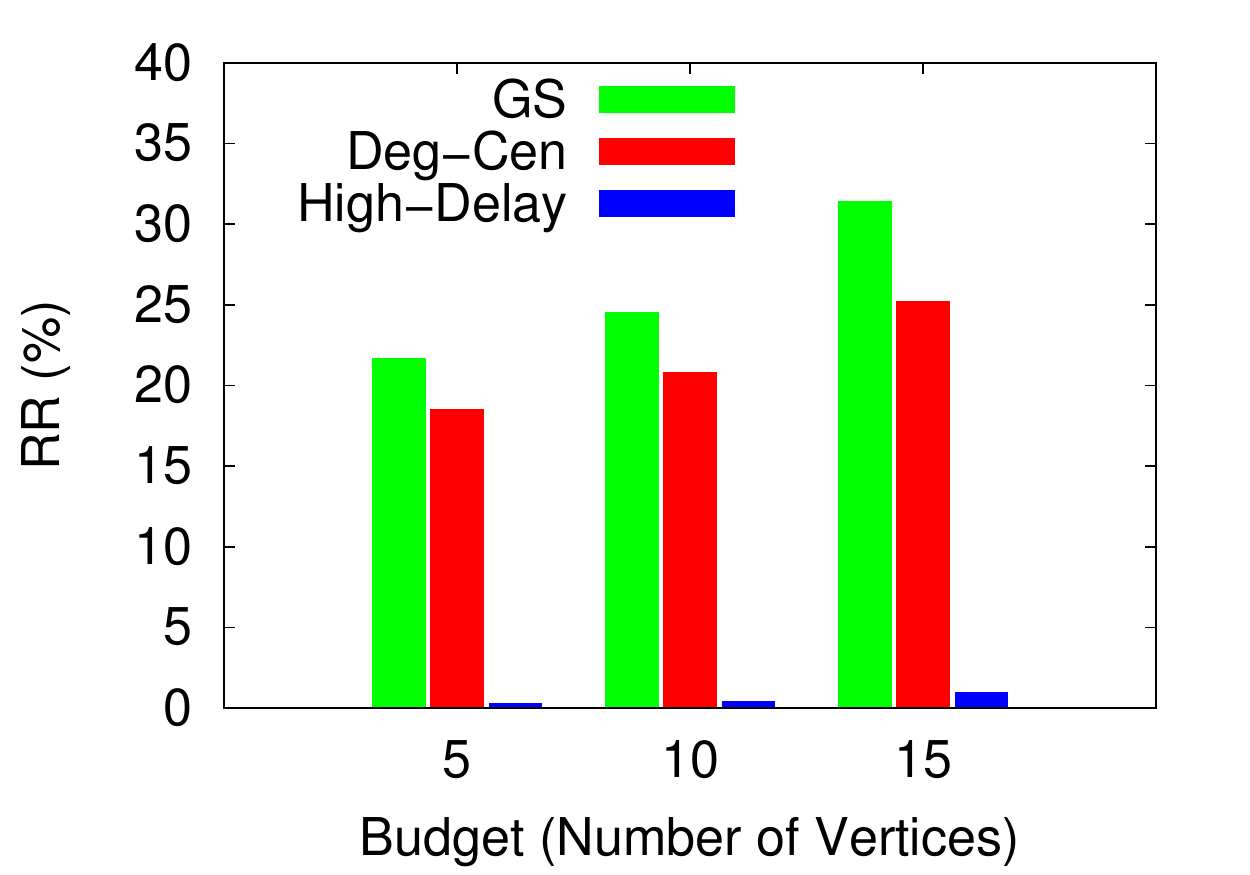}\label{fig:twitter_politics_qual}}
\subfloat[Twitter-Science]{\includegraphics[width=0.2\textwidth]{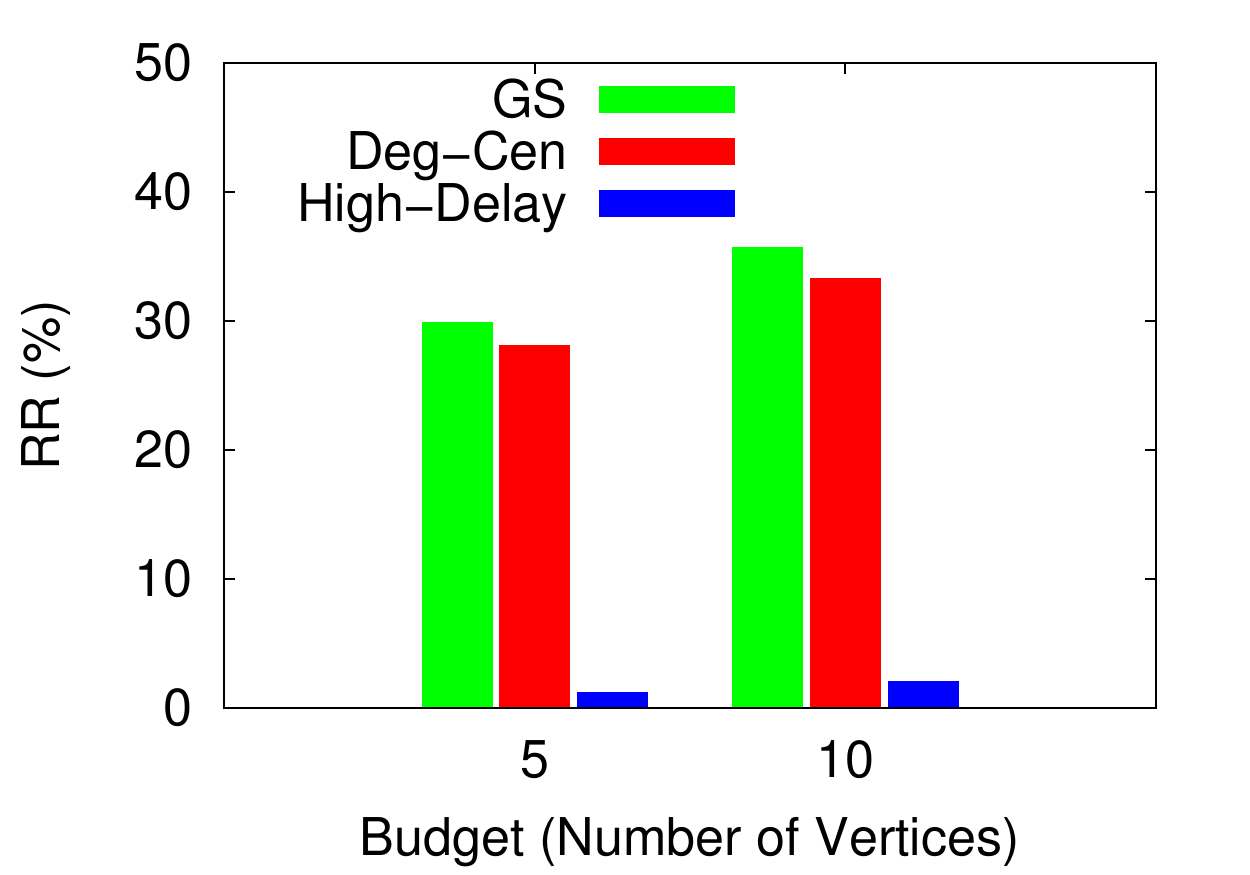}\label{fig:twitter_science_qual}}
\subfloat[DBLP-Random]{\includegraphics[width=0.2\textwidth]{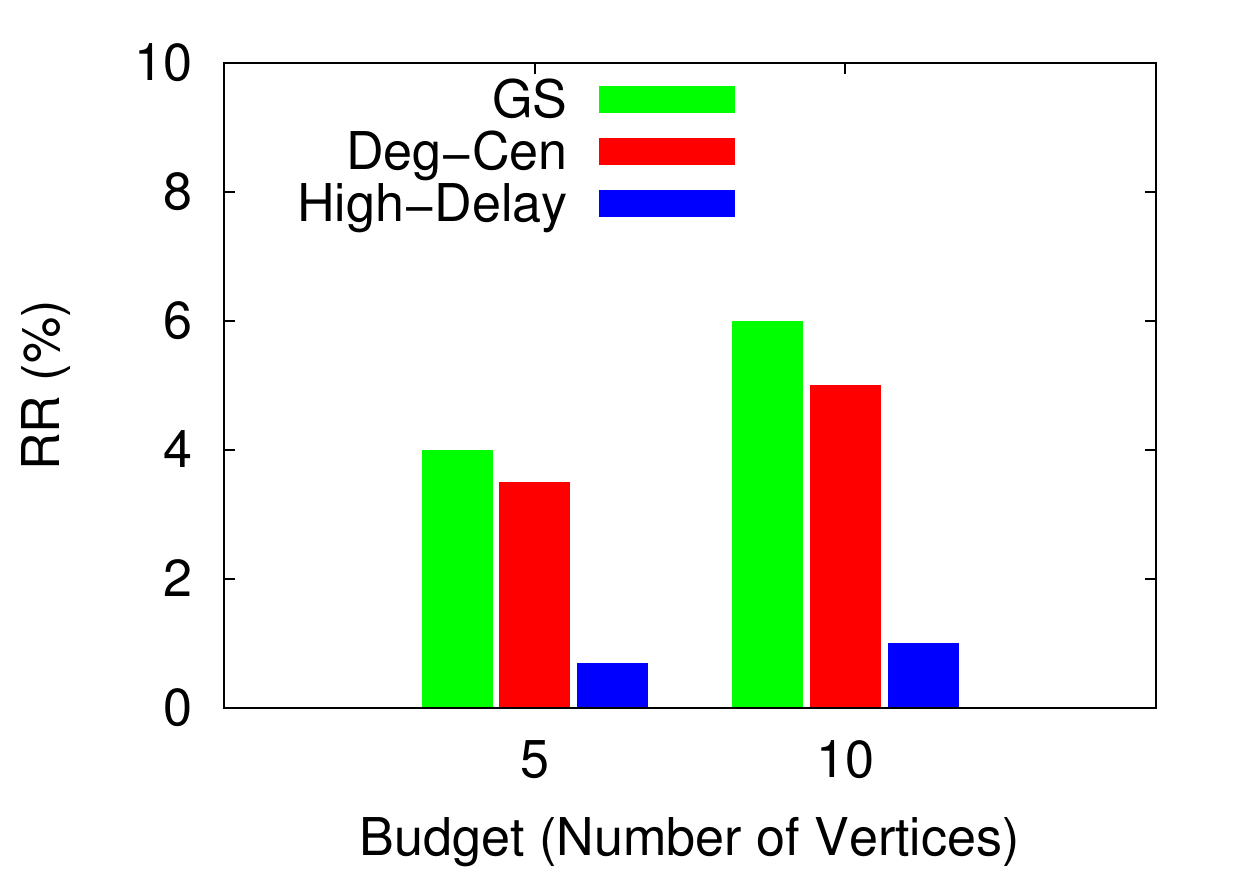}\label{fig:dblp_random_qual}}
\subfloat[Uniform]{\includegraphics[width=0.2\textwidth]{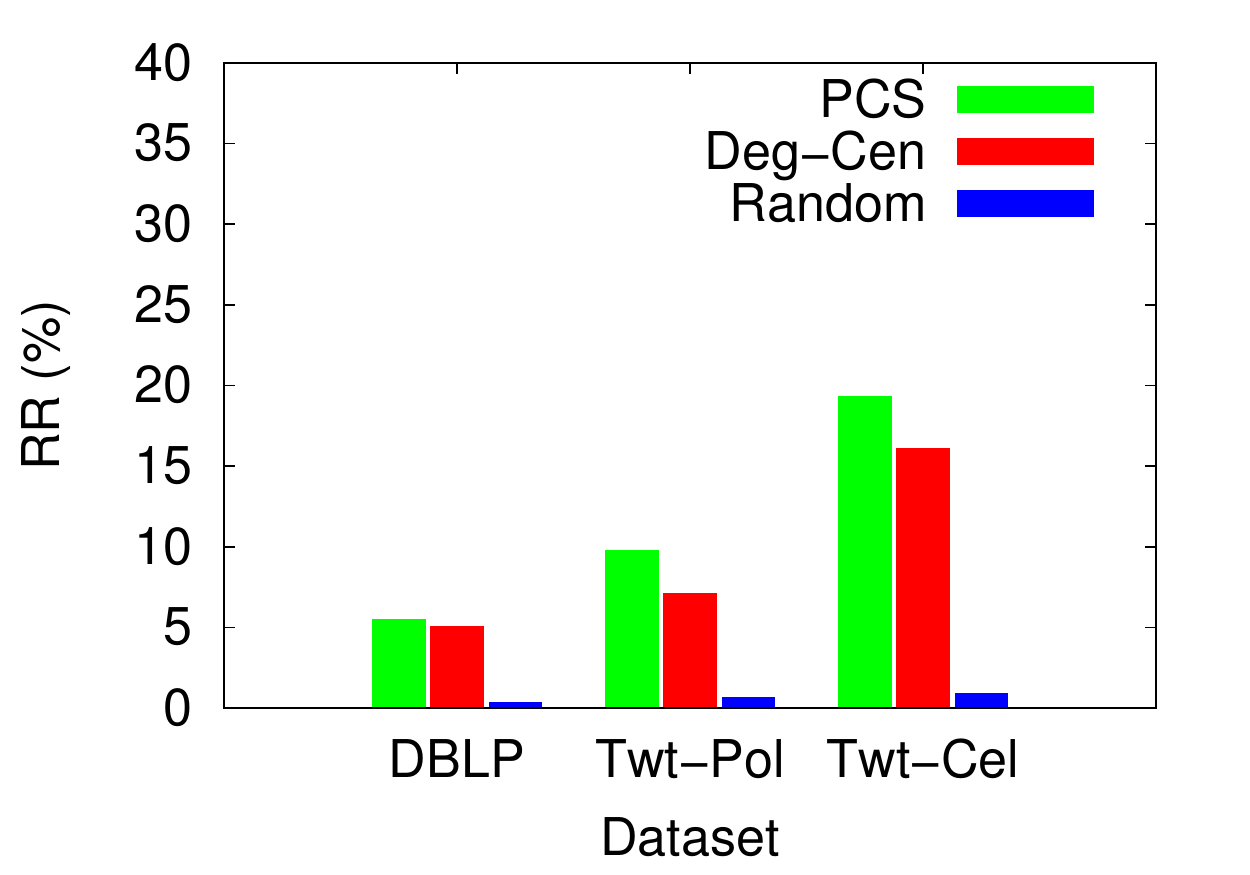}\label{fig:PCS_budget5}}
\vspace{-2mm}
   \caption{ \small (a-d) General Model:  Quality of GS and baselines on Twitter-Celeb, Twitter-Politics, Twitter-Science and DBLP. (e) Uniform Model: Quality of PCS and baselines for budget=5 on DBLP, Twitter-Politics and Twitter-Celeb.\label{fig:large}}
\vsc \end{figure*}


\vspace{-1mm}
\subsection{Comparison to baselines}\label{sec:baselines}
Next we evaluate the performance of our algorithms in comparison to alternatives. We consider several baseline methods, listed in Tab.~\ref{tab:baselines} along with their theoretical running times and those of our algorithms. Some baselines select TS vertices based on local properties: degree (Deg-Cen) or delay (High-Delay); while others---based on the product of global path centrality and delay (Path-Cen and It-Path-Cen~\cite{dilkina2011}). It-Path-Cen updates the number of shortest paths through a vertex after each selection of a target vertex. 

\begin{figure}[h]
\vspace{-5mm}
    \centering
    \subfloat[Traffic-Uniform]{\includegraphics[width=0.22\textwidth]{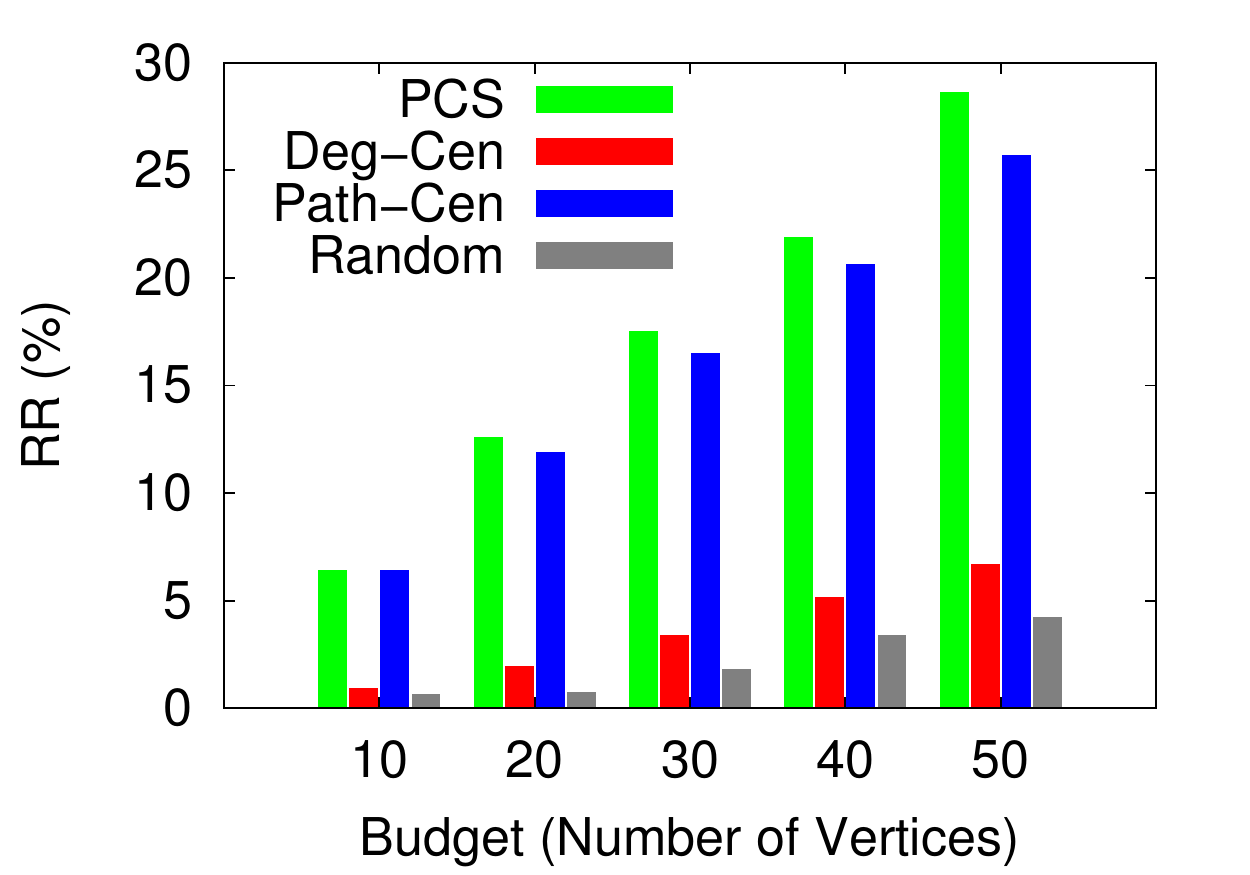}\label{fig:PCS_qual_traffic}}
    \subfloat[Traffic-Delay]{\includegraphics[width=0.22\textwidth]{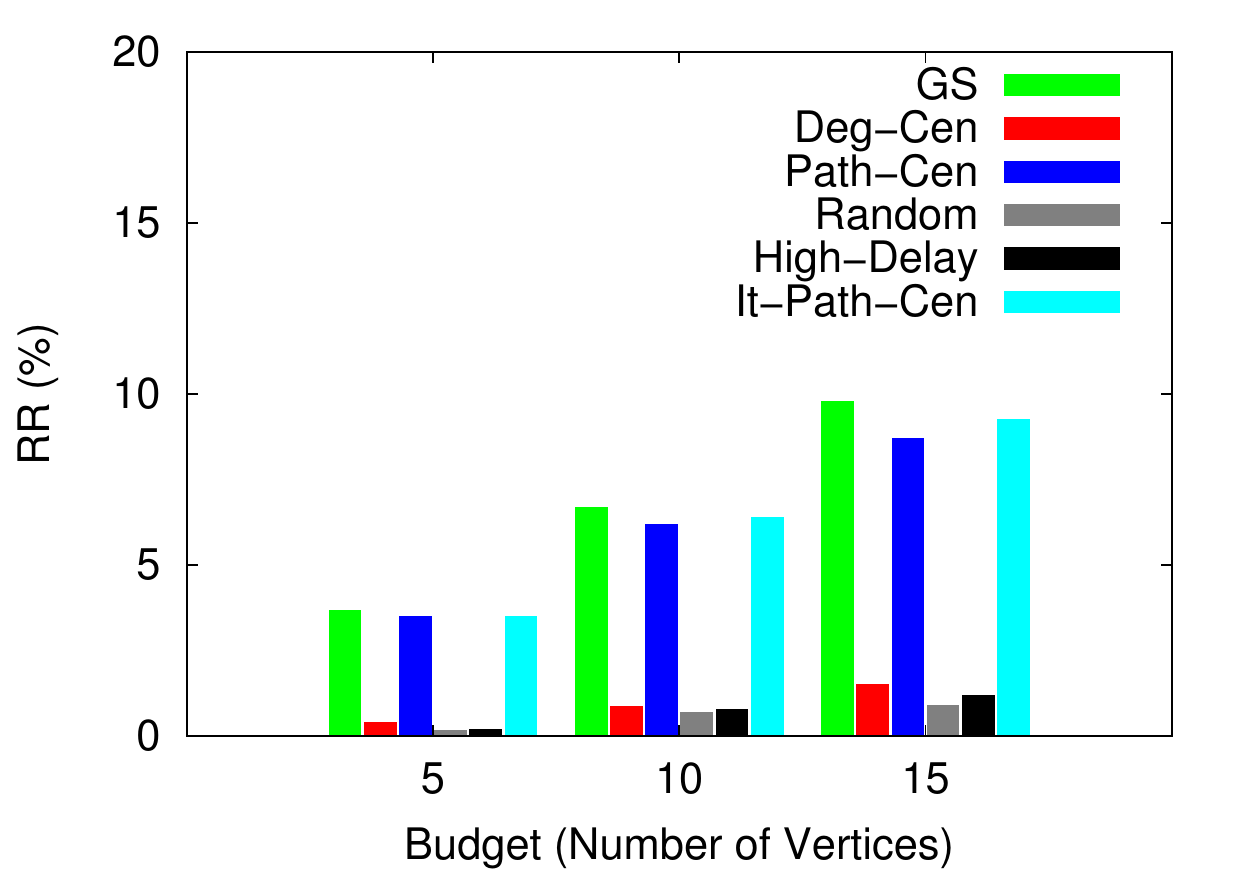}\label{fig:GS_qual_traffic_sp74}}
 \vspace{-2mm}
    \caption{\small Comparison of baselines on Traffic: (a) PCS in the \textit{Uniform Model}; and (b) GS in the \textit{General Model}.  \label{fig:baselines_small_large}}
\vspace{-4mm} \end{figure}

Fig.~\ref{fig:PCS_qual_traffic} presents the RR of competing techniques on the Traffic network with uniform delays using $50log(n)$ samples for PCS. The baseline algorithm It-Path-Cen for the setting of uniform delays is equivalent to the exhaustive greedy GR and this comparison is already available in Fig.~\ref{fig:GSGR_comparison}. On this relatively small network, PCS produces at least $6\%$ better RR than the best alternative Path-Cen. Note, that in this setting simple alternatives such as Random and Deg-Cen, although fast, have unacceptably low quality.

Next we associate the delays (general model) at road intersections (nodes) measured at different times,  and compare with competing techniques. As the results on different snapshots are similar, we show a representative figure on quality (fig. \ref{fig:GS_qual_traffic_sp74}). Using $10log(n)$ samples, GS produces higher RR than both Path-Cen and It-Path-Cen, with up to 1 and 2 orders of magnitude running time improvement respectively (plots omitted due to space constraint). 
Unlike It-path-Cen, GS does not target nodes only based on the number of shortest paths through them, but estimates the improvement of nodes given those already in the target set and achieves a better quality.

In larger graphs computing the exact quality (reduction of SPD) has high computational cost as it requires computing all-pair shortest paths. Hence, in order to evaluate the competing techniques, we estimate RR based on a representative sample of pairwise shortest path lengths. We randomly sample $1000$ pairs $10$ times and average the quality results. We evaluate the competing techniques on DBLP and the Twitter datasets.

First we evaluate the running time in comparison to the best-quality competing techniques in Tab.~\ref{tab:scale_dilkina}. As expected based on their theoretical complexity, Path-Cen and It-Path-Cen~\cite{dilkina2011} do not scale well for large datasets. Our algorithms complete in at most $36$ min, while the alternatives take close to or more than $5$h on the same input (DNF stands for ``does not finish in $5$ hours''). Twitter-50K in this experiment is a subgraph of the Twitter-Politics network involving 50K nodes, while in the uniform-delay setting we evaluate PCS on a subgraph of DBLP of 100K nodes (DBLP-100K). 

Since the methods by Dilkina et al.~\cite{dilkina2011} do not scale for large graphs we compare the quality of our sampling schemes with that of Deg-Cen and High-Delay on the full large-graph datasets (High-Delay is replaced by Random in the uniform model experiments as delays in this setting are equal). To enable even higher scalability for GS, we use multi-threading with $4$ threads to compute the shortest paths (steps $4-7$ in Alg.~\ref{algo:GS}). 
For the rest of the experiments, we use GS(4T).

\begin{table} [t]
\centering
\scriptsize
\begin{tabular}{| c | c | c | c |c |}
\hline
\textbf{Data}& PCS& GS & Path-Cen & It-Path-Cen\\
\hline
DBLP-100K (unif.)&$2$ m& $-$ & $4.5$ h & DNF\\
\hline
Twitter-50K (gen.)&$-$  &$36$ m& DNF & DNF\\
\hline
\end{tabular}
\vspace{-1mm}
\caption{\small Running time comparison of our algorithms and those proposed in~\cite{dilkina2011} (budget = $5$).}\label{tab:scale_dilkina}
\vspace{-3mm}
 \end{table}

\begin{table} [t]
\vspace{-1mm}
\centering
\scriptsize
\begin{tabular}{| c | c | c | c |}
\hline
\textbf{Data}& Algo. & $\#$Sample & Time(min)\\
\hline
Twitter-Celeb (unif.)& PCS & $148$ & $2.5$\\
\hline
Twitter-Politics (unif.)& PCS & $166$ & $3.5$\\
\hline
DBLP (unif.)& PCS & $200$ & $9$\\
\hline
Twitter-Celeb (gen.)& GS(4T) & $148$ & $1$\\
\hline
Twitter-Politics (gen.)& GS(4T) & $64$ & $19$\\
\hline
Twitter-Science (gen.)& GS(4T) & $64$ & $12$\\
\hline
DBLP (gen.)& GS(4T) & $40$ & $62$\\
\hline
\end{tabular}
\vspace{-1mm}
\caption{\small Running times of PCS and GS(4T) with budget = $5$.}\label{tab:time_GS(4T)_PCS}
\vspace{-2mm}
\vsc \end{table}

Tab.~\ref{tab:time_GS(4T)_PCS} presents the running times of our algorithms in both the uniform and general delay settings together with the number of sampled pairs of each run over the full networks. The number of samples $clog(n)$ depends on both the size of the network and the constant $c$ (which we set to values not exceeding $20$). In the uniform scenario (datasets denoted \textit{unif.}), we assume delay $1$ associated with nodes. PCS completes in the order of minutes in uniform-delay networks and GS within 62 minutes on the largest DBLP dataset.

Figs.~\ref{fig:twitter_celeb_qual}-\ref{fig:dblp_random_qual} show the quality of GS in Twitter and DBLP. In all cases GS performs better than alternatives for increasing budget, since the alternatives fail to capture the dependency between upgraded nodes and are limited to local node properties. We get higher quality in Twitter-Celeb as we use relatively higher number of samples. The RR in DBLP is relatively low due to the large network size and disproportionately small budgets ($5$ and $10$ out of $1.1M$ nodes). Fig.~\ref{fig:PCS_budget5} presents an analogous comparison for uniform delay. Our technique PCS outperforms alternative in Twitter (budget $k=5$). In DBLP, Deg-Cen has similar quality to that of PCS since authors of high degree tend to be central.

The only parameter in our techniques is the number of samples which provides a natural trade-off between running time and quality. Our analysis shows that we usually need only small fraction of sampled pairs to match the performance in greedy in both real-world and synthetic data. Details of this analysis are available in the Appendix.





\vspace{-2mm}
\section{Previous Work}

Paik et al.~\cite{paik1995} first introduced a set of design problems in which vertex upgrades improve the delays of adjacent edges. 
Later, Krumke et al.~\cite{krumke1998} generalized this model assuming varying costs for vertex/edge upgrades and proposed to minimize the cost of the minimum spanning tree. Lin et al.~\cite{lin2015} also proposed a delay minimization problem with weights associated with undirected edges. The above formulations are different from ours as in our case delays are associated with vertices. The problems considered in Dilkina et al.~\cite{dilkina2011} are closer to our setting, in that they correspond to a general version of DMP. As we show in our comparative evaluation, our methods dominate those proposed by the authors in both scalability and quality (Section \ref{sec:baselines}). 

Delay minimization and other global objectives (vertex eccentricity, diameter, all-pairs shortest paths etc.) have been previously addressed by \emph{edge addition} \cite{meyerson2009,papagelis2011,parotisidis2015selecting,demaine2010,perumal2013}. 
Meyerson et al.~\cite{meyerson2009} designed approximation algorithms for single source and all pair SP minimization. Demaine et al.~\cite{demaine2010} minimize a network diameter and node eccentricity by adding shortcut edges with a constant factor approximation algorithm. 
Prior work also considers eccentricity minimization in a composite network where a social node connectivity is improved by additional communication network edges~\cite{perumal2013}. All the above problems, however, are based on adding new edges i.e., structural modification, and hence are complementary to our setting. In different applications, node-based and edge-based schemes could be adopted individually or in unison.  

Other related problems involve efficient computation of betweenness centrality. In~\cite{riondato2014}, the authors compute top $k$ nodes based on betweenness centrality via sampling. The group betweenness problem has been solved in almost linear time by Yoshida et al.~\cite{yoshida2014} by a high quality probabilistic approximation algorithm. We develop  similar sampling schemes for a different problem of network design where the metric is based on shortest path centrality of vertices. 
\vspace{-1mm}
\section{Conclusions}
\vspace{-1.5mm}
In this paper, we studied and proposed solutions for the network design problem of node delay minimization. The problem has diverse applications in a variety of domains including social, collaboration, transportation and communication networks. We proved that the problem is NP-hard even for equal node delays. We proved approximation guarantees for a restricted formulation via randomized schemes based on VC dimension theory. We proposed and evaluated high-quality methods for the problem based on sampling that scale to large million-node instances and consistently outperform existing alternatives. We evaluated our approaches on several real-world graphs from different genres. 
We achieved up to $2$ orders of magnitude speed-up compared to alternatives from the literature on moderate-size networks, and obtained high-quality results in minutes on large datasets while competitors from the literature require more than $4$ hours.


\section{Acknowledgments}
Research was sponsored by the Army Research Laboratory and accomplished under Cooperative Agreement Number W911NF-09-2-0053 (the ARL Network Science  CTA). The views and conclusions in this  document are those of the authors and should not be interpreted as representing the official policies, either expressed or implied, of the Army Research Laboratory or the U.S. Government. The U.S. Government is authorized to reproduce and distribute reprints for Government purposes
notwithstanding any copyright notation here on. We also would like to thank Arlei Silva for helpful discussions.

\bibliographystyle{abbrv}
\bibliography{icdm_extended}  

\section{appendix}

\textbf{Proof of Theorem 1}
\begin{proof}
We outline a reduction from the Set Cover problem. Consider an instance of the NP-complete Set Cover problem, defined by a collection of subsets $S_{1},S_{2},...,S_{m}$ for a universal set of items $U=\{ u_{1},u_{2},...,u_{n} \}$. The problem is to decide whether there exist $k$ subsets whose union is $U$. 
To define a corresponding DMP instance, we construct an undirected graph with $n+m+mp$ nodes: there are nodes $i$ and $j$ corresponding to each set $S_{i}$ and each element $u_{j}$ respectively, and an undirected edge $(i,j)$ whenever $u_{j}\in S_{i}$. Every $S_{i}$ is connected to $S_{j}$ when $i \neq j$ and  $i,j \in {1,2,...,m}$. Every $u_{i}$ is connected to other $u_{j}$ when $i \neq j$ and  $i,j \in {1,2,...,m}$. There are $p$ vertices (with degree $1$) attached to each $S_{i}$. The $j$-th vertex (among these $p$ vertices) attached with every $S_i$ makes the set $A_{j}$. All vertices have delay of $1$. Intuitively, the construction makes the vertices in set $S$ more likely to be chosen in TS. 

It is easy to see that vertices in $A_i$ will not be in TS. Next we prove that the minimum reduction (quantity $A$) by any vertex from $S$ is larger than the maximum reduction (quantity $B$) by vertices from $U$. As the delays are $1$, the reduction depends on number of shortest paths. The maximum  number of shortest paths (quantity $A$) that pass through any vertex in $U$ after being chosen in TS is less than the minimum (quantity $B$) of the same through any vertex in $S$. Quantity $A$ is exactly $(n-1)(m+pm)$, while $B$ is $p(m+n+p(m-1))$. A choice of $p=mn$ makes $B$ larger than $A$. Hence a choice of nodes from $S$ is always preferable.
 
 Tabs. \ref{tab:tableforcoveruniform} and \ref{tab:tablenoncoveruniform} show \textit{SPD} computation between nodes in the sets $S,U,$ and $A_i$ in two different cases. The quantities are as follows: \small $W_1=k(2(m-k)+k-1)+(m-k)(3(m-k-1)+2k),W_2=k(2(m-k)+k-1+1)+(m-k)(3(m-k-1)+2k+2),  W_3=kn+2(m-k)n, W_4=k(m-k)+(m-k)(1(k+1)+2(m-1-k)), W_5= km+2(m-k)(m-1)+1(m-k)$.
 
 \normalsize
 Our claim is as follows: the Set Cover problem is equivalent to deciding if there is a set of $k$ vertices whose upgrade leads to $SPD\leq X$ (where $X=(m+n-1) (-k+m+n)+m p^2 (3 m-2 k)+p (k (4-3 m)+m (4 m-5))+2p(2mn-kn)$, sum of all the elements in Tab. \ref{tab:tableforcoveruniform}). 
For a ``yes''-instance of the Set Cover problem, we show that all $k$ TS vertices correspond to selected sets in the set cover and we achieve the effective SPD of $X$. 
For a ``no''-instance of the corresponding Set Cover problem,
the argument is as follows. For the TS we choose $k$ nodes from set $S$ as we have already proved, no vertex from set $U$ or $A_i$ can be in TS. For a ``no''-instance Table \ref{tab:tablenoncoveruniform} shows the desired \textit{SPD}. Comapring two tables, it is evident that the \textit{SPD} in Table \ref{tab:tablenoncoveruniform} is greater than $X$. If the corresponding Set Cover problem has a set cover $\leq k$, then only the \textit{SPD} is reduced to $X$. Hence the claim is true and the problem is NP-hard.
 In conclusion, DMP is NP-hard under the general model.
 \end{proof}
  \begin{table}[ht]
\vspace{-2mm}
\centering
\scriptsize
\begin{tabular}{ |c|c|c|c|c|c|c|}
\hline
$ *** $ & $A_1$ & $A_2$&$...$&$A_p$& $S$ & $U$  \\ 
\hline
 $A_1$ & $W_1$ & $W_2$&$...$&$W_2$& $W_5$ & $W_3$ \\ 
\hline
$A_2$ & $W_2$ & $W_1$&$...$&$W_2$& $W_5$ & $W_3$ \\ 
\hline
$...$ & $...$ & $...$&$...$&$...$& $...$ & $...$ \\ 
\hline
$A_p$ & $W_2$ & $W_2$&$...$&$W_1$& $W_5$ & $W_3$ \\ 
\hline
$S$ & $W_4$ & $W_4$&$...$&$W_4$& $(m-1)(m-k)$ & $n(m-k)$ \\ 
\hline
$U$ & $W_3$ & $W_3$&$...$&$W_3$& $mn$ & $(n-1)n$ \\ 
\hline
\end{tabular}
\vspace{-2mm}
\caption{ Sum of shortest path delays when size of set cover is  $\leq k$ and delays of nodes which forms the set cover are reduced to $0$. \label{tab:tableforcoveruniform}}
\vsb \end{table}

\begin{table}[ht]
\centering
\scriptsize
\begin{tabular}{ |c|c|c|c|c|c|}
\hline
$ *** $ & $A_1$ & $A_2$&$A_p$& $S$ & $U$  \\ 
\hline
$A_1$ & $W_1$ & $W_2$&$W_2$& $W_5$ &  $\geq W_3$ \\ 
\hline
$A_2$ & $W_2$ & $W_1$&$W_2$& $W_5$ &  $\geq W_3$ \\ 
\hline
$...$ & $...$ & $...$&$...$& $...$ & $...$ \\ 
\hline
$A_p$ & $W_2$ & $W_2$&$W_1$& $W_5$ &  $\geq W_3$ \\ 
\hline
$S$ & $W_4$ & $W_4$&$W_4$& $(m-1)(m-k)$ & $>n(m-k)$ \\ 
\hline
$U$ & $\geq W_3$ & $\geq W_3$&$\geq W_3$&  $>mn$ & $(n-1)n$ \\ 
\hline
\end{tabular}
\vspace{-2mm}
\caption{Sum of shortest path delays when size of set cover is $>k$ and delays of arbitrary k nodes from $S$ set to $0$. 
\label{tab:tablenoncoveruniform}}
\vsa
\vsb \end{table}

\vspace{3mm}
\textbf{Approximability: Probabilistic approximation}

Let $(U,R)$ be a set system, where $U$ is a finite set and $R$ is a collection of subsets of $U$. A set $W$ is \textit{shatterable} in $R$ if and only if for any subset $W'$ of $W$, there exists $R_i\in R$ such that $W\cap R_i=W'$. The VC-dimension of the set system is defined as the largest integer $d$ such that no subset of $U$ of size $d+1$ can be shattered. In addition, given a parameter $\epsilon \in [0,1]$, a set $U' \subset U$ is called an \textit{$\epsilon$-net} on $(U,R)$ if for any set $R_i\in R, |R_i|\geq \epsilon|U|$, $U'\cap R_i \neq \emptyset$. Intuitively, an \textit{$\epsilon$-net} is a set such that any sufficiently large subset (parametrized by $\epsilon$) in the set system has common elements with. Next, we introduce some established results from VC theory that we use in our analysis.
\begin{lemma}
\label{lemma:net}
 \textbf{$\epsilon$-net~\cite{haussler1986}:} For any set system with bounded VC-dimension $d$, a randomly drawn sample of size $O(\frac{d}{\epsilon}log\frac{d}{\epsilon}+\frac{1}{\epsilon}log\frac{1}{\delta})$ is an $\epsilon$-net with probability $\delta$.
\end{lemma}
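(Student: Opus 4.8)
The plan is to prove the $\epsilon$-net guarantee via the classical \emph{double-sampling} (symmetrization) argument combined with the Sauer--Shelah lemma, which converts the VC-dimension bound into a polynomial bound on the number of distinct ranges restricted to a finite point set. Let $N \subseteq U$ be the random sample of size $m$ (drawn uniformly, with replacement, so that each heavy range has measure $\geq \epsilon$), and let $A$ be the \emph{failure event} that some heavy range $R_i \in R$ with $|R_i| \geq \epsilon|U|$ is missed, i.e. $N \cap R_i = \emptyset$. The goal is to show that $\Pr[A]$ is at most the prescribed failure probability $\delta$ once $m = O(\frac{d}{\epsilon}\log\frac{d}{\epsilon} + \frac{1}{\epsilon}\log\frac{1}{\delta})$.

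First I would introduce a second independent \emph{ghost sample} $N'$ of size $m$ and define the event $B$ that there exists a heavy range missed by $N$ but hit at least $\frac{\epsilon m}{2}$ times by $N'$. The key reduction is $\Pr[B] \geq \frac{1}{2}\Pr[A]$: conditioned on $A$ being witnessed by a particular heavy $R_i$, the number of points of $N'$ landing in $R_i$ is a sum of independent indicators with mean $\geq \epsilon m$, so by a Chernoff bound it exceeds $\frac{\epsilon m}{2}$ with probability at least $\frac{1}{2}$, provided $m$ is large enough (say $\epsilon m \geq 8$). This step is what replaces the potentially unbounded family of heavy ranges by events defined purely through the two finite samples.

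Next I would bound $\Pr[B]$ by conditioning on the combined multiset $N \cup N'$ of $2m$ points and analyzing the uniformly random partition of these points into $N$ and $N'$. By the Sauer--Shelah lemma, the number of distinct ranges obtainable by intersecting $R$ with these $2m$ points is at most $\sum_{i=0}^{d}\binom{2m}{i} = O((2m)^d)$. For any fixed range containing $\ell \geq \frac{\epsilon m}{2}$ of these points, the probability (over the partition) that \emph{all} of them fall into $N'$ is at most $2^{-\ell} \leq 2^{-\epsilon m/2}$, a symmetry argument. A union bound over the $O((2m)^d)$ distinct ranges then gives $\Pr[B] \leq O((2m)^d)\, 2^{-\epsilon m/2}$. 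Finally, forcing $O((2m)^d)\, 2^{-\epsilon m/2} \leq \frac{\delta}{2}$ and solving for $m$ yields the stated sample size, and $\Pr[A] \leq 2\Pr[B] \leq \delta$ completes the argument.

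I expect the main obstacle to be the double-sampling reduction $\Pr[B] \geq \frac{1}{2}\Pr[A]$ together with the accompanying symmetrization: it is precisely this device that lets the infinite (or continuous) collection of heavy ranges be replaced by the $O((2m)^d)$ ranges visible on a finite point set, and it must be set up so that the conditioning on $N \cup N'$ and the random partition are valid and independent of the sample labels. The remaining ingredients---the Sauer--Shelah polynomial bound and the final algebraic choice of $m$---are comparatively routine once the symmetrization is correctly in place.
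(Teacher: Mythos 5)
The paper never proves this lemma: it is imported verbatim from Haussler and Welzl~\cite{haussler1986} as an established result of VC theory, so there is no in-paper argument to compare against. Your proposal is a correct reconstruction of the classical proof from that source --- the ghost-sample reduction $\Pr[B]\geq\tfrac{1}{2}\Pr[A]$ (valid under your proviso $\epsilon m\geq 8$), the symmetrization over the random partition of the combined $2m$ points, the Sauer--Shelah bound of $\sum_{i=0}^{d}\binom{2m}{i}$ on projected ranges, the $2^{-\epsilon m/2}$ estimate, and the concluding algebra are exactly the standard Haussler--Welzl argument, and each step is sound. One minor point worth noting: the paper's statement says the sample ``is an $\epsilon$-net with probability $\delta$,'' which is a slip (consistent with the $\frac{1}{\epsilon}\log\frac{1}{\delta}$ term, the guarantee should read ``with probability at least $1-\delta$''); you correctly treat $\delta$ as the failure probability.
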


We analyze the VC dimension of shortest paths in a graph. Given a graph $G=(V,E)$, a shortest path can be uniquely defined by a source vertex $s$ and a destination vertex $t$. Let $R$ be a set system of unique shortest paths. This system is defined as follows: if any vertex pair $(u, v)$ is contained in two shortest paths $p_{s_1,t_1},p_{s_2,t_2}\in R$, then $u$ and $v$ are linked by the same path in $R$. We introduce an important lemma:

\begin{lemma}
\label{lemma:dimension_usps}
\textbf{USPS \cite{ruan2011}:} There exists a unique shortest path system (USPS) for every graph. \textbf{Dimension \cite{abraham2011,tao2011}:} For a graph $G = (V,E)$, the set system $(V,R)$, where $R$ is a unique shortest path system, has a VC-dimension of $2$. 
\end{lemma}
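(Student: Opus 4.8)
The plan is to establish the two claims in turn: first that a USPS $R$ exists for every graph, and then that the induced set system $(V,R)$ has VC-dimension exactly $2$. The second part is where the real work lies, and it will lean on the \emph{subpath-optimality} guaranteed by the first.

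For existence, I would obtain a USPS by a consistent tie-breaking rule. Perturb the weights (the vertex delays, or equivalently edge weights) by infinitesimals so that all path lengths become pairwise distinct; then for every ordered pair $(s,t)$ the shortest path is unique, and I take $R$ to be exactly this collection of unique shortest paths. It remains to verify the consistency condition in the definition of a USPS: if a pair $(u,v)$ lies on two system paths $p_{s_1,t_1}$ and $p_{s_2,t_2}$, their $u$-$v$ portions must coincide. This is immediate from subpath optimality --- any subpath of a shortest path is itself a shortest path between its endpoints --- together with uniqueness, which forces both $u$-$v$ subpaths to equal the single shortest $u$-$v$ path. Hence $R$ is a USPS.

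For the VC-dimension, the core is the upper bound: no set $W=\{a,b,c\}$ of three vertices can be shattered. Suppose it were. Shattering the full subset $W$ produces a system path $P\in R$ with $P\cap W=\{a,b,c\}$; along $P$ these three vertices occur in a fixed linear order, so one of them --- say $b$, without loss of generality --- lies strictly between the other two. By subpath optimality and the USPS property, the unique system shortest path from $a$ to $c$ is exactly the $a$-$c$ subpath of $P$, and it therefore contains $b$. However, shattering the subset $\{a,c\}$ demands a system path $Q\in R$ with $Q\cap W=\{a,c\}$; its $a$-$c$ subpath is again the unique system shortest $a$-$c$ path, which we just showed must pass through $b$. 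This forces $b\in Q$, contradicting $Q\cap W=\{a,c\}$. Since whichever of $a,b,c$ is the middle vertex gets forced onto the path meant to contain only the outer pair, no $3$-element set is shattered, and the VC-dimension is at most $2$. For tightness, I would exhibit a single graph containing a pair $\{u,v\}$ together with system paths whose intersections with $\{u,v\}$ realize all four subsets $\emptyset,\{u\},\{v\},\{u,v\}$, showing the bound $2$ is attained.

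The main obstacle is the combinatorial step in the preceding paragraph: correctly exploiting the linear order of the marked vertices along a shortest path, and combining it with uniqueness, to argue that the middle vertex is unavoidable on any system path through the two outer vertices. Once that ordering argument is set up cleanly, the contradiction --- and hence both the upper bound and the existence claim --- follow with routine checks.
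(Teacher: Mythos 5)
You should first be aware that the paper never actually proves this lemma: both claims are imported from the literature (USPS existence from Ruan et al., the VC-dimension bound from Abraham et al.\ and Tao et al.), so there is no in-paper proof to compare against; what you have written is a self-contained reconstruction of the standard argument behind those citations, and its core is correct. Your upper-bound argument is exactly the right one: if $\{a,b,c\}$ were shattered, the system path $P$ with $P\cap W=\{a,b,c\}$ has a middle vertex, say $b$, lying on its $a$--$c$ subpath, and the USPS consistency condition forces the path $Q$ realizing $Q\cap W=\{a,c\}$ to share that same $a$--$c$ subpath (you can apply consistency directly to $P$ and $Q$, with no detour through the system's own $a$--$c$ path), so $b\in Q$, a contradiction. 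Two points need tightening. First, in the existence argument, perturbing \emph{vertex} delays cannot make all path lengths pairwise distinct: two $s$--$t$ paths traversing the same vertex set in different orders have identical vertex-weighted lengths under any perturbation. Either use the edge-level variant you parenthetically allow --- add infinitesimal generic weights on edges, since distinct paths always have distinct edge sets --- or add the (nontrivial) observation that with strictly positive delays two tied shortest $s$--$t$ paths can never share a vertex set, because the first point of divergence would create a strictly shorter shortcut. Second, tightness is only promised; to finish it, take a path graph on six vertices $x,a,u,v,b,y$ in that order: the system paths for the pairs $(x,a)$, $(a,u)$, $(v,b)$, $(u,v)$ meet $\{u,v\}$ in $\emptyset$, $\{u\}$, $\{v\}$, $\{u,v\}$ respectively, so a $2$-set is shattered. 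Note finally that for the paper's use of this lemma (Theorem~\ref{thm:vc} via the $\epsilon$-net bound of Lemma~\ref{lemma:net}), only the upper bound on the VC-dimension matters.
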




We next show an approximation for ``long'' paths of delay $\epsilon n$ or higher, where $|V|=n$ and $0 < \epsilon < 1$. 
Our optimization objective for ``long'' paths is $SPD^\epsilon(G)=\Sigma_{s,t\in V, d(s,t)\geq \epsilon n}d(s,t)$. 
Let $R^\epsilon_{opt}(k)$ represent the reduction in $SPD^\epsilon$ by the optimal TS of size $k$ and $R^\epsilon_{rand}(kb)$, the reduction due to $kb$ \emph{randomly chosen} vertices. The relationship between $k$, $b$ and $\epsilon$ is captured in the following theorem.

\begin{thm}
\label{thm:vc}
Given a confidence parameter $\delta$,
$ \frac{R^\epsilon_{opt}(k)}{R^\epsilon_{rand}(kb)}\leq k$ with probability $\delta$, where $kb=(\frac{2}{\epsilon}log\frac{2}{\epsilon}+\frac{1}{\epsilon}log\frac{1}{\delta})$.
\end{thm}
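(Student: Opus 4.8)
The plan is to combine the $\epsilon$-net guarantee of VC theory with two elementary counting bounds on delay reduction in the uniform model: one that lower-bounds the reduction achieved by a random sample, and one that upper-bounds the reduction achievable by any optimal size-$k$ target set. First I would fix the set system. Working in the uniform model, where every vertex has delay $1$, the delay of a shortest path between $s$ and $t$ equals the number of delay-contributing vertices on it. I take $(V,R)$ to be the unique shortest path system of Lemma~\ref{lemma:dimension_usps}, whose VC-dimension is $2$. The key translation is that a \emph{long} pair, one with $d(s,t)\geq \epsilon n$, corresponds to a member $R_i\in R$ whose vertex set satisfies $|R_i|\geq \epsilon n = \epsilon|V|$ --- precisely the sets that an $\epsilon$-net is required to intersect. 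Applying Lemma~\ref{lemma:net} with $d=2$, a uniformly random vertex sample $U'$ of size $kb=\frac{2}{\epsilon}\log\frac{2}{\epsilon}+\frac{1}{\epsilon}\log\frac{1}{\delta}$ is an $\epsilon$-net with probability $\delta$; I condition on this event for the remainder of the argument.

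Next I would establish the lower bound on $R^\epsilon_{rand}(kb)$. Since $U'$ is an $\epsilon$-net, the shortest path of every long pair contains at least one vertex of $U'$. Upgrading that vertex (setting its delay to $0$) reduces the delay of the \emph{already existing} path by at least $1$, so $d(s,t\mid U')\leq d(s,t)-1$ for each long pair. Summing over the $L$ long pairs gives $R^\epsilon_{rand}(kb)\geq L$.

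Then I would prove the matching upper bound on $R^\epsilon_{opt}(k)$, which is the one place that requires a structural observation. For any target set $T$ with $|T|=k$ and any long pair $(s,t)$, let $P^*$ be the resulting optimal path, with $e(P^*)$ edges of which $t(P^*)\leq k$ vertices lie in $T$. In the uniform model $d(s,t\mid T)=e(P^*)-t(P^*)$, while the original hop-distance satisfies $d(s,t)\leq e(P^*)$ because $d(s,t)$ is the minimum possible edge count. Hence the per-pair reduction obeys $d(s,t)-d(s,t\mid T)\leq t(P^*)\leq k$, and summing over the $L$ long pairs yields $R^\epsilon_{opt}(k)\leq kL$. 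Chaining the two bounds gives $R^\epsilon_{opt}(k)\leq kL\leq k\,R^\epsilon_{rand}(kb)$, i.e. the claimed ratio of at most $k$, holding on the $\epsilon$-net event of probability $\delta$.

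I expect the main obstacle to be the per-pair upper bound, namely arguing rigorously that rerouting after $k$ upgrades cannot gain more than $k$ units of delay for any single pair; this hinges on the uniform-model identity relating path delay to edge count and on the fact that at most $k$ vertices are upgraded. A secondary subtlety is the bookkeeping of endpoint conventions (whether the source/destination are counted in $R_i$), which must be handled so that ``delay $\geq \epsilon n$'' genuinely implies ``set size $\geq \epsilon n$'' and the $\epsilon$-net therefore applies without losing constant factors.
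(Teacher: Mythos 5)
Your proposal is correct and takes essentially the same route as the paper's proof: both use the USPS set system of VC-dimension $2$ (Lemma~\ref{lemma:dimension_usps}) together with the $\epsilon$-net guarantee (Lemma~\ref{lemma:net}) to show the random sample of size $kb$ hits every long shortest path, giving $R^\epsilon_{rand}(kb)\geq L$, and both cap the optimal reduction at $k$ per long pair to get $R^\epsilon_{opt}(k)\leq kL$. Your hop-count argument for the per-pair bound (rerouting after upgrading $k$ vertices cannot gain more than $k$) is just a more explicit rendering of the paper's ``best case'' assertion that all $k$ target vertices lie on each long path, so the two proofs coincide in substance.
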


\begin{proof}
We develop an upper bound for $R^\epsilon_{opt}(k)$. In the best case scenario, all $k$ nodes of the TS are present in a set $S$ of all different shortest paths. So, the maximum reduction is $k|S|$ (as every vertex has a delay of $1$). From Lemma $\ref{lemma:dimension_usps}$, the USPS of a graph always exists and its VC-dimension is $2$. A random sample of $b'(\frac{2}{\epsilon}log\frac{2}{\epsilon}+\frac{1}{\epsilon}log\frac{1}{\delta})$ vertices will be an $\epsilon$-net with probability $\delta$ by Lemma $\ref{lemma:net}$ ($b'$ is the constant in the asymptotic bound in Lemma $\ref{lemma:net}$ and $1/b'= b$). In other words, at least one vertex from every shortest path (of length $\geq \epsilon n$) of the USPS will be included in the sample. If these vertices are selected as a target set, the resulting reduction will be $|S^*|$, where $S^*$ is the set of shortest paths of length $\geq \epsilon n$ in USPS. It is clear that $|S^*|\geq|S|$. So, $\frac{R^\epsilon_{opt}(k)}{R^\epsilon_{rand}(kb)}=\frac{k|S|}{|S^*|}\leq k$.
\end{proof}

The theorem shows that the problem of minimizing the delay of long paths under the uniform model has an approximation of $k$. As the bound on path length $\epsilon n$ increases, we need a smaller number of samples to ``cover'' every path. Through experiments, we compare our proposed algorithm PCS (Algorithm \ref{algo:PCS}) against the theoretical upper bound on the restricted metric involving only long paths.

  
  
 \vspace{3mm} 
\textbf{Proof of Theorem \ref{thm:pathcount}}
\begin{proof}
There are three different cases to consider based on the kind of shortest paths. First, for shortest paths where $v$ is start vertex, updating its delay results in a reduction of $n-1$. Second, for shortest paths that go through $v$ (but $v$ is not the start vertex), the reduction is $\zeta^v$. Finally, for shortest paths where $s$ is not on the path, the change in its delay does not result in any reduction (due to equal delays). Therefore, $RS(v|S)=\zeta^v+(n-1)$.
\end{proof}

\vspace{3mm}
\textbf{Worst Case of Algorithm 1 (GR) }
\newline
Next, we shortly discuss the worst case behavior of GR. GR can produce arbitrarily bad result under the general model. 
One can show that $\frac{|SP_{GR}-SP_{opt}|}{1+SP_{opt}}=O(n^2)$, where $n$ is the number of vertices in the network and $SP_{opt}$ and $SP_{GR}$ are the corresponding SPD of an optimal algorithm and GR. We show this in a small example and budget of $2$ in Fig.~\ref{fig:greedycounterrandom}. The boxes in the figure represent cliques of $n'$ nodes, where $2n'$ is close to $n$, $2n'>>5$. We ignore the shortest paths among vertices oustside these cliques. While the optimal target set is comprised of the green vertices, GR might choose the red nodes and achieve the described result. We have also found evidence of a non optimal performance of Greedy in case of the uniform model. That evidence tells us that GR can produce a constant factor of the optimal result for a low budget.


\begin{figure} [t!]
\centering
\begin{tabular}{c}
\includegraphics[ height=1.5in,width=2.8in,angle=0]{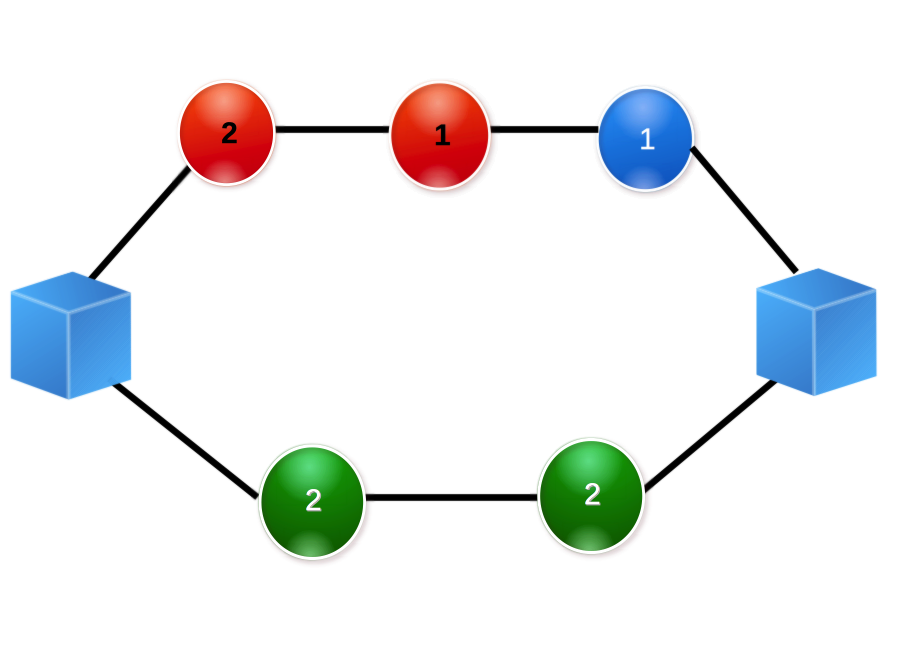}

\end{tabular}
\caption{ Non optimal performance of Greedy (GR)}
\label{fig:greedycounterrandom}

\end{figure}


\vspace{3mm}
\textbf{Algorithm 3 (PCS)} \\
Algorithm \ref{algo:PCS} (PCS) computes TS based on estimates of number of shortest paths through each vertex. The approximation error $\epsilon$ bounds the difference between reduction by PCS and Greedy (GR) in each iteration. In each of the $k$ iterations, PCS first samples $p$ pairs of nodes from the population of all pairs (the proven approximation still holds when the samples are obtained before the iteration starts as in Alg. \ref{algo:GS}). It computes the shortest distances between the vertices of each pair in the sample and thus finds an approximate measure of number of shortest paths through each vertex. If both the vertices of a pair have delay $1$, they both are present in edited graph (step 20). The edited graph is obtained by deleting the vertex with delay $0$ and by adding all the edges between its neighbours. Computing BFS explores the vertices on all possible shortest paths. If one of them has delay $0$, we compute BFS from the other vertex to its gateway vertex (For each such vertex $v$ added to TS, we maintain a list of vertices, called $gateway$, where $v.gateway=\{u\ |\ d(v,u) = 0, l(u) = 1\}$). If both of them have delay $0$, we compute BFS between each of the gateway vertices of them. We choose the pair(s) of minimum distance. The shortest paths between them explore the desired vertices (with delay $1$). The vertex with the maximum $\zeta$ is chosen in each iteration. Next we explain that the overall complexity is $O(kp(m+n))$.

\vspace{3mm}
\textbf{Running Time of Algorithm 3 (PCS)} \\ 
For steps $5,9,13$, PCS performs BFS. The probability of picking vertices with delay $0$ is $\frac{x^2}{n^2}$, where $x$ is number of vertices in the current TS. So, the expected time complexity for one of these steps is $p((1-\frac{x^2}{n^2})(m+n)+ g\frac{x^2}{n^2}(m+n))=p(m+n)(1+(g-1)\frac{x^2}{n^2})$ where $g\leq xd_m$ is the maximum size of $gateway$ list. $d_m$ is maximum degree among the degree of the vertices in TS. Step $20$ can take $O(d_m'^2)$, where $d_m'$ is maximum degree of vertex among the vertex and its neighbors. Step $21$ takes $O(x^2d_m^2)$ as it accumulates vertices from $gateway$ of the neighbors. So, the running time for one step when the TS has $x$ vertices is $O(p(m+n)(1+\frac{x^3d_m^2}{n^2})+d_m'^2+x^2d_m^2)$. As $x (x\leq k)$ is small, and the running times of step $20$ and $21$ are not tight, the running time is dominated by $p(m+n)$. So, the overall time complexity is $O(kp(m+n))$.


\begin{figure*}[t]
    \centering
    \subfloat[Synth-2K-Uniform]{\includegraphics[width=0.25\textwidth]{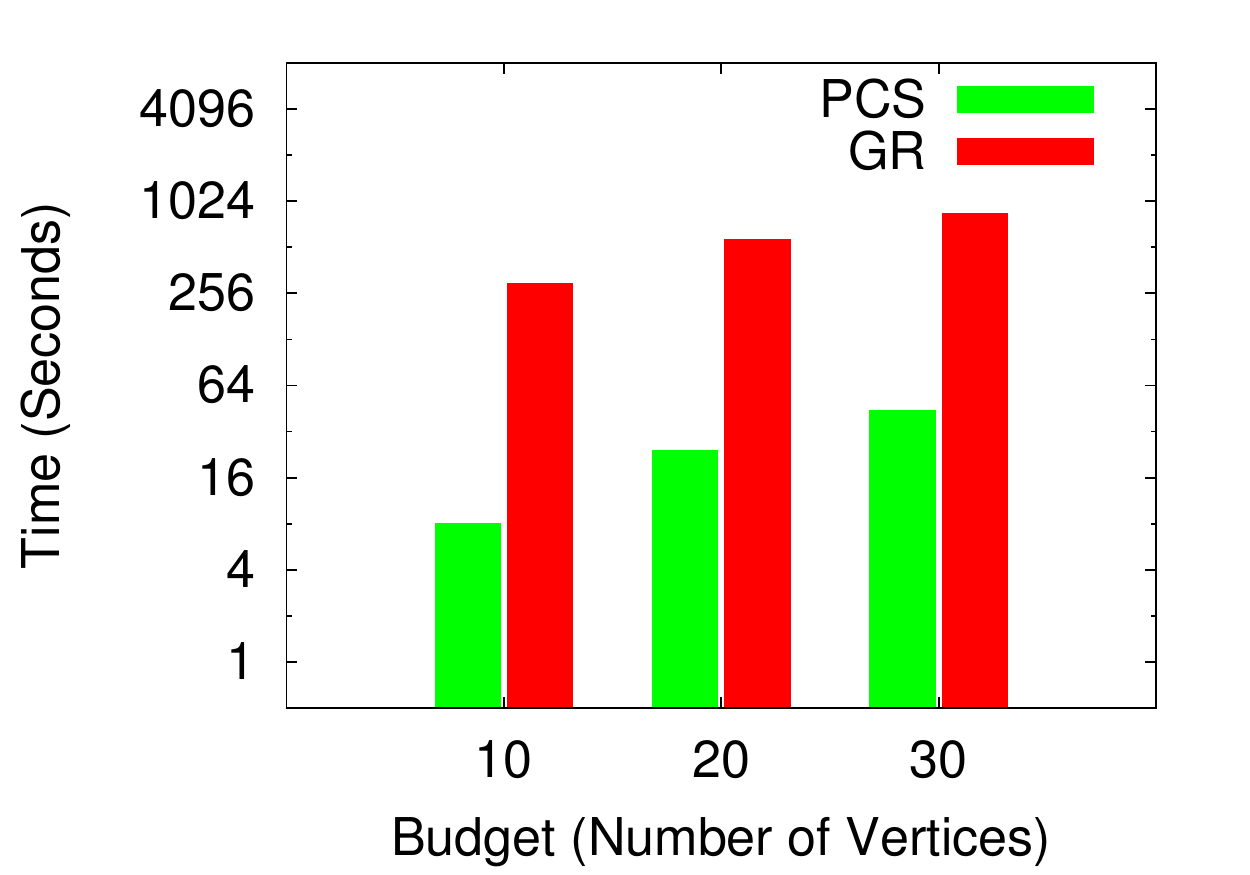}\label{fig:PCSGR_effi_2000}}
    \subfloat[Synth-2K-Uniform]{\includegraphics[width=0.25\textwidth]{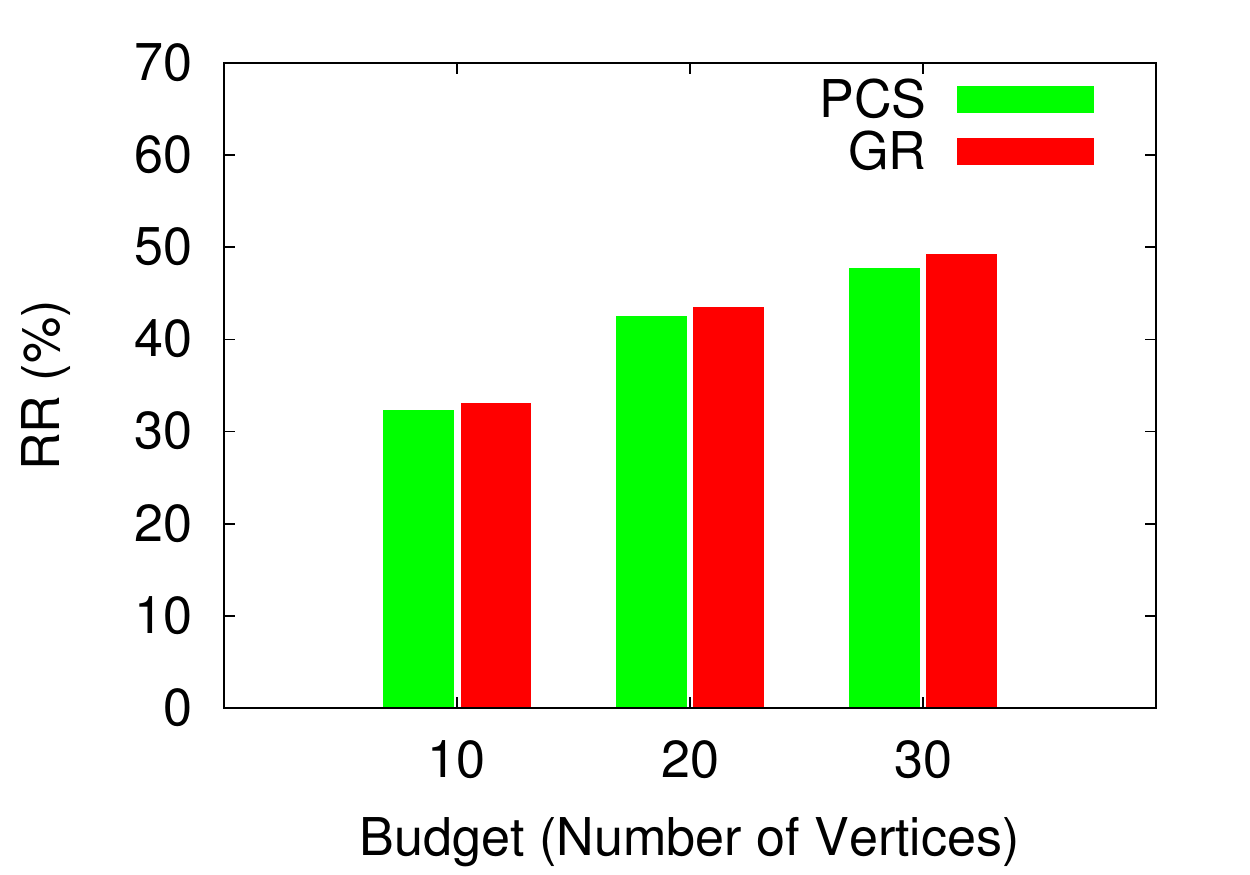}\label{fig:PCSGR_qual_2000}}
    \subfloat[Synth-2K]{\includegraphics[width=0.25\textwidth]{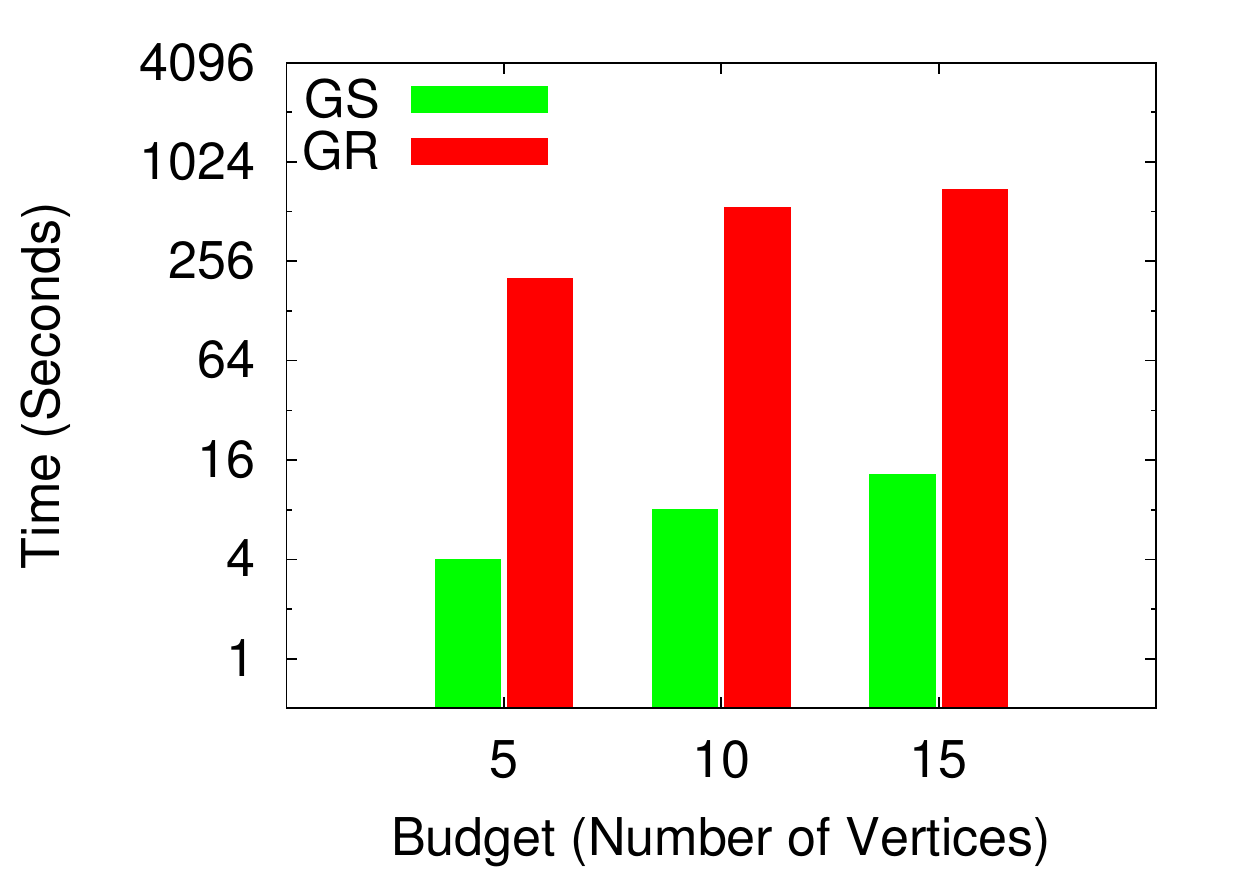}\label{fig:GSGR_effi_2000_500-1000}}
    \subfloat[Synth-2K]{\includegraphics[width=0.25\textwidth]{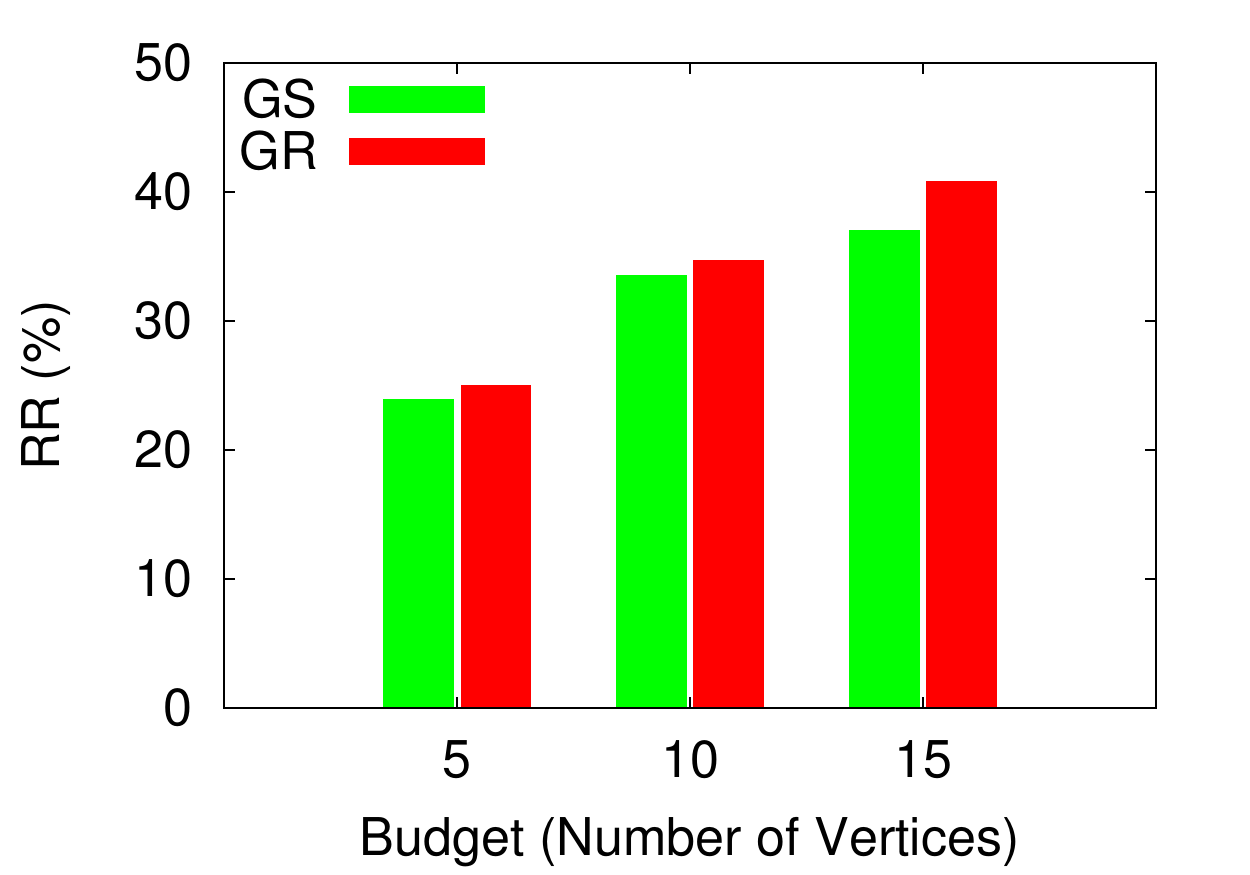}\label{fig:GSGR_qual_2000_500-1000}}
    \caption{\textbf{Uniform model:} (a-b) Execution time and relative reduction comparison between Greedy (GR) and Path Counting (PCS) for synthetic data; \textbf{General model:} (c-d) Greedy (GR) and sampling-based Greedy (GS) for synthetic data (random delays in $[500,1000]$) .\label{fig:GSGR_comparison2}}
\vsb \end{figure*}

\vspace{3mm}
\textbf{Experiments (Synthetic dataset): quality of sampling compared to Greedy}

We also use synthetic small-world networks generated using the Barabasi-Albert (BA) model \cite{barabasi1999}. The assignment of synthetic delays varies and is explained with more details in each experiment.

We report the number of samples in our sampling schemes as $c*logn$, where the sample constant $c$ is related to the expected error $\epsilon$ in Thms.~\ref{thm:approxgeneral} and \ref{thm:approxpathcount}. Unless stated otherwise, we use $c=10$. 

First, we compare our sampling schemes GS and PCS with Greedy (GR) in order to evaluate the effect on quality due to sampling which we theoretically analyze in Thms. \ref{thm:approxgeneral} and \ref{thm:approxpathcount}. To enable the comparison, we use relatively small synthetic and real datasets due to the limited scalability of GR. 
Networks used for evaluation include Traffic and a BA synthetic network with $2000$ vertices and rate of growth: $5$ edges per node. 
The quality of the compared algorithms is quantified as the Relative Reduction (RR) of SPD, while efficiency---in terms of wall-clock time. We use $3.5logn$ samples for GS and PCS in these experiments.

In all experiments, our sampling schemes achieve similar quality as that of Greedy (GR), while taking close to two orders of magnitude less time. In the uniform model, the difference in quality between our sampling scheme PCS and GR does not exceed $1\%$ (Fig.~\ref{fig:PCSGR_qual_2000}), while PCS takes only $2\%$ of the time taken by GR (Fig.~\ref{fig:PCSGR_effi_2000}). 
This trend persists in the case of the general delay model for which we employ our sampling-based Greedy (GS). In synthetic networks with uniformly chosen random delay,
the quality of GS is within $3\%$ of that of GR, while its running time is only $1.5\%$ of the time taken by GR (Fig. \ref{fig:GSGR_effi_2000_500-1000},\ref{fig:GSGR_qual_2000_500-1000}).

\begin{figure*}[t]
\vspace{-3mm}
    \centering
     \subfloat[kb = 100]{\includegraphics[width=0.21\textwidth]{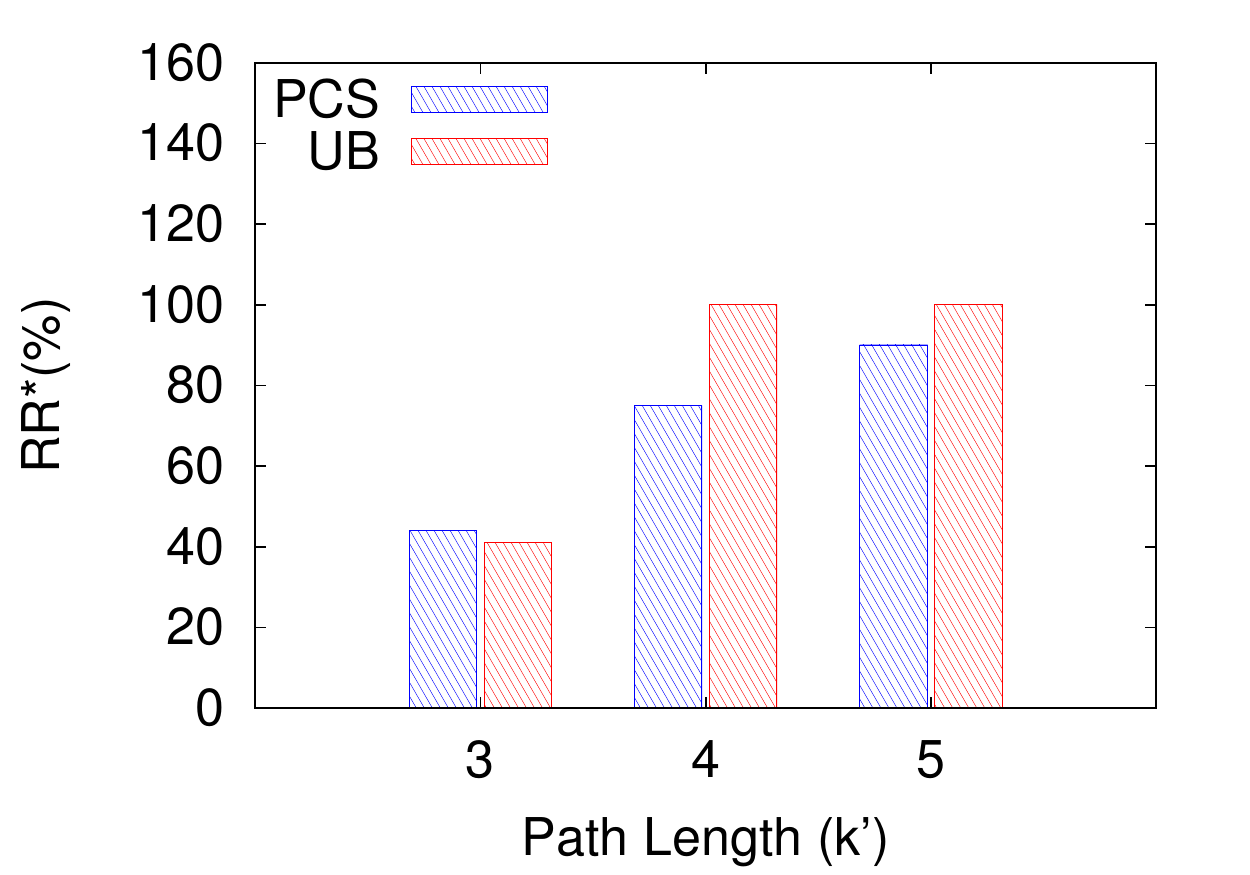}\hspace{-0.2cm}\label{fig:PCS_VC100k}}
    \subfloat[kb = 200]{\includegraphics[width=0.21\textwidth]{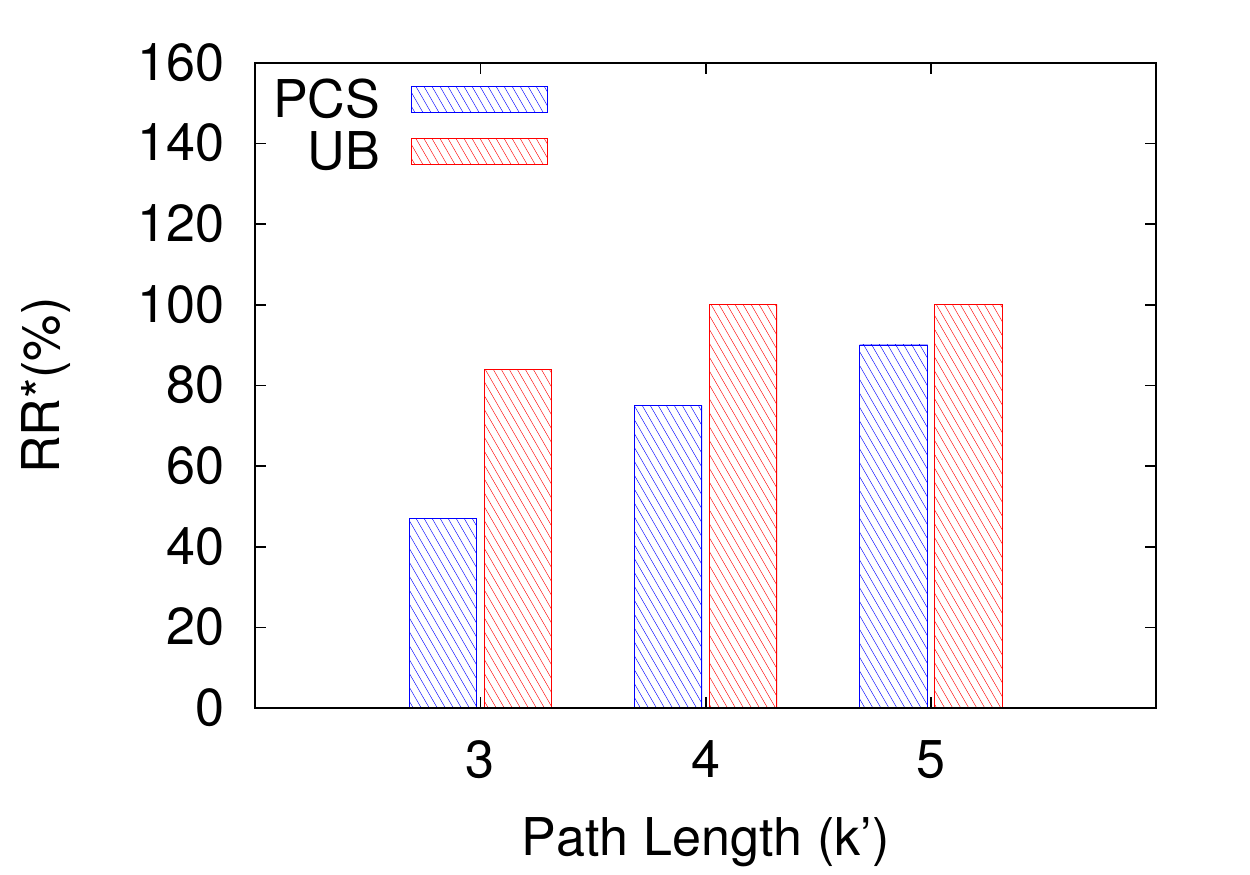}\hspace{-0.2cm}\label{fig:PCS_VC200k}}
    \subfloat[Sum of "all" vs "long"]{\includegraphics[width=0.21\textwidth]{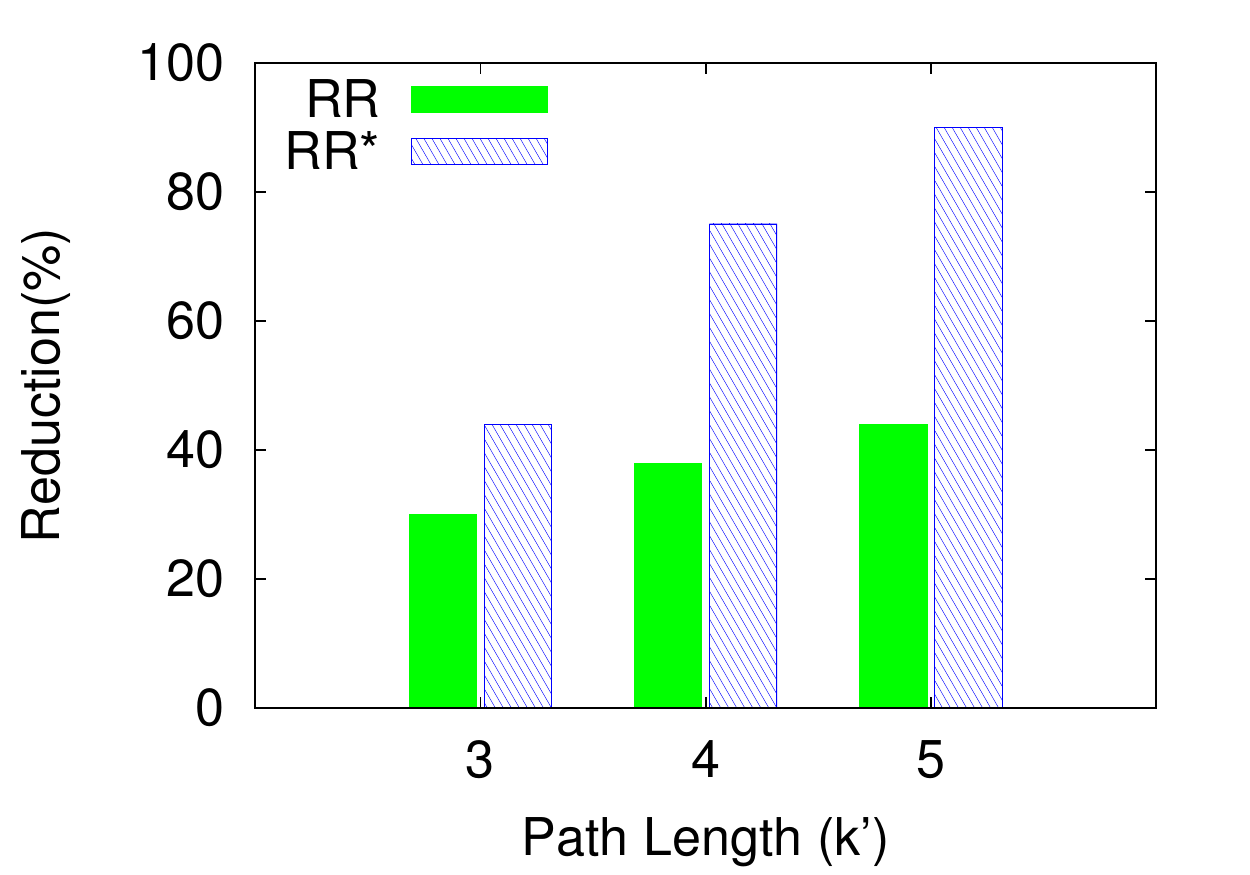}\hspace{-0.2cm}\label{fig:PCS_VC100}}
    \caption{(a-b): Quality of PCS in comparison to the theoretical UB. The horizontal axis represents the budget ($k$) for the restricted metric assuming $k=k'=\epsilon n$. (c): RR and RR* for PCS for the sum of all paths metric and the restricted metric (sum of paths with length $\geq k'$) respectively.
    \label{fig:PCS_others}}
\end{figure*}


\begin{figure*}[t]
    \centering
    \subfloat[Scalability]{\includegraphics[width=0.2\textwidth]{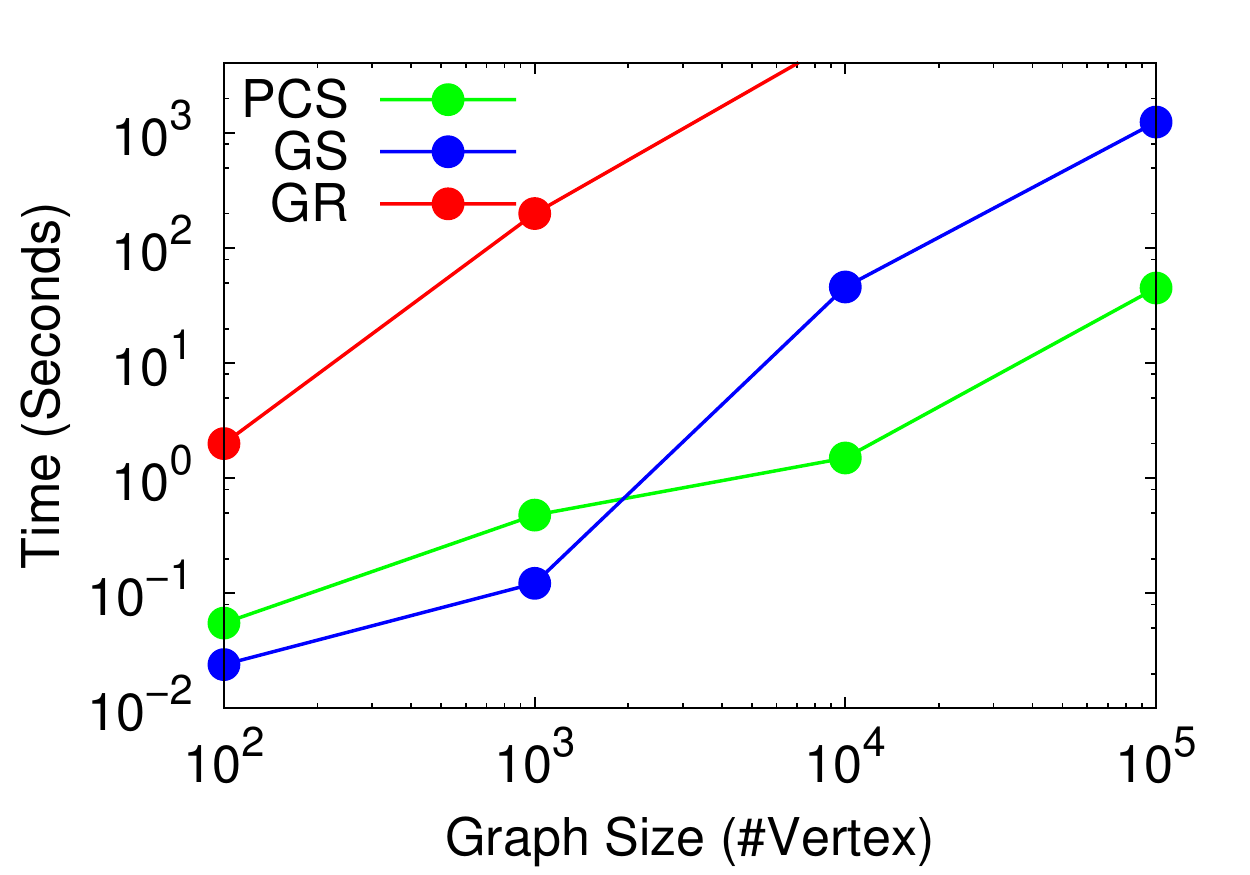}\label{fig:PCS_GS_Scale}}
    \subfloat[PCS: fixed budget]{\includegraphics[width=0.2\textwidth]{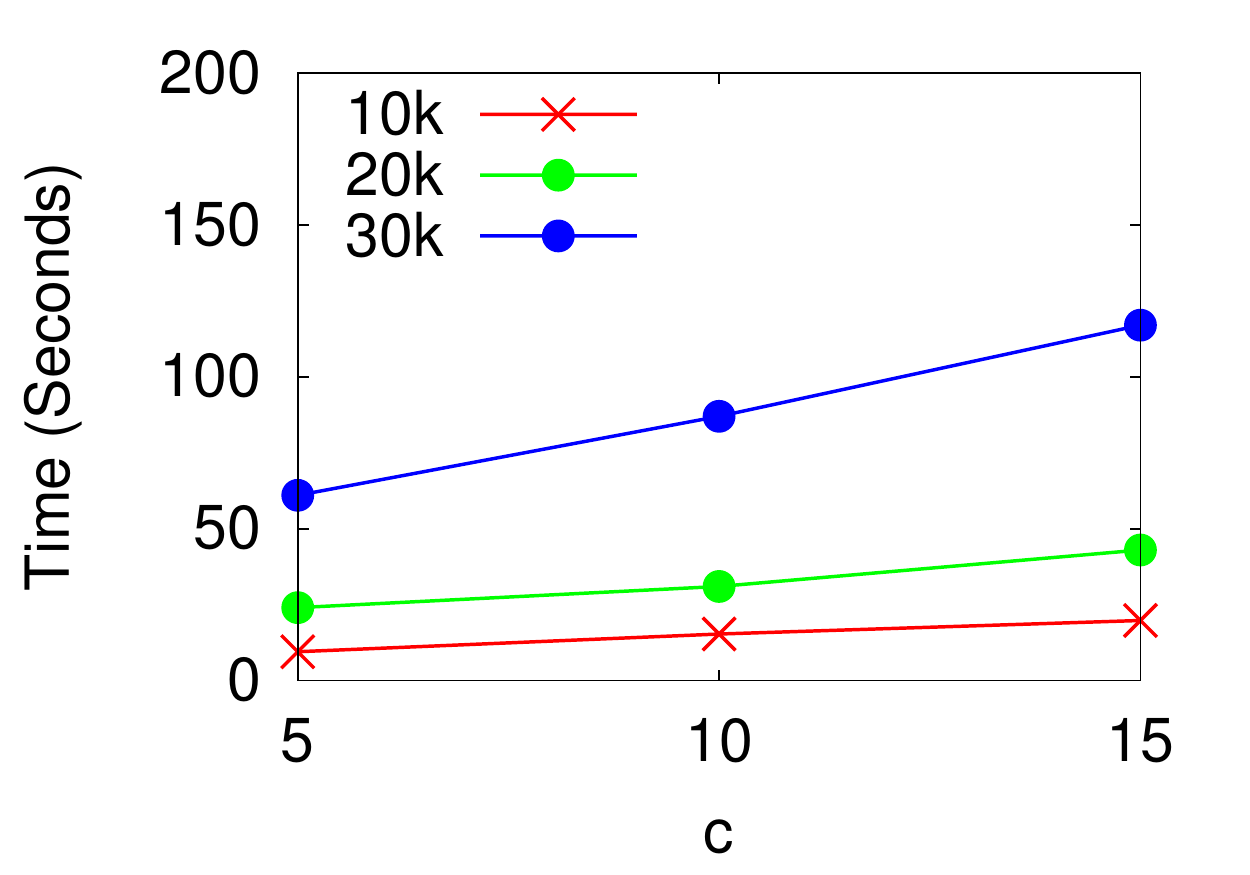}\label{fig:PCS_time_sample}}
    \subfloat[PCS: fixed sample]{\includegraphics[width=0.2\textwidth]{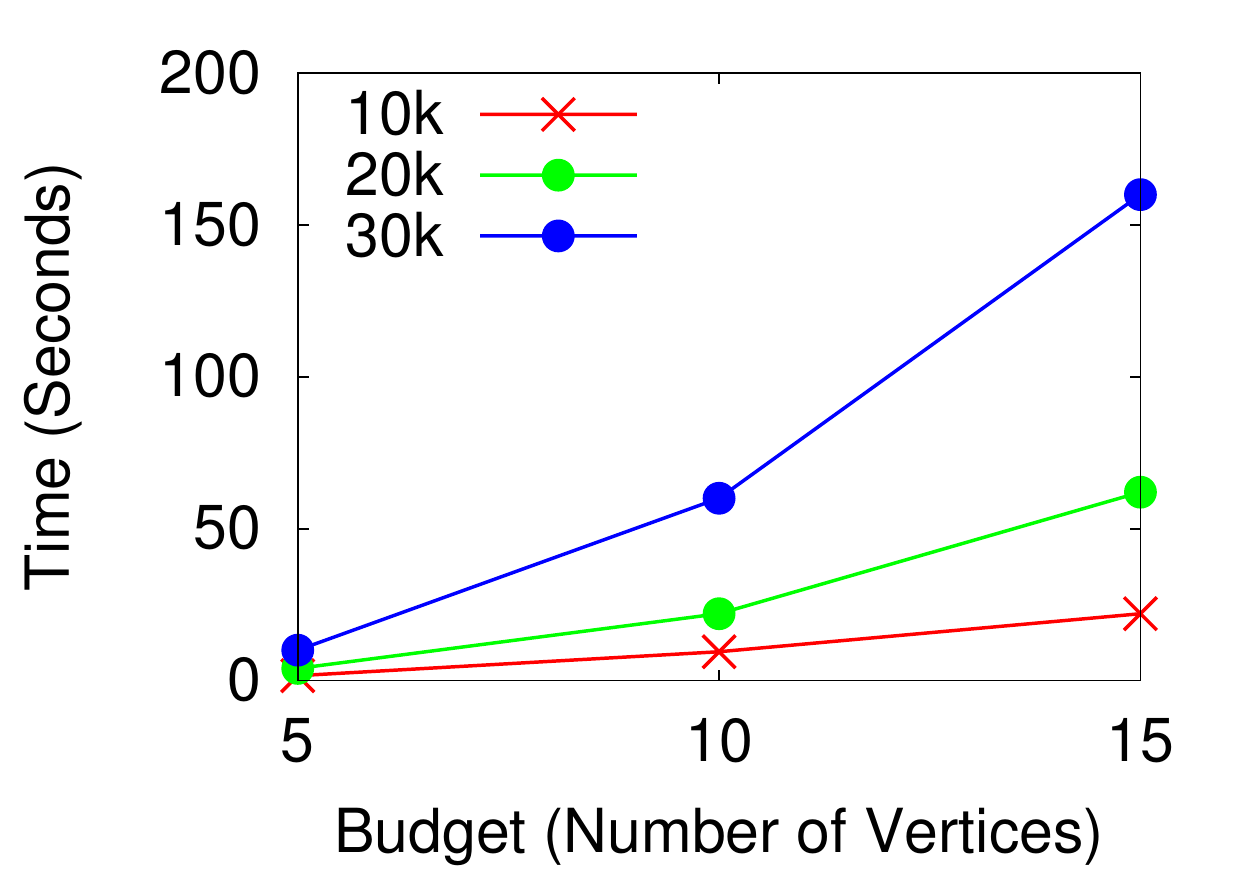}\label{fig:PCS_time_budget}}
    \subfloat[GS: fixed budget]{\includegraphics[width=0.2\textwidth]{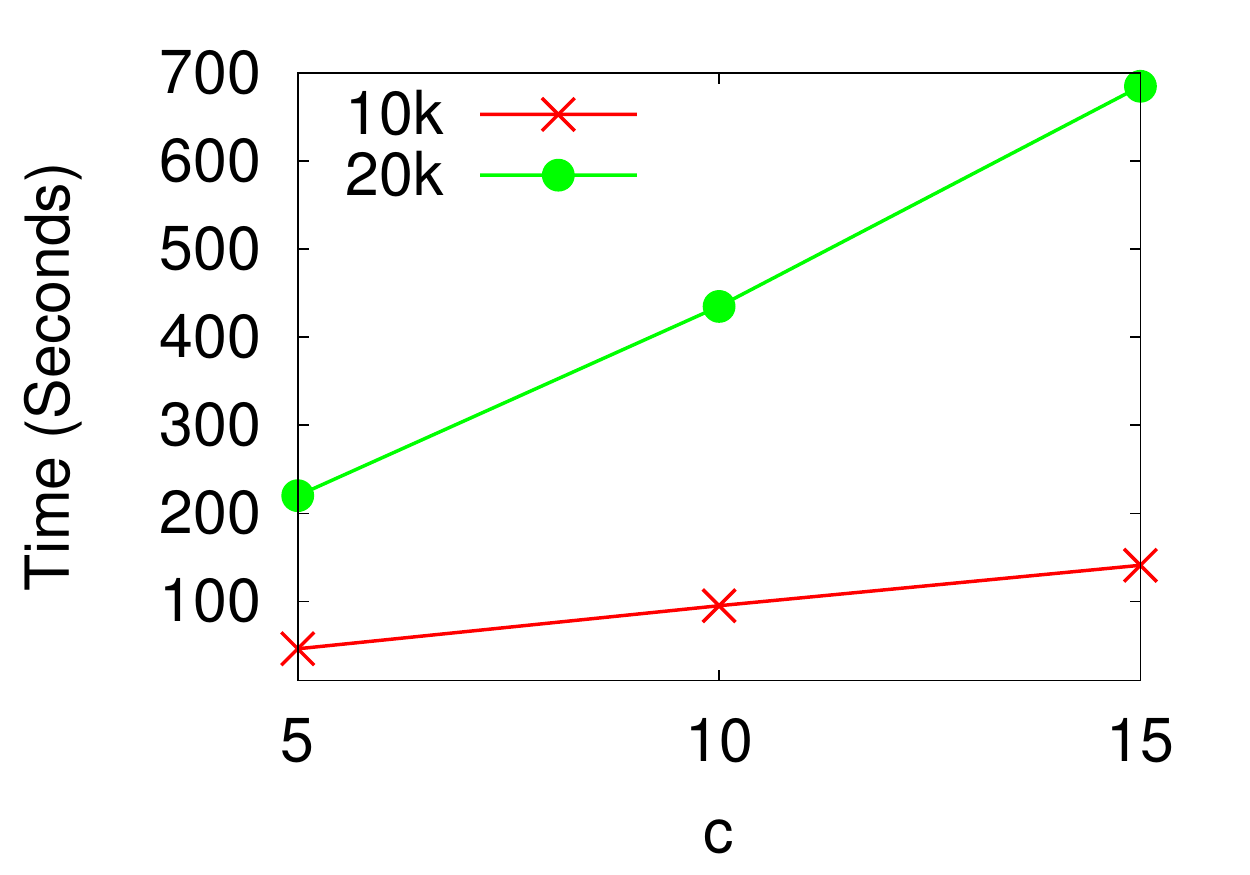}\label{fig:GS_time_sample}}
    \subfloat[GS: fixed sample]{\includegraphics[width=0.2\textwidth]{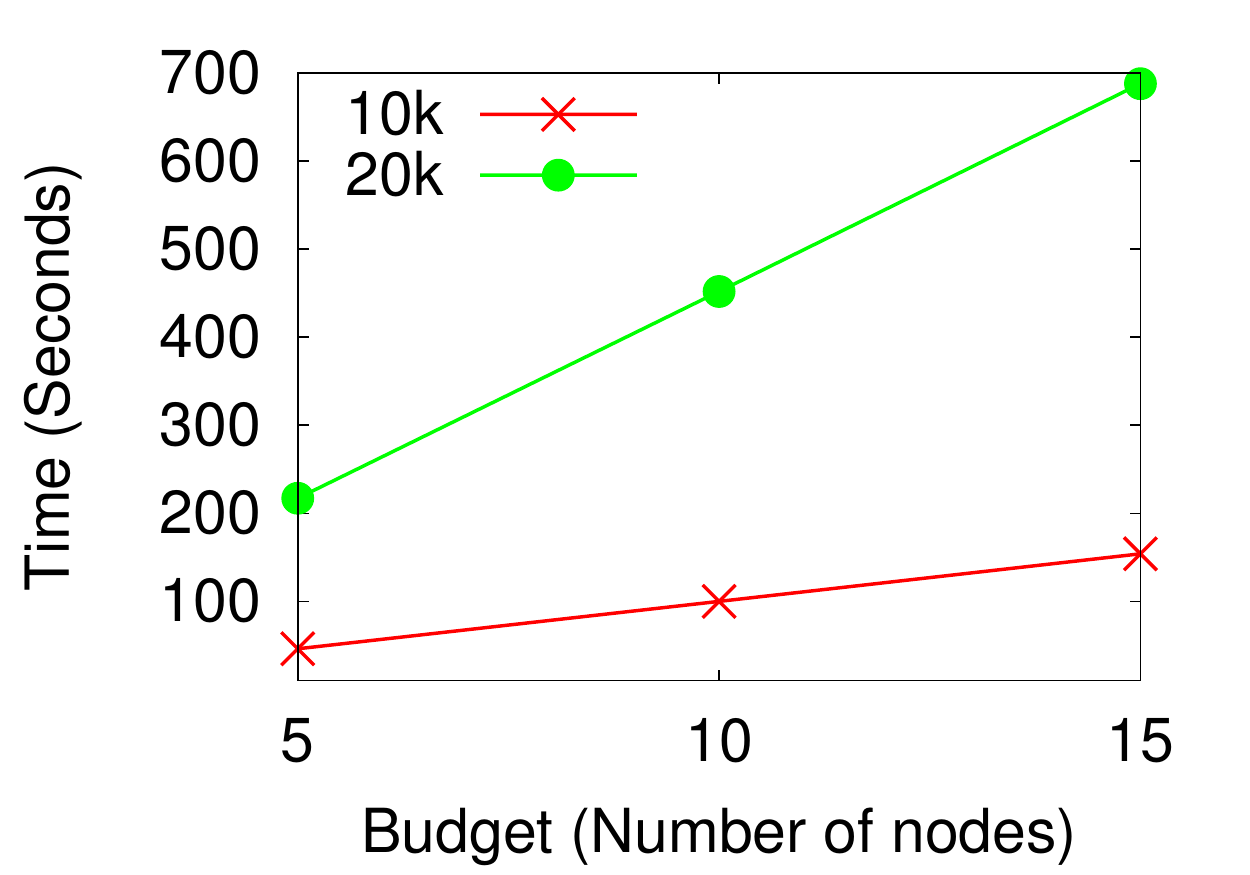}\label{fig:GS_time_budget}}
   \caption{Computation times of PCS and GS on synthetic graphs of increasing size and for varying number of samples and budget. Figures \ref{fig:PCS_time_sample} and \ref{fig:PCS_time_budget} show the behavior for PCS and Figures \ref{fig:GS_time_sample} and \ref{fig:GS_time_budget} show the same for GS. \label{fig:PCS_GS_others}}
\vsc \end{figure*}

\vspace{3mm}

\textbf{Experiments: Comparison of PCS and Upper Bound from Theorem ~\ref{thm:vc} on ``Long'' Paths} 

Theorem~\ref{thm:vc} compares the performance of a random algorithm against the optimal algorithm for the restricted metric of sum of ``long'' paths (length $\geq k' =\epsilon n$). Based on the theorem, $k*R^\epsilon_{rand}(kb)\geq R^\epsilon_{opt}(k)$. We can, thus, compare PCS($k$) (PCS with budget $k$) against $k*R^\epsilon_{rand}(kb)$ as a proxy for comparing against $R^\epsilon_{opt}(k)$. Since the constant $b$ is not known, we vary it to evaluate the quality of PCS($k$). 


We experiment with two different settings for $kb$ ($100, 200$) in a $2,000$-vertex subgraph of the DBLP data. We assume that the path length threshold and the budget are the same (i.e., $k' = k$) and vary over the range 3--5. As in earlier plots, we compute relative reductions for the methods. Since $k.R^\epsilon_{rand}(kb)$ is only an estimate, if this quantity exceeds the original value of the metric, we set it to the original value and its relative reduction to 100\%. For simplicity, we refer to this quantity as $UB$.

RR* denotes relative reduction in the sum of ``long'' paths. Figs.~\ref{fig:PCS_VC100k} and~\ref{fig:PCS_VC200k} present $UB$ and the relative reduction (RR*) for $PCS$. RR* by PCS is within $50\%$ of UB. (As $kb$ is unknown, PCS may in fact occasionally produce a higher RR* than UB, Fig. \ref{fig:PCS_VC100k} for $k'=3$). Increasing the length threshold ($k'$) reduces the difference between UB and PCS.  

Finally, we explore the reduction by PCS in the sum of long versus short paths. Fig.~\ref{fig:PCS_VC100} compares the reduction: RR for the actual metric (sum of all paths) and RR* for the restricted metric (sum of paths of length $\geq k'$). There is a higher chance for long paths to contain upgraded vertices. As expected, the figure shows higher reduction when the metric includes only longer paths.

\vspace{2mm}
\textbf{Experiments: Effect of parameters} 
\newline
\begin{figure}[t]
\vspace{-3mm}
    \centering
    \subfloat[RR]{\includegraphics[width=0.25\textwidth]{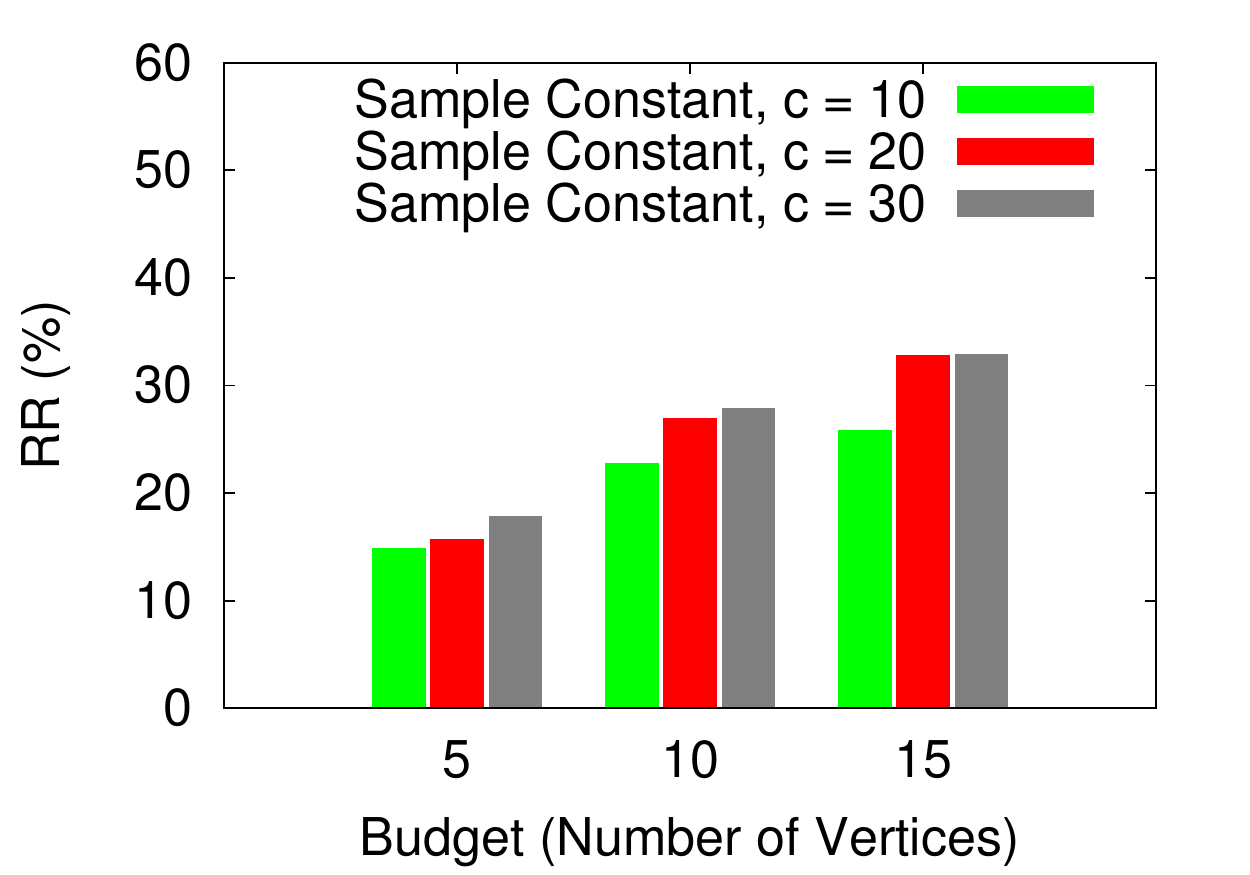}\hspace{-0.2cm}\label{fig:twitter_celeb_qual_sample}}
\subfloat[Time]{\includegraphics[width=0.25\textwidth]{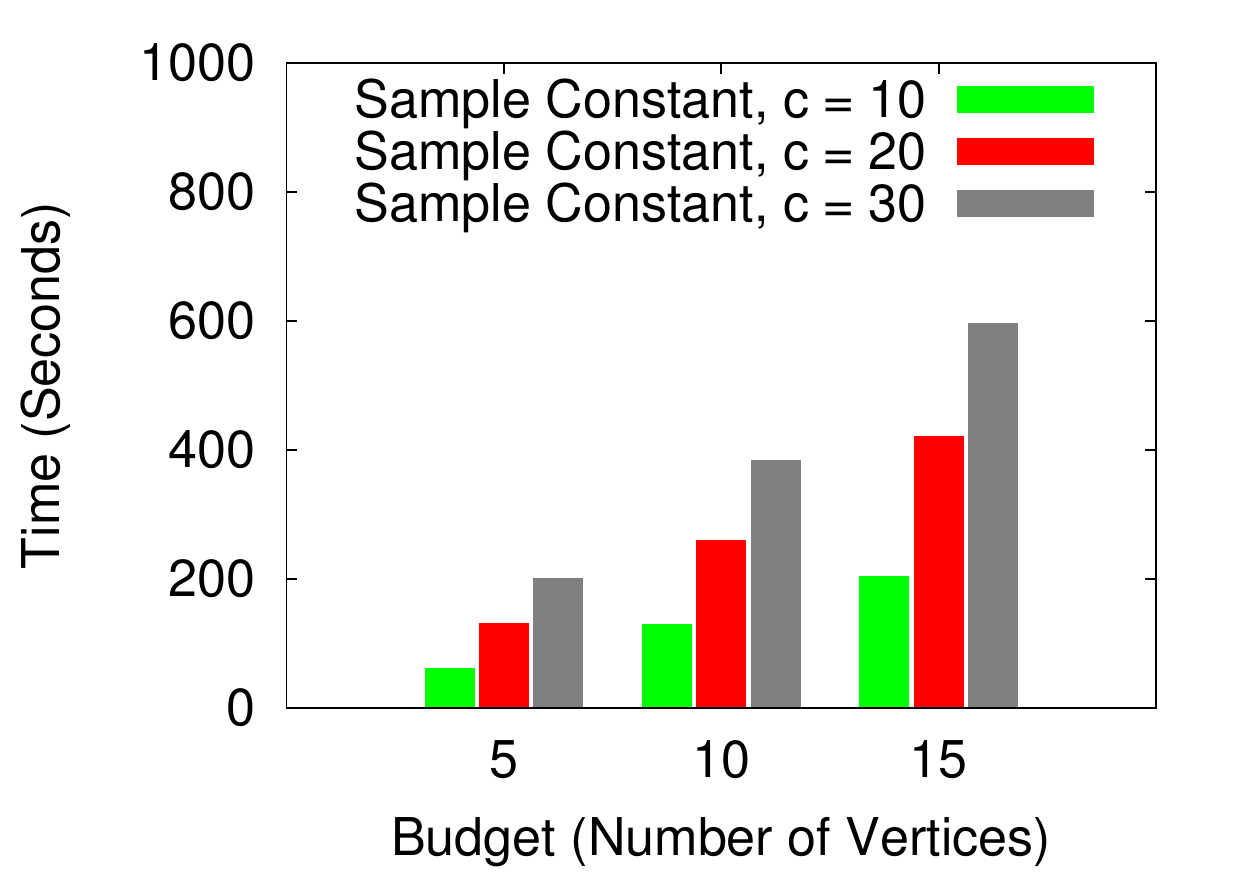}\hspace{-0.2cm}\label{fig:twitter_celeb_time_sample}}
\vspace{-2mm}
    \caption{ (a-b) Quality of GS as a function of the number of samples on Twitter-Celeb.\label{fig:PCS_others_1}}
    \vspace{-3mm}
\end{figure}

The most important parameter for our sampling schemes is the number of samples. In Figs.~\ref{fig:twitter_celeb_qual_sample} and~\ref{fig:twitter_celeb_time_sample} we present the variation in quality and performance of GS with the increase of number of samples on the Twitter-Celeb data. The running time of GS grows linearly with the sample constant $c$, while the quality increases and then saturates confirming our main premise that not all SP need to be observed to make a good-quality selection in our design problem.

We also study the effect of different inputs on our algorithms on synthetic and real networks and their scalability with the number of samples and for increasing budget.
The delays in the general model are randomly distributed in $[500,1000]$. Fig.~\ref{fig:PCS_GS_Scale} shows the scalability of our methods for increasing network size (Barabasi graphs, growth parameter $3$). As expected, PCS scales better than GS (on networks with $0.1$ million vertices PCS is $30$ times faster), and both scale significantly better than the non-sampling alternative GR. This experimentally confirms the theoretical running times of Alg. 1,2 and 3. 

Figs. \ref{fig:PCS_time_sample} and \ref{fig:GS_time_sample} present the running time for increasing $c$ and budget=10 for PCS and GS respectively. Note that the number of samples used by our techniques is controlled by $c$: \#samples=$c*log(n)$. As expected, GS and PCS scale linearly with $c$. The same behaviour persists for increasing budget in Figs. \ref{fig:PCS_time_budget} and \ref{fig:GS_time_budget} ($c$ =$15$). These results also confirm the expected theoretical running time behavior.


\end{document}